
\documentclass{article}
\usepackage[english]{babel}
\usepackage[utf8]{inputenc}
\usepackage{johd}
\usepackage{color}
\usepackage{bm}
\usepackage{multicol}
\usepackage{amsmath,amssymb,amsthm}
\usepackage{subfigure}
\usepackage{multirow}
\usepackage{float}
\usepackage[a4paper,margin=2.1cm]{geometry}
\setlength{\emergencystretch}{2em}

\newtheorem{theor}{Theorem}
\newtheorem{lemma}{Lemma}
\newtheorem{corollary}{Corollary}

\title{Bivariate generalized autoregressive models for forecasting bivariate non-Gaussian times series}

\author{Tatiane Fontana Ribeiro$^{1,2}$$^{*}$, Airlane P. Alencar$^{1}$, 
	 Fábio M. Bayer$^{2,3,4}$ \\
        \small $^{1}$Institute of Mathematics and Statistics, University of São Paulo, São Paulo, SP, Brazil \\
        \small $^{2}$Graduate Program in Mathematics, Federal University of Santa Maria, Santa Maria, RS, Brazil \\
        \small $^{3}$Department of Mathematics and Natural Sciences, Blekinge Institute of Technology, Karlskrona, Blekinge, Sweden \\
        \small $^{4}$ Department of Statistics and LACESM, Federal University of Santa Maria, Santa Maria, RS, Brazil \\
        \small $^{*}$Corresponding author: Tatiane Fontana Ribeiro; \tt{tatianefr@alumni.usp.br}
}

\date{} 

\begin{document}

\maketitle

\begin{abstract} 
\noindent 
This paper introduces a novel approach, the bivariate generalized autoregressive (BGAR) model, for modeling and forecasting bivariate time series data. The BGAR model generalizes the bivariate vector autoregressive (VAR) models by allowing data that does not necessarily follow a normal distribution.
We consider a random vector of two time series and assume each belongs to the canonical exponential family, similarly to the univariate generalized autoregressive moving average (GARMA) model. We include autoregressive terms 
of one series into the dynamical structure of the other and vice versa. 
The model parameters are estimated using the conditional maximum likelihood (CML) method. We provide general closed-form expressions for the conditional score vector and conditional Fisher information matrix, encompassing all canonical exponential family distributions. We develop asymptotic confidence intervals and hypothesis tests. We discuss techniques for model selection, residual diagnostic analysis, and forecasting. We carry out Monte Carlo simulation studies to evaluate the performance of the finite sample CML inferences, including point and interval estimation. An application to real data analyzes the number of leptospirosis cases on hospitalizations due to leptospirosis in São Paulo state, Brazil. Competing models such as GARMA, autoregressive integrated moving average (ARIMA), and VAR models are considered for comparison purposes. The new model outperforms the competing models by providing more accurate out-of-sample forecasting and allowing quantification of the lagged effect of the case count series on hospitalizations due to leptospirosis.
  \end{abstract}

\noindent\keywords{GARMA models; VAR models; exponential family; count time series. }\\

\noindent \textbf{Mathematics Subject Classification (MSC):} 62M09, 62M10, 62P10, 62F12, 62H12.
 

\section{Introduction}

Time series analysis plays an essential role in understanding the behavior and dynamics of evolving data over time. Decision makers often need economic, educational, epidemiological, public health variables predictions, and so on~\citep{lutkepohl2005new,shumway2017}. 
Traditional models such as autoregressive moving average (ARMA), autoregressive integrated moving average (ARIMA)~\citep{box2015time}, and its extensions~\citep{granger1980introduction,brockwell2002introduction} are commonly used for univariate time series analysis. These models are usually based on the assumption of linear combinations of variables distributed according to the Gaussian distribution~\citep{chuang2007order}. 

Since the Gaussianity assumption can be unsuitable for many real-world time series data applications, several other models that relax this assumption have been proposed.
\cite{benjamin2003} introduced the generalized autoregressive moving average (GARMA) models that extend the classical ARMA framework to accommodate non-Gaussian data and model a function of the expected response. 
They have been applied in various fields, mainly epidemiology, and public health; see~\cite{dugas2013influenza,talamantes2007statistical,albarracin2018cusum, esparza2018effect}.
Recently, it was proposed various like-GARMA models assuming other distributions for the dependent variable to analyze count data~\citep{melo2020conway,sales2022berg}, double-bounded data
\citetext{see \citealp{rocha2009beta}; 
\citealp{bayer2017kumaraswamy};
 \citealp{bayer2023inflated}; 
\citealp{pumi2023unit}; \\
\citealp{ribeiro2024forecasting}},
binary data \citep{angelo2019bernoulli},
quantized amplitude data and bounded count data \citep{palm2021signal},
and positive time series
\citetext{\citealp{bayer2020novel}; \citealp{almeida2021arma}}.

In addition to univariate time series analysis, understanding the relationships between two or more variables over time is essential for a more comprehensive and accurate analysis. 
In many situations, the value of a variable is related to its predecessors in time and also depends on past
values of other variables. Multiple time series analysis allows investigating several questions regarding the structure of the relationships between the variables involved and obtaining more accurate forecasts~\citep{lutkepohl2005new}. Typically, vector autoregressive (VAR) models are used for the analysis of multivariate time series. 
In economics, these models gained popularity due to Sims' work.~\cite{sims1980macroeconomics, sims1981autoregressive} pioneered innovation accounting and impulse response analysis within VAR models, introducing an alternative to traditional macroeconomic analyses. The framework of VAR models
is one of the most successful and ﬂexible to model multivariate time series~\citep{zivot2006vector}, and it has been applied in many fields.
\cite{shahin2014vector} used the VAR model to forecast monthly temperature, humidity, and cloud coverage of Rajshahi district in Bangladesh. \cite{somosi2024examination}
studied the relationship among economic growth, rising energy consumption, and CO$_2$ emissions, examining whether it is mutual, bidirectional, or unidirectional via VAR models.
This approach also is considered by~\cite{murakami2024difference}
for distinguishing between therapy and antibiotic spectrum coverage days in an inpatient antimicrobial stewardship program.
~\cite{lindberg2024interest} fitted the VAR model to investigate how interest rates affect lifestyle-driven transitions of older adults into independent living properties.

One limitation of the VAR model is that, similarly to ARMA models, the Gaussianity assumption is also required for the response variables. 
Considering other distributions for the data yields improvements in fit and forecasting over a Gaussian specification since the
distribution of many variables can present characteristics including leptokurtosis, fat tails, and skewness~\citep{liu2008statistical,gong2019non,kiss2023modelling}.
Moreover, models based on Gaussianity assumption often poorly describe count and other discrete-valued time series~\citep{davis2021count}.
Some recent works proposed variations of the VAR model assuming other continuous distributions for the error like skew-$t$ normal and skewed generalized $t$-distributions; see for instance~\cite{nduka2024estimation,anttonen2024statistically}.

This paper aims to introduce the bivariate generalized autoregressive (BGAR) model that is more flexible and simpler than the formulation of the bivariate VAR model. 
Considering a structure similar to the univariate GARMA time series model, we additively include autoregressive terms associated with the mean of one series into the other and vice versa, into each linear predictor. 
This allows the model to capture the shock effect of one time series on the other.
Here, we exclude the moving average terms only to improve the interpretation of the parameters.
The BGAR model offers greater simplicity by allowing the autoregressive orders to be selected individually for each series and cross-effects, represented as BGAR($p_{11}, p_{12}, p_{22}, p_{21})$.
Furthermore, the proposed bivariate model can be used to analyze various time-dependent response variables. It can accommodate continuous or discrete data distributed according to the canonical exponential family and different conditional distributions can be combined, accounting for the particular characteristics of each series.

We present the main properties of the BGAR model, the conditional maximum likelihood estimator (CMLE), and provide some tools for model selection and residual analysis. We obtain general closed-form expressions for the conditional score vector and conditional Fisher information matrix, which include all exponential family distributions. 
The CMLE performance is assessed via Monte Carlo simulation studies, in which we combine different family exponential distributions and several autoregressive orders of the models. An application to two count time series is presented and discussed. We analyze the relationship between the number of leptospirosis cases and hospitalizations due to this disease and assess the accuracy of forecasts 12 months ahead.

The paper unfolds as follows. Section~\ref{sec_model_definition} defines the BGAR model. Section~\ref{sec_par_estimation} proposes the conditional likelihood method for parameters estimation and presents the conditional score vector. Large sample inference is discussed and the conditional Fisher information matrix is computed in Section~\ref{sec_large_sample_inf}. In Section~\ref{sec_properties}, we present stationarity conditions for the marginal mean of the BGAR model.
Section~\ref{sec_diag_analysis} contains techniques for model selection, diagnostic analysis, and forecasting.
Section~\ref{sec_simu} presents some Monte Carlo simulation results. 
Section~\ref{sec_app} analyzes the shock effect of the cases on hospitalizations due to leptospirosis in São Paulo state, Brazil. Concluding remarks are presented in Section~\ref{sec_conclusion}.
Finally, Components of some exponential family distributions in the canonical form are provided in Appendix~\ref{sec_append_exp_fam_elements}.
\section{Model definition}\label{sec_model_definition}

Let $\bm{y}_t=(y_{1t},y_{2t})^\top$ be a random vector of two variables with $t\in \mathbb{Z}$ and mean
$\mathrm{I\!E}(\bm{y}_t)=\bm{\mu}_t$, where $\bm{\mu}_t=(\mu_{1t},\mu_{2t})^\top$. Suppose that the random component $y_{kt}$ ($k=1,2$) given the previous information set $\mathcal{F}_{t-1}=\{\bm{y}_{t-1},\ldots,\bm{y}_{1}\}$ has distribution that belongs to the exponential family in the canonical form. That is, we assume that $y_{kt}\vert\mathcal{F}_{t-1}\sim f_k(\mu_{kt},\varphi_k)$ has univariate conditional density $f_k(\cdot)$ expressed as
\begin{align}\label{EQ_pdf_expfam}
	  f_k(y_{kt};\mu_{kt},\varphi\vert \mathcal{F}_{t-1}) = \exp\left\{\frac{y_{kt}\vartheta_{kt} - b_k(\vartheta_{kt})}{\varphi_k} + c_k(y_{kt},\varphi_k)\right\},
\end{align}
where $\vartheta_{kt}$ and $\varphi_k$ are the canonical and dispersion 
parameters, respectively, as in~
\citetext{\citealp{nelder_Wedderburn_1972_glm}; \citealp{benjamin2003}}.

 The conditional mean and variance of the response variable $y_{kt}$ are
given by
$\mu_{kt}=\mathrm{I\!E}(y_{kt}\vert\mathcal{F}_{t-1})=b_k^\prime(\vartheta_{kt})$ and 
$\mathrm{var}(y_{kt}\vert\mathcal{F}_{t-1})=\varphi_kb_k^{\prime\prime}(\vartheta_{kt})=\varphi_kV(\mu_{kt})$, respectively.
The functions $b_k(\cdot)$ and $c_k(\cdot)$ define a particular distribution from the exponential family. 
Here, we consider the Binomial, Negative Binomial, Poisson, Gamma, Normal, and Inverse Normal distributions~\citep{mccullagh1989generalized}. 
All the components of each one of these distributions, represented in the canonical exponential family form, are detailed in Appendix~\ref{sec_append_exp_fam_elements}.

Similarly to the univariate GARMA models~\citep{benjamin2003}, we propose that the mean parameter \( \mu_{kt} \) is related to a linear predictor \( \eta_{kt} \) through a twice-differentiable, one-to-one monotonic link function \( g_k(\cdot) \). Additionally, we incorporate lagged terms \( y_{1t-l} \) and \( y_{2t-l} \) (where \( l \) denotes the lag order) to model \( \mu_{kt} \), capturing both the serial correlation in \( y_{kt} \) and the dependence between \( y_{1t} \) and the lagged values of \( y_{2t} \), as well as between \( y_{2t} \) and the lagged values of \( y_{1t} \).
The dynamical structure of the bivariate model is given by
\begin{align}\label{EQ_model_mu}
&	g_1(\mu_{1t})=\eta_{1t}=
	\bm{x}_{1t}^\top\bm{\beta}_1+
	\sum_{l=1}^{p_{11}}\phi_{11,l}
	[g_1(y_{1t-l})-\bm{x}_{1t-l}^\top\bm{\beta}_{1}]+
		\sum_{l=1}^{p_{12}}
		\phi_{12,l}
	[g_2(y_{2t-l})-\bm{x}_{2t-l}^\top\bm{\beta}_2]
	\\ \nonumber
	&
		g_2(\mu_{2t})=\eta_{2t}=
	\bm{x}_{2t}^\top\bm{\beta}_2+
	\sum_{l=1}^{p_{22}}\phi_{22,l}
	[g_2(y_{2t-l})-\bm{x}_{2t-l}^\top\bm{\beta}_2]+
	\sum_{l=1}^{p_{21}}
	\phi_{21,l}
	[g_1(y_{1t-l})-\bm{x}_{1t-l}^\top\bm{\beta}_1],
\end{align}
where 
$\bm{x}_{kt}=(1,x_{kt,2},\ldots, x_{kt,l_k})^\top\in \mathrm{I\!R}^{l_k}$ is the non-random vector of covariates associated to the time series $y_{kt}$ and 
to the vector of unknown linear parameters
$\bm{\beta}_k=(\beta_{k1},\ldots,\beta_{kl_k})^\top\in \mathrm{I\!R}^{l_k}$ for $k=1,2$.
The vectors 
$\bm{\phi}_{11}=(\phi_{11,1}, \ldots,\phi_{11,p_{11}})^\top$,
$\bm{\phi}_{12}=(\phi_{12,1}, \ldots,\phi_{12,p_{12}})^\top$,
$\bm{\phi}_{22}=(\phi_{22,1}, \ldots,\phi_{22,p_{22}})^\top$,
$\bm{\phi}_{21}=(\phi_{21,1}, \ldots,\phi_{21,p_{21}})^\top$
are the autoregressive parameters,
where
$\phi_{ij,l}\in \mathrm{I\!R}$, $i,j=1,2$, is the autoregressive coefficient of order $p_{ij}$ in lag $l$ that measures the shock effect of the lagged variable $y_{jt-l}$ in the variable $y_{it-l}$.
Therefore, the BGAR$(p_{11},p_{12},p_{22},p_{21})$ model is defined by~\eqref{EQ_pdf_expfam} and~\eqref{EQ_model_mu}.

Like VAR models, the shock effects are useful in the BGAR model as they allow the dynamic interaction between the two series to be captured. Each coefficient 
$\phi_{ij,l}$ quantifies how past values of one variable affect the current value of the other, making possible to model not only the temporal dependencies within each series but also the cross-dependencies between them.

\section{Parameter estimation}\label{sec_par_estimation}

Parameter estimation is carried out by conditional maximum likelihood~\citep{andersen1970asymptotic}.
Let $y_{k1}, \ldots,y_{kn}$, $k=1,2$,
be a sample from a BGAR$(p_{11},p_{12},p_{22},p_{21})$ model under specification~\eqref{EQ_pdf_expfam} and~\eqref{EQ_model_mu}.
Let 
$\bm{\gamma}=
(\bm{\beta}_1^\top,
\bm{\beta}_2^\top,
\bm{\phi}_{11}^\top,
\bm{\phi}_{12}^\top$,
$\bm{\phi}_{22}^\top,
\bm{\phi}_{21}^\top,
$
$
\varphi_1,
\varphi_2
)^\top$ 
be the
parameter vector that index 
this
model.
The dimension of $\bm{\gamma}$
depends on the marginal distributions assumed for each random component in the model specification. When both components follow distributions with dispersion parameters (e.g., Normal, Gamma, or Inverse Normal), 
the total number of parameters is $v = l_1 + l_2 + p_{11} + p_{12} + p_{22} + p_{21} + 2$. On the other hand, if both components follow distributions with fixed or absent dispersion (e.g., Negative Binomial, Binomial or Poisson), 
the reduced vector $\lambda = (\beta_1^\top, \beta_2^\top, \boldsymbol{\varphi}_{11}^\top, \boldsymbol{\varphi}_{12}^\top, \boldsymbol{\varphi}_{22}^\top, \boldsymbol{\varphi}_{21}^\top)^\top$ 
is used, with dimension $v^\star = l_1 + l_2 + p_{11} + p_{12} + p_{22} + p_{21}$.
In mixed cases, where only one component has a dispersion parameter (e.g., $y_{1t} \sim$ Normal and $y_{2t} \sim$ Poisson), the parameter vector $\gamma$ has dimension $v^\dagger = l_1 + l_2 + p_{11} + p_{12} + p_{22} + p_{21} + 1$. 
Observe that $y_{1t}|\mathcal{F}_{t-1}$ and $y_{2t}|\mathcal{F}_{t-1}$
are independent, as they are conditioned only on the previous information set.
The likelihood function for the parameter vector $\bm{\gamma}$ conditional on the first  $m=\mathrm{max}(p_{11},p_{12},p_{22},p_{21})$ observations can be expressed as
\begin{align}\label{EQ_likelihood}
	L(\bm{\gamma})=
	 \prod_{k=1}^{2}
	\prod_{t=m+1}^{n}
	f_k(y_{kt}\vert \mathcal{F}_{t-1}).
\end{align}
Taking the logarithm on both sides of~\eqref{EQ_likelihood}, we write the conditional log-likelihood function as
\begin{align}\label{EQ_loglik}
	\ell(\bm{\gamma})=
	\sum_{k=1}^{2}
	\sum_{t=m+1}^{n}
	\mathrm{log}(f_k(y_{kt}\vert \mathcal{F}_{t-1}))=
	\sum_{k=1}^{2}
	\sum_{t=m+1}^{n}
	\ell_{kt}(\vartheta_{kt},\varphi_k),
\end{align}
where
\begin{align*}
	\ell_{kt}(\vartheta_{kt},\varphi_k)=
	\frac{y_{kt}\vartheta_{kt} - b_k(\vartheta_{kt})}{\varphi_k} + c_k(y_{kt},\varphi_k).
\end{align*}
The CMLE of $\bm{\gamma}$ is obtained
by maximizing~\eqref{EQ_loglik} assuming that the number of parameters is $h<(n-m)$.

\subsection{Conditional score vector}\label{sec_score_vector}


The score vector $\bm{U}(\bm{\gamma})$ consists of the partial derivatives of the conditional log-likelihood function $\ell:= \ell(\bm{\gamma})$ given in~\eqref{EQ_loglik} with respect to each component of $\bm{\gamma}$. 
Let $\bm{\lambda} = (
\bm{\beta}_1^\top,
\bm{\beta}_2^\top,
\bm{\phi}_{11}^\top,
\bm{\phi}_{12}^\top,
\bm{\phi}_{22}^\top,
\bm{\phi}_{21}^\top
)^\top$.
In what follows, we shall obtain the derivatives of $\ell$ 
 with respect to the $i$th element of $\bm{\lambda}$, $\lambda_i$, where $i = 1,2,\ldots,v^\star$.
By applying the chain rule~\citep{nelder_Wedderburn_1972_glm}, we obtain
\begin{align}\label{EQ_score1}
	U_{\lambda _i}(\bm{\gamma})=
	\frac{\partial \ell}{\partial \lambda_i}=
		\sum_{k=1}^2
	\sum_{t=m+1}^n 
	\frac
	{\partial \ell_{kt}(\vartheta_{kt},\varphi_k)}
	{\partial \vartheta_{kt}}
	\frac{\mathrm{d} \vartheta_{kt}}{\mathrm{d}\mu_{kt} }
	\frac{\mathrm{d} \mu_{kt}}{\mathrm{d}\eta_{kt} }
	\frac{\partial \eta_{kt}}{\partial \lambda_i}.
\end{align}
The derivative of the first factor in~\eqref{EQ_score1} reduces to
\begin{align}\label{EQ_ell_vartheta}
		\frac
	{\partial \ell_{kt}(\vartheta_{kt},\varphi_k)}
	{\partial \vartheta_{kt}}=
		\frac{y_{kt} - b_k^\prime(\vartheta_{kt})}
		{\varphi_k}=	\frac{y_{kt} - \mu_{kt}}
		{\varphi_k},
\end{align}
since $\mu_{kt}=b_k^\prime(\vartheta_{kt})$.
Furthermore, the derivative of second and third factors in~\eqref{EQ_score1} are given by
\begin{align*}
		\frac{\mathrm{d} \vartheta_{kt}}{\mathrm{d}\mu_{kt} }=
		\frac{1}{b_k^{\prime\prime}(\vartheta_{kt})}=
		V_{kt}^{-1},
		\qquad
		\text{and}
		\qquad
		\frac{\mathrm{d} \mu_{kt}}{\mathrm{d}\eta_{kt} }=
		\frac{1}{g^\prime(\mu_{kt})},	
\end{align*}
where $V_{kt}:=V(\mu_{kt})=
\text{d}\mu_{kt}/\text{d}\vartheta_{kt}$. Note that this definition implies $b_k^{\prime\prime}(\vartheta_{kt})=V_{kt}$.
Besides, following the notation of the generalized linear models, we also define the weights $\omega_{kt}=(\mathrm{d} \mu_{kt}/\mathrm{d}\eta_{kt})^2/V_{kt}$.

By replacing the above quantities in~\eqref{EQ_score1}, we have that
\begin{align}\label{EQ_score2}
	\nonumber
	U_{\lambda _i}(\bm{\gamma})=
	\frac{\partial \ell}{\partial \lambda_i}
	&= 
	\sum_{k=1}^2
	\frac{1}{\varphi_k}
	\sum_{t=m+1}^n 
	\sqrt{\frac{\omega_{kt}}{V_{kt}}}\,
(y_{kt}-\mu_{kt})
	\frac{\partial \eta_{kt}}{\partial \lambda_i}
	\\ 
	&=
		\frac{1}{\varphi_1}
	\sum_{t=m+1}^n 
	\sqrt{\frac{\omega_{1t}}{V_{1t}}}\,
	(y_{1t}-\mu_{1t})
	\frac{\partial \eta_{1t}}{\partial \lambda_i}
	+
			\frac{1}{\varphi_2}
	\sum_{t=m+1}^n 
	\sqrt{\frac{\omega_{2t}}{V_{2t}}}\,
	(y_{2t}-\mu_{2t})
	\frac{\partial \eta_{2t}}{\partial \lambda_i}.
\end{align}

The derivatives 
$\partial \eta_{1t}/\partial \lambda_i$ and
$\partial \eta_{2t}/\partial \lambda_i$
in~\eqref{EQ_score2}
are given by 
\begin{align*}
	\frac
	{\partial \eta_{1t}}
	{\partial \beta_{1s}}
	=
	x_{1t,s}-
	\sum_{i=1}^{p_{11}}\phi_{11,l}
{x}_{1t-l,s}
	\qquad
\text{for}
\qquad
s=1,2,\ldots,l_1,
\end{align*}
\begin{align*}
		\frac
	{\partial \eta_{2t}}
	{\partial \beta_{1s}}
	=
	-
	\sum_{l=1}^{p_{21}}
	\phi_{21,l}
	{x}_{1t-l,s}
	\qquad
	\text{for}
	\qquad
	s=1,2,\ldots,l_1,
\end{align*}
\begin{align*}
	\frac
	{\partial \eta_{1t}}
	{\partial \beta_{2w}}
	=
-\sum_{l=1}^{p_{12}}
\phi_{12,l}
{x}_{2t-l,w}
	\qquad
\text{for}
\qquad
w=1,2,\ldots,l_2,
\end{align*}
\begin{align*}
		\frac
	{\partial \eta_{2t}}
	{\partial \beta_{2w}}
	=
	x_{2t,w}-
	\sum_{l=1}^{p_{22}}\phi_{22,l}
	{x}_{2t-l,w}
	\qquad
	\text{for}
	\qquad
	w=1,2,\ldots,l_2,
\end{align*}
\begin{align*}
	\frac
	{\partial \eta_{1t}}
	{\partial \phi_{11,l}}
	=
g(y_{1t-l})-\bm{x}_{1t-l}^\top\bm{\beta_{1}}
	\qquad
	\text{for}
	\qquad
	l=1,2,\ldots,p_{11},
\end{align*}
\begin{align*}
	\frac
	{\partial \eta_{1t}}
	{\partial \phi_{12,l}}
	=
	g(y_{2t-l})-\bm{x}_{2t-l}^\top\bm{\beta_{2}}
	\qquad
	\text{for}
	\qquad
	l=1,2,\ldots,p_{12},
\end{align*}
\begin{align*}
	\frac
	{\partial \eta_{2t}}
	{\partial \phi_{22,l}}
	=g(y_{2t-l})-\bm{x}_{2t-l}^\top\bm{\beta_{2}}
	\qquad
	\text{for}
	\qquad
	l=1,2,\ldots,p_{22},
\end{align*}
and
\begin{align*}
	\frac
	{\partial \eta_{2t}}
	{\partial \phi_{21,l}}
	=g(y_{1t-l})-\bm{x}_{1t-l}^\top\bm{\beta_{1}}
	\qquad
	\text{for}
	\qquad
	l=1,2,\ldots,p_{21}.
\end{align*}
Note that the derivatives 
$	\partial \eta_{2t}/\partial \phi_{11,l}$
($l = 1,\ldots,p_{11}$),
$	\partial \eta_{2t}/\partial \phi_{12,l}$
($l = 1,\ldots,p_{12}$),
$	\partial \eta_{1t}/\partial \phi_{21,l}$
($l = 1,\ldots,p_{21}$),
and
$	\partial \eta_{1t}/\partial \phi_{22,l}$
($l = 1,\ldots,p_{22}$),	
are equal to zero.

Finally, the last components of the conditional score vector, 
$	U_{\varphi_k}(\bm{\gamma})$,
follows from direct differentiation of~\eqref{EQ_loglik} and can be expressed as 
\begin{align}\label{EQ_dell_dvarphik}
	U_{\varphi_k}(\bm{\gamma})=
\frac{
\partial \ell
}
{\partial\varphi_k}
=
\sum_{t=m+1}^n
\frac{\partial \ell_{kt} (\vartheta_{kt},\varphi_k)}{\partial\varphi_k}
=
\sum_{t=m+1}^n
a_{kt},
\end{align}
where
\begin{align*}
	a_{kt}=-\varphi_k^{-2}[y_{kt}\vartheta_{kt} - b_k(\vartheta_{kt})] + c_k^\prime(y_{kt},\varphi_k)
\end{align*}
and
$$c_k^\prime(y_{kt},\varphi_k)=\frac{\mathrm{d}\, c_k(y_{kt},\varphi_k)}{\mathrm{d} \varphi_k}.$$ 
The quantities $a_{kt}$ and $c_k^\prime(\cdot,\cdot)$ are computed in Appendix~\ref{sec_append_exp_fam_elements} for the Binomial, Poisson, Negative Binomial, Gamma, Normal, and Inverse Normal distributions.

To simplify the notation, we define the matrices
$
\bm{B}_{11},
\bm{B}_{21},
\bm{B}_{12},
\bm{B}_{22},
\bm{P}_{11},
\bm{P}_{12},
\bm{P}_{22},
\,\,
\text{and}
\,\,
\bm{P}_{21},
$
with dimensions 
$
(n-m)\times l_1,\,
(n-m)\times l_1,\,
(n-m)\times l_2,\,
(n-m)\times l_2,\,
(n-m)\times p_{11},\,
(n-m)\times p_{12},\,
(n-m)\times p_{22},\,
$
and
$
(n-m)\times p_{21},\,
$ respectively. 
The $(i,j)$th element of each one these matrices is given by
\begin{align*}
	&	{B}_{11_{i,j}}=
	\frac
	{\partial \eta_{1_{i+m}}}
	{\partial \beta_{1_j}},\quad 
	{B}_{21_{i,j}}=
		\frac
	{\partial \eta_{2_{i+m}}}
	{\partial \beta_{1_j}},\quad 
			{B}_{12_{i,j}}=
	\frac
	{\partial \eta_{1_{i+m}}}
	{\partial \beta_{2_j}},\quad 
	{B}_{22_{i,j}}=
	\frac
	{\partial \eta_{2_{i+m}}}
	{\partial \beta_{2_j}},
	\quad 
	\\ &
	P_{11_{i,j}}=\frac{\partial \eta_{1_{i+m}}}{\partial \phi_{11,j}}, 
	\quad 
	P_{12_{i,j}}=\frac{\partial \eta_{1_{i+m}}}{\partial \phi_{12,j}}, 
	\quad 
	P_{22_{i,j}}=\frac{\partial \eta_{2_{i+m}}}{\partial \phi_{22,j}}, 
		\quad 
		\text{and}
		\quad
	P_{21_{i,j}}=\frac{\partial \eta_{2_{i+m}}}{\partial \phi_{21,j}}, 
\end{align*}
respectively. 
By using these quantities,
 we can compactly write the elements of the score vector
 $\bm{U}(\bm{\gamma})=
 (
 \bm{U}_{\bm{\beta}_1} (\bm{\gamma})^\top,
 \bm{U}_{\bm{\beta}_2} (\bm{\gamma})^\top,
 \bm{U}_{\bm{\phi}_{11}} (\bm{\gamma})^\top,
 \bm{U}_{\bm{\phi}_{12}} (\bm{\gamma})^\top,
 \bm{U}_{\bm{\phi}_{22}} (\bm{\gamma})^\top,
 \bm{U}_{\bm{\phi}_{21}} (\bm{\gamma})^\top,
 	{U}_{\varphi_1} (\bm{\gamma}),
 		{U}_{\varphi_2} (\bm{\gamma})
 )^\top$,
 in the matrix form, as 
\begin{align*}
	\bm{U}_{\bm{\beta}_1} (\bm{\gamma})=&\,
	\varphi_1^{-1}
	\bm{B}_{11}^\top
	\bm{W}_1^{\frac{1}{2}}
	\bm{V}_1^{-\frac{1}{2}}
	\,(\bm{y}_1-\bm{\mu}_1)
	+
	\varphi_2^{-1}
	\bm{B}_{21}^\top
	\bm{W}_2^{\frac{1}{2}}
	\bm{V}_2^{-\frac{1}{2}}
	\,(\bm{y}_2-\bm{\mu}_2),
	&\\
	\bm{U}_{\bm{\beta}_2} (\bm{\gamma})=&\,
	\varphi_1^{-1}
	\bm{B}_{12}^\top
	\bm{W}_1^{\frac{1}{2}}
	\bm{V}_1^{-\frac{1}{2}}
	\,(\bm{y}_1-\bm{\mu}_1)
	+
	\varphi_2^{-1}
	\bm{B}_{22}^\top
	\bm{W}_2^{\frac{1}{2}}
	\bm{V}_2^{-\frac{1}{2}}
	\,(\bm{y}_2-\bm{\mu}_2),
	&\\
	\bm{U}_{\bm{\phi}_{11}} (\bm{\gamma})=&\,
	\varphi_1^{-1}
	\bm{P}_{11}^\top
	\bm{W}_1^{\frac{1}{2}}
	\bm{V}_1^{-\frac{1}{2}}
	\,(\bm{y}_1-\bm{\mu}_1),
	&\\
	\bm{U}_{\bm{\phi}_{12}} (\bm{\gamma})=&\,
	\varphi_1^{-1}
	\bm{P}_{12}^\top
	\bm{W}_1^{\frac{1}{2}}
	\bm{V}_1^{-\frac{1}{2}}
	\,(\bm{y}_1-\bm{\mu}_1),
	&\\
	\bm{U}_{\bm{\phi}_{22}} (\bm{\gamma})=&\,
	\varphi_2^{-1}
	\bm{P}_{22}^\top
	\bm{W}_2^{\frac{1}{2}}
	\bm{V}_2^{-\frac{1}{2}}
	\,(\bm{y}_2-\bm{\mu}_2),
	&\\
	\bm{U}_{\bm{\phi}_{21}} (\bm{\gamma})=&\,
	\varphi_2^{-1}
	\bm{P}_{21}^\top
	\bm{W}_2^{\frac{1}{2}}
	\bm{V}_2^{-\frac{1}{2}}
	\,(\bm{y}_2-\bm{\mu}_2),
	&\\
	{U}_{\varphi_1} (\bm{\gamma})=&\,
	\bm{a}_1^\top\bm{1},
	\quad
	\text{and}
	\quad
	&\\
	{U}_{\varphi_2} (\bm{\gamma})=&\,
	\bm{a}_2^\top\bm{1},
\end{align*}
where
$\bm{W}_k=\mathrm{diag}\left\{
\omega_{m+1},
\ldots,
\omega_n\right\}$,
$\bm{V}_k=\mathrm{diag}\left\{
V_{m+1},
\ldots,
V_n\right\}$,
$\bm{y}_k=(y_{k(m+1)},\ldots,y_{kn})^\top$
$\bm{\mu}_k=(\mu_{k(m+1)},\ldots,\mu_{kn})^\top$,
$\bm{a}_k=(a_{km+1},\ldots,a_{kn})^\top$,
and $\bm{1}$ is an $(n - m)$-dimensional vector of ones.

The CMLE $\hat{\bm{\gamma}}$ of $\bm{\gamma}$ is obtained as the solution of the system of equations $\bm{U}(\bm{\gamma})=\bm{0}$, where $\bm{0}$ is an $h\times 1$ vector of zeros. Since this system of nonlinear equations does not have a closed-form solution, nonlinear optimization algorithms are required to obtain the conditional maximum likelihood estimates by numerically maximizing the conditional log-likelihood function. We use the quasi-Newton algorithm known as Broyden-Fletcher-Goldfarb-Shanno (BFGS); for more details, the reader is referred to~\cite{press1992}.
This method requires initialization since it is an iterative optimization algorithm. To obtain the starting values of $\bm{\gamma}$, we fit a generalized linear model (GLM) for each time series, where the response variables are 
	$\boldsymbol{y}_k=(y_{k(m+1)}, \ldots, y_{kn})^\top$
and the covariate matrices are expressed as 
\begin{align*}
		\boldsymbol{X}_1=
		\left[
		\begin{array}{@{}*{12}{c}@{}}
			1& x_{1(m+1)1} & \ldots & x_{1(m+1)l_1} & y^\star_{1m}& y^\star_{1(m-1)}& \cdots & y^\star_{1(m-p_{11}+1)}\\
			1& x_{1(m+2)1} & \cdots & x_{1(m+2)l_1} & y^\star_{1(m+1)}& y^\star_{1m}& \cdots & y^\star_{1(m-p_{11}+2)}\\
			\vdots & \vdots & \ddots & \vdots & \vdots & \vdots & \ddots & \vdots  \\
			1& x_{1n1} & \cdots & x_{1nl_1} & y^\star_{1(n-1)}& y^\star_{1(n-2)}& \cdots & y^\star_{1(n-p_{11})}\\
		\end{array}
\right]	
\end{align*}
and
\begin{align*}
	\boldsymbol{X}_2=
	\left[
	\begin{array}{@{}*{12}{c}@{}}
		1& x_{2(m+1)1} & \ldots & x_{2(m+1)l_2} & y^\star_{2m}& y^\star_{2(m-1)}& \cdots & y^\star_{2(m-p_{22}+1)}\\
		1& x_{2(m+2)1} & \cdots & x_{2(m+2)l_2} & y^\star_{2(m+1)}& y^\star_{2m}& \cdots & y^\star_{2(m-p_{22}+2)}\\
		\vdots & \vdots & \ddots & \vdots & \vdots & \vdots & \ddots & \vdots  \\
		1& x_{2n1} & \cdots & x_{2nl_2} & y^\star_{2(n-1)}& y^\star_{2(n-2)}& \cdots & y^\star_{2(n-p_{22})}\\
	\end{array}
	\right],
\end{align*}
where $y^\star_{kt}=g_k(y_{kt})$ for $k=1,2$ as in~\cite{bayer2017kumaraswamy,ribeiro2024forecasting,bayer2025novel}.
The starting values for $\bm{\beta}_{1},\bm{\phi}_{11}$, and $\varphi_1$ are the parameter estimates from a GLM with response $\bm{y}_1$ and covariates matrix $\bm{X}_1$.
Similarly, the initial values for $\bm{\beta}_{2},\bm{\phi}_{22}$, and $\varphi_2$ are the parameter estimates from a GLM with response $\bm{y}_2$ and covariates matrix $\bm{X}_2$. The initial values of $\bm{\phi}_{12}$ and $\bm{\phi}_{21}$ are set to zero.
The function \verb|glm.nb| from the \verb|MASS| package in \verb|R| language programming~\citep{R_ref} is used to fit a Negative Binomial GLM. We use the \verb|glm| function from the \verb|stats| package for other distributions.

Since the $\kappa_k$ must be considered known so that NB distribution belongs to a canonical exponential family, we set $\kappa_k$ as equal to the initial guess of them obtained according to this initial values scheme.
This approach ensures that the NB distribution is in the canonical form, allowing the use of the general quantities derived in this Section and in Section~\ref{sec_fisher}.

\section{Large sample inference}\label{sec_large_sample_inf}

This section aims to present large sample inferences for the proposed model, such as interval estimation and hypothesis testing. Under usual regularity conditions, the CMLE $\bm{\hat{\gamma}}$ of $\bm{\gamma}$ is consistent and converge in distribution to a multivariate normal distribution with mean $\bm{\gamma}$ and an asymptotic covariance matrix given by the inverse of the conditional Fisher information matrix $\bm{K}(\bm{\gamma})$~\citep{andersen1970asymptotic}. That is, 
in large sample sizes, we have that 
\begin{align}\label{EQ_normass_CLME}
	\bm{\hat{\gamma}}\sim  \mathcal{N}_v
	\left(
	\bm{\gamma},\bm{K}^{-1}(\bm{\gamma})
	\right),
\end{align}
approximately, 
and
$\mathcal{N}_v$ denotes the multivariate normal distribution of a $v$-dimensional vector.

\subsection{Conditional Fisher information matrix}\label{sec_fisher}

In what follows, we shall compute the expected values of all second-order derivatives of the conditional log-likelihood function to obtain the matrix $\bm{K}(\bm{\gamma})$.
 
The second derivatives of the conditional log-likelihood with respect to $\lambda_j$ are given by
\begin{align}\label{EQ_secDeriv_loglik}
	\nonumber
	\frac{\partial ^2 \ell}{\partial \lambda_i \partial \lambda_j}
	&=
	\sum_{k=1}^2
	\sum_{t=m+1}^n 
	\frac{\partial^2 \ell_{kt}(\vartheta_{kt},\varphi_{kt})}{\partial \lambda_i \partial \lambda_j}\\
	\nonumber
	&=
	\sum_{k=1}^2
	\sum_{t=m+1}^n 
	\frac{\partial }{\partial \lambda_i}
	\left(
	\frac
	{\partial \ell_{kt}(\vartheta_{kt},\varphi_k)}
	{\partial \vartheta_{kt}}
	\frac{\mathrm{d} \vartheta_{kt}}{\mathrm{d}\mu_{kt} }
	\frac{\mathrm{d} \mu_{kt}}{\mathrm{d}\eta_{kt} }
	\frac{\partial \eta_{kt}}{\partial \lambda_j}
	\right)
	\\ 
	&=
		\sum_{k=1}^2
	\sum_{t=m+1}^n 
	\frac{\partial }{\partial \mu_{kt}}
	\left(
	\frac
	{\partial \ell_{kt}(\vartheta_{kt},\varphi_k)}
	{\partial \vartheta_{kt}}
	\frac{\mathrm{d} \vartheta_{kt}}{\mathrm{d}\mu_{kt} }
	\frac{\mathrm{d} \mu_{kt}}{\mathrm{d}\eta_{kt} }
	\frac{\partial \eta_{kt}}{\partial \lambda_j}
	\right)
		\frac{\mathrm{d} \mu_{kt}}{\mathrm{d}\eta_{kt} }
		\frac{\partial \eta_{kt}}{\partial \lambda_i},
\end{align}
for $i=1,\ldots,v^\star$ and $j=1,\ldots,v^\star$.
By applying the product rule in~\eqref{EQ_secDeriv_loglik}, it follows that
\begin{align}\label{EQ_secDeriv_loglik1}
	\nonumber
	\frac{\partial ^2 \ell}{\partial \lambda_i \partial \lambda_j}
	=
	\sum_{k=1}^2
	\sum_{t=m+1}^n 
	&
	\left[
	\frac
	{\partial ^2\ell_{kt}(\vartheta_{kt},\varphi_k)}
	{\partial \vartheta_{kt}^2}
	\left(
	\frac{\mathrm{d} \vartheta_{kt}}{\mathrm{d}\mu_{kt} }
	\right)^2
	\frac{\mathrm{d} \mu_{kt}}{\mathrm{d}\eta_{kt} }
	\frac{\partial \eta_{kt}}{\partial \lambda_j}
	+
		\frac
	{\partial \ell_{kt}(\vartheta_{kt},\varphi_k)}
	{\partial \vartheta_{kt}}
	\frac{\mathrm{d}^2 \vartheta_{kt}}{\mathrm{d}\mu_{kt}^2 }
	\frac{\mathrm{d} \mu_{kt}}{\mathrm{d}\eta_{kt} }
	\frac{\partial \eta_{kt}}{\partial \lambda_j}
	+
	\right.
	\\
	&
	\left.
		\frac
	{\partial \ell_{kt}(\vartheta_{kt},\varphi_k)}
	{\partial \vartheta_{kt}}
	\frac{\mathrm{d} \vartheta_{kt}}{\mathrm{d}\mu_{kt} }
	\frac{\partial }{\partial \mu_{kt}}
	\left(
	\frac{\mathrm{d} \mu_{kt}}{\mathrm{d}\eta_{kt} }
	\frac{\partial \eta_{kt}}{\partial \lambda_j}
	\right)
		\right]
	\frac{\mathrm{d} \mu_{kt}}{\mathrm{d}\eta_{kt} }
	\frac{\partial \eta_{kt}}{\partial \lambda_i}.
\end{align}

From the usual regularity conditions, we have that
$\mathrm{I\!E}(\partial \ell_{kt}(\vartheta_{kt},\varphi_k)/
\partial \vartheta_{kt}\vert \mathcal{F}_{t-1})=0$. 
Computing the expected values of~\eqref{EQ_secDeriv_loglik1}, we obtain
\begin{align}
	\nonumber
	\mathrm{I\!E}
	\left(
\frac{\partial ^2\ell_{kt}(\vartheta_{kt},\varphi_{kt})}{\partial \lambda_i \partial \lambda_j}
	\bigg\vert
	\mathcal{F}_{t-1}
	\right)
	=\,
	&
		\mathrm{I\!E}
		\left(
	\frac
	{\partial ^2\ell_{kt}(\vartheta_{kt},\varphi_k)}
	{\partial \vartheta_{kt}^2}
		\bigg\vert
	\mathcal{F}_{t-1}
	\right)
	\left(
	\frac{\mathrm{d} \vartheta_{kt}}{\mathrm{d}\mu_{kt} }
	\right)^2
\left(
	\frac{\mathrm{d} \mu_{kt}}{\mathrm{d}\eta_{kt} }
		\right)^2
	\frac{\partial \eta_{kt}}{\partial \lambda_j}
	\frac{\partial \eta_{kt}}{\partial \lambda_i}.
\end{align}

From~\eqref{EQ_ell_vartheta}, we have that
\begin{align*}
	\frac
{\partial ^2\ell_{kt}(\vartheta_{kt},\varphi_k)}
{\partial \vartheta_{kt}^2}
=
-\frac{b_k^{\prime\prime}(\vartheta_{kt})}{\varphi_k}
=
-\frac{V_{kt}}{\varphi_k}.
\end{align*}
On the other hand, observe that
\begin{align*}
		\left(
	\frac{\mathrm{d} \vartheta_{kt}}{\mathrm{d}\mu_{kt} }
	\right)^2
	\left(
	\frac{\mathrm{d} \mu_{kt}}{\mathrm{d}\eta_{kt} }
	\right)^2
	=\frac{w_{kt}}
	{V_{kt}}.
\end{align*}
Hence,
\begin{align*}
	\nonumber
	\mathrm{I\!E}
	\left(
\frac{\partial ^2\ell_{kt}(\vartheta_{kt},\varphi_{kt})}{\partial \lambda_i \partial \lambda_j}
	\bigg\vert
	\mathcal{F}_{t-1}
	\right)
	&=
	-
\frac{w_{kt}}{\varphi_k}
	\frac{\partial \eta_{kt}}{\partial \lambda_j}
	\frac{\partial \eta_{kt}}{\partial \lambda_i}.
\end{align*}
Note that all first derivatives $\partial \eta_{kt}/\partial \lambda_i$ 
(which are equal to
 $\partial \eta_{kt}/\partial \lambda_j$)
have
already been presented in Section~\ref{sec_score_vector}.

The second-order derivatives
 $\partial^2 \ell/\partial \varphi_k^2$ follows from the direct differentiation of~\eqref{EQ_dell_dvarphik} with respect to $\varphi_k$ as
\begin{align}\label{EQ_dell2_dvarphi2}
	\frac{\partial^2 \ell}{\partial \varphi_k^2}
	= 
	\sum_{t=m+1}^n
	\frac{\partial \ell_{kt}^2 (\vartheta_{kt},\varphi_k)}{\partial\varphi_k^2}
	=
	\sum_{t=m+1}^n
	\{
	2\varphi_k^{-3}[y_{kt}\vartheta_{kt} - b_k(\vartheta_{kt})] + c_k^{\prime\prime}(y_{kt},\varphi_k)
	\}.
\end{align}
Taking the expected value of~\eqref{EQ_dell2_dvarphi2}, we obtain
\begin{align}\label{EQ_expect_dell_dvarphi2}
	\mathrm{I\!E}
	\left(
	\frac{\partial \ell_{kt}^2 (\vartheta_{kt},\varphi_k)}{\partial\varphi_k^2}
	\bigg\vert
	\mathcal{F}_{t-1}
	\right)
	= 
	2\varphi_k^{-3}[\mu_{kt}\vartheta_{kt} - b_k(\vartheta_{kt})] +
		\mathrm{I\!E}\left( c_k^{\prime\prime}(y_{kt},\varphi_k)\vert
		\mathcal{F}_{t-1}\right)
	=	 d_{kt},
\end{align}
where 
\begin{align*}
	d_{kt}=	2\varphi_k^{-3}[\mu_{kt}\vartheta_{kt} - b_k(\vartheta_{kt})] +
	\mathrm{I\!E}\left( c_k^{\prime\prime}(y_{kt},\varphi_k)\vert \mathcal{F}_{t-1}\right).
\end{align*}
The quantities $c_k^{\prime\prime}(y_{kt},\varphi_k)$, $d_{kt}$, and $	\mathrm{I\!E}\left( c_k^{\prime\prime}(y_{kt},\varphi_k)\vert \mathcal{F}_{t-1}\right)$ are given in Appendix~\ref{sec_append_exp_fam_elements} for the Binomial, Negative Binomial, Poisson, Gamma, Normal, and Inverse Normal distributions. 

The cross-derivative of the log-likelihood function with respect to $\varphi_1$ and $\varphi_2$ is null. That is,
\begin{align*}
\frac{\partial ^2\ell}{\partial \varphi_1\partial \varphi_2}
= 0.
\end{align*}

Finally,
we shall now obtain the cross derivatives with respect to $\lambda_i$, $i=1,\ldots(h-2)$, and $\varphi_k$. 
Recall that
\begin{align*}
	\frac{\partial \ell}{\partial \lambda_i}
	= 
	\sum_{k=1}^2
	\frac{1}{\varphi_k}
	\sum_{t=m+1}^n 
	\sqrt{\frac{\omega_{kt}}{V_{kt}}}\,
	(y_{kt}-\mu_{kt})
	\frac{\partial \eta_{kt}}{\partial \lambda_i}.
\end{align*}
Therefore,
\begin{align*}
	\frac{\partial^2 \ell}{\partial \lambda_i \partial \varphi_k}
	= 
	-
	\frac{1}{\varphi_k^2}
	\sum_{t=m+1}^n 
	\sqrt{\frac{\omega_{kt}}{V_{kt}}}\,
	(y_{kt}-\mu_{kt})
	\frac{\partial \eta_{kt}}{\partial \lambda_i},
	\qquad k=1,2.
\end{align*}
Since $	\mathrm{I\!E}( y_{kt}\vert \mathcal{F}_{t-1})=\mu_{kt}$, it follows that $\mathrm{I\!E}
	\left(
	\partial^2 \ell/\partial \lambda_i \partial \varphi_k
	\right)
	= 0.$
That is, the parameters
$\bm{\lambda}$ and $\bm{\varphi}$ are orthogonal.
This implies that the conditional Fisher information matrix for $\bm{\gamma}$ is block-diagonal. Therefore,
$\bm{K}(\bm{\gamma})=\mathrm{diag}\{\bm{K}_{{\lambda\lambda}},\bm{K}_{{\varphi\varphi}}\}$.
 The matrix
$\bm{K}_{{\lambda\lambda}}$ is given by
\begin{align*}
	{\bm{K}_{{\lambda\lambda}}=
		\begin{bmatrix}
\bm{K}_{\bm{\beta}_1\bm{\beta}_1} & 
\bm{K}_{\bm{\beta}_1\bm{\beta}_2} &
\bm{K}_{\bm{\beta}_1\bm{\phi}_{11}} &
\bm{K}_{\bm{\beta}_1\bm{\phi}_{12}} &
\bm{K}_{\bm{\beta}_1\bm{\phi}_{22}} &
\bm{K}_{\bm{\beta}_1\bm{\phi}_{21}} \\
		\bm{K}_{\bm{\beta}_2\bm{\beta}_1} & 
		\bm{K}_{\bm{\beta}_2\bm{\beta}_2} &
		\bm{K}_{\bm{\beta}_2\bm{\phi}_{11}} &
		\bm{K}_{\bm{\beta}_2\bm{\phi}_{12}} &
		\bm{K}_{\bm{\beta}_2\bm{\phi}_{22}} &
		\bm{K}_{\bm{\beta}_2\bm{\phi}_{21}} \\
 \bm{K}_{\bm{\phi}_{11}\bm{\beta}_1} & 
\bm{K}_{\bm{\phi}_{11}\bm{\beta}_2} & 
\bm{K}_{\bm{\phi}_{11}\bm{\phi}_{11}} &
\bm{K}_{\bm{\phi}_{11}\bm{\phi}_{12}} &
\bm{0}_{p_{11},p_{22}} &
\bm{0}_{p_{11},p_{21}}
 \\
  \bm{K}_{\bm{\phi}_{12}\bm{\beta}_1} & 
\bm{K}_{\bm{\phi}_{12}\bm{\beta}_2} & 
\bm{K}_{\bm{\phi}_{12}\bm{\phi}_{11}} &
\bm{K}_{\bm{\phi}_{12}\bm{\phi}_{12}} &
\bm{0}_{p_{12},p_{22}}&
\bm{0}_{p_{12},p_{21}}
\\
  \bm{K}_{\bm{\phi}_{22}\bm{\beta}_1} & 
\bm{K}_{\bm{\phi}_{22}\bm{\beta}_2} & 
\bm{0}_{p_{22},p_{11}} &
\bm{0}_{p_{22},p_{12}} &
\bm{K}_{\bm{\phi}_{22}\bm{\phi}_{22}} &
\bm{K}_{\bm{\phi}_{22}\bm{\phi}_{21}} \\
  \bm{K}_{\bm{\phi}_{21}\bm{\beta}_1} & 
\bm{K}_{\bm{\phi}_{21}\bm{\beta}_2} & 
\bm{0}_{p_{21},p_{11}}&
\bm{0}_{p_{21},p_{12}} &
\bm{K}_{\bm{\phi}_{21}\bm{\phi}_{22}} &
\bm{K}_{\bm{\phi}_{21}\bm{\phi}_{21}} \\
		\end{bmatrix},	
}
\end{align*}
where
\begin{align*}
	&\bm{K}_{\bm{\beta}_1\bm{\beta}_1}=
	\varphi_1^{-1}\bm{B}_{11}^\top\bm{W}_1\bm{B}_{11}
	+\varphi_2^{-1}\bm{B}_{21}^\top\bm{W}_2\bm{B}_{21},
	&\qquad
	&
	\bm{K}_{\bm{\beta}_1\bm{\beta}_2}=\bm{K}_{\bm{\beta}_2\bm{\beta}_1}^\top=
	\varphi_1^{-1}\bm{B}_{11}^\top\bm{W}_1\bm{B}_{12}
	+\varphi_2^{-1}\bm{B}_{21}^\top\bm{W}_2\bm{B}_{22},
	\\&
	\bm{K}_{\bm{\beta}_1\bm{\phi}_{11}}=\bm{K}_{\bm{\phi}_{11}\bm{\beta}_1}^\top=
	\varphi_1^{-1}\bm{B}_{11}^\top\bm{W}_1\bm{P}_{11},
	&\qquad
	&
	\bm{K}_{\bm{\beta}_1\bm{\phi}_{12}}=\bm{K}_{\bm{\phi}_{12}\bm{\beta}_1}^\top=
	\varphi_1^{-1}\bm{B}_{11}^\top\bm{W}_1\bm{P}_{12},
	\\&
	\bm{K}_{\bm{\beta}_1\bm{\phi}_{22}}=\bm{K}_{\bm{\phi}_{22}\bm{\beta}_1}^\top=
	\varphi_2^{-1}\bm{B}_{21}^\top\bm{W}_2\bm{P}_{22},
	&\qquad
	&
	\bm{K}_{\bm{\beta}_1\bm{\phi}_{21}}=\bm{K}_{\bm{\phi}_{21}\bm{\beta}_1}^\top=
	\varphi_2^{-1}\bm{B}_{21}^\top\bm{W}_2\bm{P}_{21},
	\\&
	\bm{K}_{\bm{\beta}_2\bm{\beta}_2}=
	\varphi_1^{-1}\bm{B}_{12}^\top\bm{W}_1\bm{B}_{12}
	+\varphi_2^{-1}\bm{B}_{21}^\top\bm{W}_2\bm{B}_{21},
	&\qquad
	&
	\bm{K}_{\bm{\beta}_2\bm{\phi}_{11}}=\bm{K}_{\bm{\phi}_{11}\bm{\beta}_2}^\top=
	\varphi_1^{-1}\bm{B}_{12}^\top\bm{W}_1\bm{P}_{11},
	\\&
	\bm{K}_{\bm{\beta}_2\bm{\phi}_{12}}=\bm{K}_{\bm{\phi}_{12}\bm{\beta}_2}^\top=
	\varphi_1^{-1}\bm{B}_{12}^\top\bm{W}_1\bm{P}_{12},
	&\qquad
	&
	\bm{K}_{\bm{\beta}_2\bm{\phi}_{22}}=\bm{K}_{\bm{\phi}_{22}\bm{\beta}_2}^\top=
	\varphi_2^{-1}\bm{B}_{22}^\top\bm{W}_2\bm{P}_{22},
	\\&
	\bm{K}_{\bm{\beta}_2\bm{\phi}_{21}}=\bm{K}_{\bm{\phi}_{21}\bm{\beta}_2}^\top=
	\varphi_2^{-1}\bm{B}_{22}^\top\bm{W}_2\bm{P}_{21},
	&\qquad
	&
	\bm{K}_{\bm{\phi}_{11}\bm{\phi}_{11}}=
	\varphi_1^{-1}\bm{P}_{11}^\top\bm{W}_1\bm{P}_{11},
	\\&
	\bm{K}_{\bm{\phi}_{11}\bm{\phi}_{12}}=\bm{K}_{\bm{\phi}_{12}\bm{\phi}_{11}}^\top=
	\varphi_1^{-1}\bm{P}_{11}^\top\bm{W}_1\bm{P}_{12},
	&\qquad
	&
	\bm{K}_{\bm{\phi}_{12}\bm{\phi}_{12}}=
	\varphi_1^{-1}\bm{P}_{12}^\top\bm{W}_1\bm{P}_{12},
	\\&
	\bm{K}_{\bm{\phi}_{22}\bm{\phi}_{22}}=
	\varphi_2^{-1}\bm{P}_{22}^\top\bm{W}_2\bm{P}_{22},
	&\qquad
	&
	\bm{K}_{\bm{\phi}_{22}\bm{\phi}_{21}}=
	\bm{K}_{\bm{\phi}_{21}\bm{\phi}_{22}}^\top=
	\varphi_2^{-1}\bm{P}_{22}^\top\bm{W}_2\bm{P}_{21},
	\\&
	\bm{K}_{\bm{\phi}_{21}\bm{\phi}_{21}}=
	\varphi_2^{-1}\bm{P}_{21}^\top\bm{W}_2\bm{P}_{21},
\end{align*}
and $\bm{0}_{m,n}$ a zeros matrix of dimension $m\times n$.
Finally, the matrix
$\bm{K}_{{\varphi\varphi}}$
is expressed as $\bm{K}_{{\varphi\varphi}}=\mathrm{diag}\{K_{\varphi_1\varphi_1}, K_{\varphi_2\varphi_2}\}$,
where
${K}_{{\varphi_1\varphi_{1}}}=-
\bm{d}_1^\top\bm{1}$
and
${K}_{{\varphi_2\varphi_{2}}}=-
\bm{d}_2^\top\bm{1}$,
with
$\bm{d}_k=(d_{k(m+1)},\ldots,d_{kn})^\top$.

\subsection{Confidence intervals and hypothesis testing inference}

We propose to construct confidence intervals for the parameters of the BGAR model considering the asymptotic normality property of the CMLE, $\hat{\bm{\gamma}}$. 
From~\eqref{EQ_normass_CLME}, we have that
\begin{align}\label{EQ_normass_CLME2}
	(\hat{\gamma}_i-{\gamma}_i)
	\left[K(\hat{\bm{\gamma}})^{ii}\right]^{-1/2}
	\sim \mathcal{N}
	(0,1),
\end{align}
approximately, 
where 
$\hat{\gamma}_i$ and ${\gamma}_i$ denote the $i$th component of $\bm{\hat{\gamma}}$ and $\bm{\gamma}$, respectively,
$K(\hat{\bm{\gamma}})^{ii}$ is the $i$th diagonal element of 
$\bm{K}^{-1}(\hat{\bm{\gamma}})$
and
$\mathcal{N}$ denotes a univariate normal distribution. 
Let $\alpha\in (0,1/2)$ be the significance level. A 100($1-\alpha$)\% confidence interval for $\gamma_i$, $i=1,2,\ldots,h$, can be expressed as
\begin{align}\label{EQ_CI}
	\left[
	\hat{\gamma}_i - z_{1-\alpha /2}
	\left(K(\hat{\bm{\gamma}})^{ii}\right)^{1/2};
	\hat{\gamma}_i + z_{1-\alpha /2}
	\left(K(\hat{\bm{\gamma}})^{ii}\right)^{1/2}
	\right],
\end{align}
where $z_{\delta}$ represent the $\delta$ standard normal quantile such that $\Phi(z)=\delta$.

Similarly, from~\eqref{EQ_normass_CLME2}, we can carry out hypothesis testing. Suppose that we wish to test 
$\mathcal{H}_0: \gamma_i= \gamma_i^0$ against $\mathcal{H}_1: \gamma_i\neq\gamma_i^0$, where $\gamma_i^0$ is a given hypothesized value for the true value of  $\gamma_i$. For testing these hypotheses, it can be considered the statistic $z$ 
based on the signed square root of Wald’s statistic, which is computed as~\citep{pawitan2001}
\begin{align*}
	z=\frac{\hat{\gamma}_i-{\gamma}_i^0}{\left(K(\hat{\bm{\gamma}})^{ii}\right)^{1/2}}.
\end{align*}
Under the null hypothesis, $\mathcal{H}_0$, the limiting distribution of $z$ is standard normal.
We reject $\mathcal{H}_0$, whether the absolute observed value of $z$ exceeds the quantile $z_{1-\alpha/2}$ at the $\alpha$ significance level.

\section{Properties of the BGAR model}\label{sec_properties}

In this section, 
we derive stationarity conditions 
for the marginal mean and variance for a BGAR model with identity link function $g_k$. 
The conditions are presented in the following two theorems. These results generalize those derived for univariate GARMA models by~\cite{benjamin2003}.

\begin{theor}
	The marginal mean of 
	$y_{kt}$
	of the BGAR model with identity link function $g_k(\cdot)$ for $k=1,2$ is
	$$\mathrm{I\!E}({y}_{kt}) = \bm{x}_{kt}^\top\bm{\beta}_k,$$
	since the roots of the polynomial $\det(\bm{I}_2 - \bm{A}_1z - \dots - \bm{A}_pz^p)$ lie outside of the unit circle, where  $\bm{I}_2$ is the identity matrix of order $2$,
	$$\bm{A}_{l} = \begin{pmatrix} \phi_{11, l} & \phi_{12, l} \\ \phi_{21, l} & \phi_{22, l} \end{pmatrix}\qquad \text{with}  \qquad l=1,\ldots,p$$
	is the autoregressive coefficients matrix,
	$p = \max(p_{11}, p_{12}, p_{21}, p_{22})$,
	and
	$\phi_{ij,l}=0$
	 when $p_{ij}<p$, for $i, j=1,2$ and $p_{ij}<l\leq p$.
Furthermore, for stationarity, it is required that $\bm{x}_{kt}^\top\bm{\beta}_k = \beta_{k0}$ for all $t$.

\end{theor}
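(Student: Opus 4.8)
The plan is to exploit the tower property of conditional expectation together with the model's conditional mean identity $\mathrm{I\!E}(y_{kt}\mid\mathcal{F}_{t-1})=\mu_{kt}$. Since with the identity link $g_k$ the dynamic equations in~\eqref{EQ_model_mu} express $\mu_{kt}$ as a linear function of the regressors and of the lagged deviations $y_{j,t-l}-\bm{x}_{j,t-l}^\top\bm{\beta}_j$, I would first take the unconditional expectation of both equations. By the law of iterated expectations, $\mathrm{I\!E}(y_{kt})=\mathrm{I\!E}(\mu_{kt})$, so writing $\nu_{kt}:=\mathrm{I\!E}(y_{kt})$ turns the pair of equations into a deterministic linear system in the $\nu_{kt}$.

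The central idea is to pass to the deviation variables $\delta_{kt}:=\nu_{kt}-\bm{x}_{kt}^\top\bm{\beta}_k$. Substituting and using linearity, the regression terms cancel and each equation collapses to a homogeneous relation; collecting $\bm{\delta}_t=(\delta_{1t},\delta_{2t})^\top$ I expect to obtain the bivariate homogeneous recursion
\begin{align*}
\bm{\delta}_t=\sum_{l=1}^{p}\bm{A}_l\,\bm{\delta}_{t-l},
\end{align*}
with $\bm{A}_l$ the autoregressive coefficient matrix of the statement and the convention $\phi_{ij,l}=0$ for $p_{ij}<l\le p$. This is precisely the mean recursion of a zero-mean VAR($p$) process, so the whole problem reduces to showing that this recursion admits only the trivial solution $\bm{\delta}_t\equiv\bm{0}$.

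This last reduction is the main obstacle, and it is where the root condition enters. I would rewrite the recursion in companion form, so that $(\bm{\delta}_t^\top,\ldots,\bm{\delta}_{t-p+1}^\top)^\top$ evolves by a fixed $2p\times 2p$ companion matrix whose eigenvalues are the reciprocals of the roots of $\det(\bm{I}_2-\bm{A}_1z-\cdots-\bm{A}_pz^p)$. Because $t$ ranges over all of $\mathbb{Z}$ and the marginal means are finite, $\bm{\delta}_t$ is a bounded two-sided solution; when every root lies strictly outside the unit circle the companion eigenvalues lie strictly inside it, and the only bounded two-sided trajectory of such a recursion is the zero trajectory. Equivalently, the root condition guarantees $z=1$ is not a root, so $\det(\bm{I}_2-\sum_l\bm{A}_l)\neq 0$ and the only time-constant solution is $\bm{\delta}_t=\bm{0}$. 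Hence $\bm{\delta}_t=\bm{0}$, i.e. $\mathrm{I\!E}(y_{kt})=\bm{x}_{kt}^\top\bm{\beta}_k$ for $k=1,2$, as claimed.

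For the concluding assertion I would simply note that weak stationarity forces the marginal mean to be constant in $t$; since the mean equals $\bm{x}_{kt}^\top\bm{\beta}_k$, stationarity can hold only if the regression part of the linear predictor does not vary with $t$, that is $\bm{x}_{kt}^\top\bm{\beta}_k=\beta_{k0}$ for all $t$, which is exactly the stated requirement.
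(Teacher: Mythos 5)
Your proof is correct and reaches the same conclusion, but it reverses the order of operations relative to the paper. The paper works at the level of the random process: it sets $w_{kt}=y_{kt}-\bm{x}_{kt}^\top\bm{\beta}_k$ and $v_{kt}=y_{kt}-\mu_{kt}$, shows that $\bm{w}_t$ satisfies a VAR($p$) driven by the martingale-difference errors $\bm{v}_t$, passes to the $2p$-dimensional companion (VAR($1$)) form, and under the root condition invokes the moving-average representation $\bm{W}_t=\sum_{i\ge 0}\bm{A}^i\bm{V}_{t-i}$, from which $\mathrm{I\!E}(\bm{W}_t)=\bm{0}$ follows by taking expectations term by term. You instead take expectations first, so that the deviations $\delta_{kt}=\mathrm{I\!E}(y_{kt})-\bm{x}_{kt}^\top\bm{\beta}_k$ satisfy a purely deterministic homogeneous recursion $\bm{\delta}_t=\sum_l\bm{A}_l\bm{\delta}_{t-l}$, and you kill it via the companion-matrix spectral-radius argument. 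Your route is more elementary in that it sidesteps the stochastic convergence of the MA($\infty$) series entirely, whereas the paper's route buys the explicit representation that is then reused in Theorem~2 to compute the marginal variance. One small point to tighten: your boundedness step assumes $\sup_t\|\bm{\delta}_t\|<\infty$, which does not follow from finiteness of each marginal mean alone; it is an implicit regularity assumption (essentially that one is working with the stationary solution), and the paper makes the analogous implicit assumption when it asserts the MA($\infty$) representation. Your treatment of the final assertion about $\bm{x}_{kt}^\top\bm{\beta}_k=\beta_{k0}$ matches the paper's intent.
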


\begin{proof}
	 Considering the identity link function, \eqref{EQ_model_mu} becomes
	 \begin{align}\label{eq_bgar_identity}
	 	&	\mu_{1t}=
	 	\bm{x}_{1t}^\top\bm{\beta}_{1}+
	 	\sum_{i=1}^{p_{11}}\phi_{11,i}
	 	(y_{1t-i}-\bm{x}_{1t-i}^\top\bm{\beta}_{1})+
	 	\sum_{j=1}^{p_{12}}
	 	\phi_{12,j}
	 	(y_{2t-j}-\bm{x}_{2t-j}^\top\bm{\beta}_{2})
	 	\\ \nonumber
	 	&
	 \mu_{2t}=
	 	\bm{x}_{2t}^\top\bm{\beta}_{2}+
	 	\sum_{l=1}^{p_{22}}\phi_{22,l}
	 	(y_{2t-l}-\bm{x}_{2t-l}^\top\bm{\beta}_{2})+
	 	\sum_{s=1}^{p_{21}}
	 	\phi_{21,s}
	 	(y_{1t-s}-\bm{x}_{1t-s}^\top\bm{\beta}_{1}),
	 \end{align}
	
	Let $y_{kt} = \mu_{kt} + v_{kt}$ and $w_{kt} = y_{kt} - x^\top_{kt}\beta_k,$
	where $v_{kt}$ are martigale errors with marginally mean $0$ and uncorrelated~\citep{benjamin2003}. 
	Replacing these quantities in~\eqref{eq_bgar_identity}, we have that
		 \begin{align}\label{eq_bgar_identity2}
		&	w_{1t}=
		\sum_{i=1}^{p_{11}}\phi_{11,i}
		w_{1t-i}+
		\sum_{j=1}^{p_{12}}
		\phi_{12,j}
		w_{2t-j}+v_{1t}
		\\ \nonumber
		&
		w_{2t}=
		\sum_{l=1}^{p_{22}}\phi_{22,l}
		w_{2t-l}+
		\sum_{s=1}^{p_{21}}
		\phi_{21,s}
		w_{1t-s}+v_{2t}.
	\end{align}
Now, we define 	$p = \max(p_{11}, p_{12}, p_{21}, p_{22})$
and
$\phi_{ij,l}=0$
when $p_{ij}<p$, for $i, j=1,2$ and $p_{ij}<l\leq p$.
Then, \eqref{eq_bgar_identity2} becomes
		 \begin{align*}
	&	w_{1t}=
	\sum_{l=1}^{p}(\phi_{11,l}
	w_{1t-l}+
	\phi_{12,l}
	w_{2t-l})+v_{1t}
	\\ \nonumber
	&
	w_{2t}=
	\sum_{l=1}^{p}(\phi_{22,l}
	w_{2t-l}+
	\phi_{21,l}
	w_{1t-l})+v_{2t},
\end{align*}
which can be written in matrix form as
\begin{align}\label{eq_var_p}
	\mathbf{w}_t = \sum_{i=1}^p \bm{A}_i \mathbf{w}_{t-i} + \bm{v}_t, 
\end{align}
	where
	$$\mathbf{w}_t = \begin{pmatrix} w_{1t} \\ w_{2t} \end{pmatrix},
	\quad
\bm{A}_l = \begin{pmatrix} \phi_{11, l} & \phi_{12, l} \\ \phi_{21, l} & \phi_{22, l} \end{pmatrix},
	 \quad
	 \text{and}
	 \quad
	 \bm{v}_t = \begin{pmatrix} v_{1t} \\ v_{2t} \end{pmatrix}.$$
	
	Note that \eqref{eq_var_p} represents a bivariate VAR($p$) model without intercept. The process \eqref{eq_var_p} is stable (and, thus, stationary) if
	$$\det(\bm{I}_2 - \bm{A}_1z - ... - \bm{A}_pz^p) \neq 0 \text{ for } |z| \geq 1,$$
	that is, the roots of the polynomial $\det(\bm{I}_2 - \bm{A}_1z - ... - \bm{A}_pz^p)$ lie outside of the unit circle~\citep{lutkepohl2005new}.

	Since any VAR($p$) process can be written in VAR($1$) form, we can represent \eqref{eq_var_p} as a $2p$-dimensional VAR($1$) given by
	\begin{align*}
		\bm{W}_t = \bm{A}\bm{W}_{t-1} + \bm{V}_t,
	\end{align*}
	where
	$$
	\bm{W}_t = \begin{pmatrix} \bm{w}_t \\ \bm{w}_{t-1} \\  \bm{w}_{t-2} \\ \vdots \\  \bm{w}_{t-p+1} \end{pmatrix}_{2p \times 1},\quad
	\bm{A} = \begin{bmatrix} \bm{A}_1 & \bm{A}_2 & ... & \bm{A}_{p-1} & \bm{A}_p \\ \bm{I}_2 & \bm{0} & ... & \bm{0} & \bm{0} \\ \bm{0} & \bm{I}_2 & ... & \bm{0} & \bm{0} \\ \vdots & \vdots & \ddots & \vdots & \vdots \\ \bm{0} & \bm{0} & ... & \bm{I}_2 & \bm{0} \end{bmatrix}_{2p \times 2p},
	\quad
	\text{and} 
	\quad
	\bm{V}_t = \begin{pmatrix} \bm{v}_t \\ {0} \\ 0 \\ \vdots \\ 0 \end{pmatrix}_{2p \times 1}.$$
	
	Under the stability assumption, the process $\mathbf{W}_t$ has a moving average representation given by
	\begin{align}\label{eq_var_p_ma}
		\bm{W}_t = \sum_{i=0}^\infty \bm{A}^i \bm{V}_{t-i}.
	\end{align}
From~\eqref{eq_var_p_ma}, it follows that 
$\mathrm{I\!E}(\bm{W}_t)=\bm{0}_{2p\times 1}$
	and, therefore, the marginal mean
	$\mathrm{I\!E}({y}_{kt})$ of $y_{kt}$ is given by
	\begin{align*}
	\mathrm{I\!E}({y}_{kt}) = \mathrm{I\!E}(\bm{x}^\top_{kt}\bm{\beta}_k+w_{kt})=
	 \mathrm{I\!E}(\bm{x}^\top_{kt}\bm{\beta}_k)+ \mathrm{I\!E}(w_{kt})=\bm{x}^\top_{kt}\bm{\beta}_k.
	\end{align*}
\end{proof}

\begin{theor}
	The marginal variance-covariance matrix of 
	$\bm{y}_t = (y_{1t}, y_{2t})^\top$
	of the BGAR model with identity link function $g_k(\cdot)$ for $k=1,2$ is 
    $$
    \mathrm{var}(\bm{y}_{t})=\bm{\Phi}
     \mathrm{I\!E}\left(
     \mathcal{A}^{(2)}(B)
V(\bm{\mu}_t)
     \right),
    $$
    where $\bm{\Phi}=\mathrm{diag}\{\varphi_1,\varphi_2\}$,
    $V(\bm{\mu}_t)= \mathrm{diag}\{V(\mu_{1t}),V(\mu_{2t})\}$, and
    $\mathcal{A}^{(2)}(B):=  \sum_{i=0}^\infty \bm{A}^i\bm{A}^{i^\top}$, assuming that the roots of the polynomial $\det(\bm{I}_2 - \bm{A}_1z - ... - \bm{A}_pz^p)$ lie outside of the unit circle.
\end{theor}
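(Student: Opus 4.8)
The plan is to build directly on the VAR representation established in the proof of the preceding theorem and to reduce the computation of $\mathrm{var}(\bm{y}_t)$ to that of the centered process $\bm{w}_t=(w_{1t},w_{2t})^\top$ with $w_{kt}=y_{kt}-\bm{x}_{kt}^\top\bm{\beta}_k$. Since the covariate term $\bm{x}_{kt}^\top\bm{\beta}_k$ is non-random, $\mathrm{var}(\bm{y}_t)=\mathrm{var}(\bm{w}_t)$, so it suffices to obtain the marginal covariance of $\bm{w}_t$. Recall from~\eqref{eq_var_p_ma} that, under the stated root condition, the stacked $2p$-dimensional process admits the moving-average representation $\bm{W}_t=\sum_{i=0}^\infty \bm{A}^i\bm{V}_{t-i}$. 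Introducing the $2\times 2p$ selection matrix $\bm{J}=[\bm{I}_2,\bm{0},\ldots,\bm{0}]$, so that $\bm{w}_t=\bm{J}\bm{W}_t$ and $\bm{V}_t=\bm{J}^\top\bm{v}_t$, I would rewrite the top block as the bivariate infinite moving average $\bm{w}_t=\sum_{i=0}^\infty(\bm{J}\bm{A}^i\bm{J}^\top)\,\bm{v}_{t-i}$, driven by the martingale errors $\bm{v}_t=(v_{1t},v_{2t})^\top$ with deterministic coefficient matrices.

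Next I would compute the second moments of the driving errors. Because $y_{1t}\vert\mathcal{F}_{t-1}$ and $y_{2t}\vert\mathcal{F}_{t-1}$ are independent and each has conditional variance $\varphi_k V(\mu_{kt})$ from the exponential-family specification, the conditional covariance of $\bm{v}_t$ is diagonal, $\mathrm{var}(\bm{v}_t\vert\mathcal{F}_{t-1})=\bm{\Phi}\,V(\bm{\mu}_t)$. Two orthogonality facts then drive the calculation. First, the martingale-difference property $\mathrm{I\!E}(\bm{v}_t\vert\mathcal{F}_{t-1})=\bm{0}$ together with the tower rule makes $\bm{v}_{t-i}$ and $\bm{v}_{t-j}$ uncorrelated for $i\neq j$, so every cross term in the double sum for $\mathrm{var}(\bm{w}_t)$ vanishes. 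Second, the same zero-conditional-mean property yields the unconditional error covariance by iterated expectation, $\mathrm{var}(\bm{v}_{t-i})=\mathrm{I\!E}[\mathrm{var}(\bm{v}_{t-i}\vert\mathcal{F}_{t-i-1})]=\bm{\Phi}\,\mathrm{I\!E}[V(\bm{\mu}_{t-i})]$, where stationarity of $\{V(\bm{\mu}_t)\}$ makes this expectation independent of $i$.

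Combining these, only the diagonal of the double sum survives, giving $\mathrm{var}(\bm{w}_t)=\sum_{i=0}^\infty \bm{A}^i\,\mathrm{var}(\bm{V}_{t-i})\,\bm{A}^{i^\top}$; substituting the block structure of $\mathrm{var}(\bm{V}_{t-i})$ and re-assembling the powers of $\bm{A}$ into $\mathcal{A}^{(2)}(B)=\sum_{i=0}^\infty\bm{A}^i\bm{A}^{i^\top}$ then produces the claimed $\mathrm{var}(\bm{y}_t)=\bm{\Phi}\,\mathrm{I\!E}(\mathcal{A}^{(2)}(B)\,V(\bm{\mu}_t))$, with absolute convergence of the series guaranteed by the root condition, which forces the spectral radius of $\bm{A}$ to be strictly less than one. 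The main obstacle I anticipate is the conditional heteroskedasticity intrinsic to the model: unlike the classical Gaussian VAR, the innovation covariance $\bm{\Phi}\,V(\bm{\mu}_t)$ is state-dependent and random, so the variance cannot be read off a constant white-noise covariance and instead requires the iterated-expectation argument above. A related delicate point is the bookkeeping needed to keep the diagonal dispersion factor $\bm{\Phi}$ and the (generally non-diagonal) moving-average weights in the correct order while collapsing the $2p$-dimensional sum into the compact $\mathcal{A}^{(2)}(B)$ form.
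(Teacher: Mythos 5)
Your proposal is correct and follows essentially the same route as the paper's proof: the moving-average representation of the companion-form VAR(1), the martingale-difference orthogonality that kills the off-diagonal terms of the double sum, and the iterated-expectation computation $\mathrm{var}(\bm{v}_t)=\bm{\Phi}\,\mathrm{I\!E}(V(\bm{\mu}_t))$ for the state-dependent innovation covariance. Your explicit selection matrix $\bm{J}$ is only a cleaner piece of bookkeeping for extracting the top $2\times 2$ block of $\mathrm{var}(\bm{W}_t)$, which the paper handles implicitly.
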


\begin{proof}
From proof of Theorem 1, we have that $y_{kt}=\mu_{kt}+v_{kt}$. Hence, we need to compute the marginal variance of the martingale errors, $v_{kt}$, to obtain $\mathrm{var}(y_{kt})$.
Since $\mathrm{I\!E}(v_{kt})=0$, it follows that $\mathrm{var}(v_{kt})=\mathrm{I\!E}(v_{kt}^2)$.

Based on this, we can apply the law of iterated expectations to obtain the marginal variance of $v_{kt}$:
\begin{align*}
    \mathrm{I\!E}(v_{kt}^2)=\mathrm{I\!E}\left(\mathrm{I\!E}\left(v_{kt}^2|\mathcal{F}_{t-1}\right)\right)=\mathrm{I\!E}\left(\mathrm{var}(v_{kt}|\mathcal{F}_{t-1})\right)=\varphi \mathrm{I\!E}(V(\mu_{kt})),
\end{align*}
since $ \mathrm{var}(v_{kt}|\mathcal{F}_{t-1})=\mathrm{var}(y_{kt}|\mathcal{F}_{t-1})$. Thus, 
\begin{align}\label{eq_var_vkt}
\mathrm{var}(v_{kt})=\varphi \mathrm{I\!E}(V(\mu_{kt})).
\end{align}

On the other hand, again from Theorem 1,
\begin{align*}
    w_{kt}=y_{kt} - x^\top_{kt}\beta_k.
\end{align*}
Hence, the marginal variance of $y_{kt}$ can be computed as
\begin{align*}
    \mathrm{var}(y_{kt})=  \mathrm{var}(x^\top_{kt}\beta_k+w_{kt})= \mathrm{var}(w_{kt})=\mathrm{I\!E}(w_{kt}^2),
\end{align*}
since $\mathrm{I\!E}(w_{kt})=0$, because $\mathrm{I\!E}(y_{kt})= x^\top_{kt}\beta_k$ by Theorem 1.

Supposing the roots of the polynomial $\det(\bm{I}_2 - \bm{A}_1z - ... - \bm{A}_pz^p)$ lie outside of the unit circle so that the process admits an infinite moving average representation, we have that
\begin{align}\label{eq_cov_wt}
     \mathrm{var}(\bm{W}_t)=\mathrm{I\!E}(\bm{W}_t\,\bm{W}_t^\top)=
\mathrm{I\!E}\left(
    \sum_{i=0}^\infty \bm{A}^i \bm{V}_{t-i} \sum_{j=0}^\infty \bm{V}_{t-j}^\top\bm{A}^{j\top}
     \right)
    =
    \sum_{i=0}^\infty \sum_{j=0}^\infty \bm{A}^i
     \mathrm{I\!E}\left(
     \bm{V}_{t-i} \bm{V}_{t-j}^\top
     \right)\bm{A}^{j\top}.
\end{align}
Note that the martigale errors, $\bm{v}_t$, are uncorrelated by definition. That is,
\begin{align*}
 \mathrm{I\!E}(\bm{v}_{t}\bm{v}_{s}^\top)=
 \begin{cases}
  0  & \text{if } t\neq s\\
  \bm{\Phi}\mathrm{I\!E}(V(\bm{\mu}_t)) & \text{if } t = s,
\end{cases}
\end{align*}
where $\bm{\Phi}=\mathrm{diag}\{\varphi_1,\varphi_2\}$ and $V(\bm{\mu}_t)=\mathrm{diag}\{V(\mu_{1t}),V(\mu_{2t})\}$.
Hence, the only non-zero contribution in the double summation in \eqref{eq_cov_wt} occurs when $i=j$:
\begin{align*}
     \mathrm{var}(\bm{W}_t)=
    \sum_{i=0}^\infty \bm{A}^i
     \mathrm{I\!E}\left(
     \bm{V}_{t-i} \bm{V}_{t-i}^\top
     \right)\bm{A}^{i\top},
\end{align*}
where
\begin{align*}
     \mathrm{I\!E}\left( \bm{V}_{t-i} \bm{V}_{t-i}^\top
     \right)
     =\begin{bmatrix}
         \bm{\Phi}\mathrm{I\!E}(V(\bm{\mu}_{t-i})) & \bm{0} & \ldots & \bm{0} \\
         \bm{0} & \bm{0} &  \ldots  & \bm{0} \\
         \vdots & \vdots & \ddots & \vdots\\
          \bm{0} & \bm{0} &   \ldots  & \bm{0} 
     \end{bmatrix}_{2p \times 2p}.
\end{align*}
Let
$\mathcal{A}^{(2)}(B):=  \sum_{i=0}^\infty \bm{A}^i\bm{A}^{i^\top}B^i$. It follows that
\begin{align*}
     \mathrm{var}(\bm{W}_t)=
       \bm{\Phi}
     \mathrm{I\!E}\left(
     \mathcal{A}^{(2)}(B)
V(\bm{\mu}_{t})
     \right).
\end{align*}
Then, the marginal variance-covariance of $\boldsymbol{y}_t$ is given by:
\begin{align}\label{eq_var_yt}
     \mathrm{var}(\bm{y}_t)=   \bm{\Phi}
     \mathrm{I\!E}\left(
     \mathcal{A}^{(2)}(B)
V(\bm{\mu}_t)
     \right).
\end{align}
\end{proof}

\begin{lemma}\label{lemma}
    For any BGAR$(p_{11},p_{12},p_{22},p_{21})$ model with identity link function, the marginal variance-covariance matrix of $\bm{\mu}_t$ is expressed as
    \begin{align}\label{eq_res_lemma}
        \mathrm{var}(\bm{\mu}_t)=
        \bm{\Phi}\mathrm{I\!E}\left(
          (\mathcal{A}^{(2)}(B)-\bm{I}_2)
V(\bm{\mu}_t)
        \right).
    \end{align}
\end{lemma}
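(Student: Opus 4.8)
The plan is to exploit the additive decomposition $\bm{y}_t = \bm{\mu}_t + \bm{v}_t$ already established in the proof of Theorem~1, together with the two variance expressions obtained along the way, and to recover $\mathrm{var}(\bm{\mu}_t)$ by subtraction. Writing $\bm{v}_t=(v_{1t},v_{2t})^\top$ for the martingale error vector, I would first argue that the variance-covariance matrix of $\bm{y}_t$ splits cleanly as $\mathrm{var}(\bm{y}_t)=\mathrm{var}(\bm{\mu}_t)+\mathrm{var}(\bm{v}_t)$, so that the target matrix is simply $\mathrm{var}(\bm{\mu}_t)=\mathrm{var}(\bm{y}_t)-\mathrm{var}(\bm{v}_t)$.

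The crux of the argument is justifying this split, i.e.\ showing that the cross term $\mathrm{cov}(\bm{\mu}_t,\bm{v}_t)$ vanishes. Here I would invoke the martingale structure: since $\bm{\mu}_t$ is a function of the past information set $\mathcal{F}_{t-1}$ while $\mathrm{I\!E}(\bm{v}_t\mid\mathcal{F}_{t-1})=\bm{0}$ by definition of the martingale errors, the law of iterated expectations gives
\begin{align*}
\mathrm{I\!E}\!\left[(\bm{\mu}_t-\mathrm{I\!E}\bm{\mu}_t)\,\bm{v}_t^\top\right]
=\mathrm{I\!E}\!\left[(\bm{\mu}_t-\mathrm{I\!E}\bm{\mu}_t)\,\mathrm{I\!E}(\bm{v}_t^\top\mid\mathcal{F}_{t-1})\right]=\bm{0},
\end{align*}
and likewise for its transpose. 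Expanding $\mathrm{var}(\bm{y}_t)$ about the common mean (using $\mathrm{I\!E}\bm{v}_t=\bm{0}$) then leaves only the two diagonal blocks, confirming the decomposition.

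Finally I would substitute the two ingredients already computed in the proof of Theorem~2: the full response variance $\mathrm{var}(\bm{y}_t)=\bm{\Phi}\,\mathrm{I\!E}\!\big(\mathcal{A}^{(2)}(B)\,V(\bm{\mu}_t)\big)$ from~\eqref{eq_var_yt}, and the martingale-error variance $\mathrm{var}(\bm{v}_t)=\bm{\Phi}\,\mathrm{I\!E}\!\big(V(\bm{\mu}_t)\big)$, which is the matrix version of~\eqref{eq_var_vkt} (the off-diagonal entries vanishing because $y_{1t}$ and $y_{2t}$ are conditionally independent given $\mathcal{F}_{t-1}$). Subtracting and factoring out $\bm{\Phi}$ and $V(\bm{\mu}_t)$ yields
\begin{align*}
\mathrm{var}(\bm{\mu}_t)=\bm{\Phi}\,\mathrm{I\!E}\!\Big((\mathcal{A}^{(2)}(B)-\bm{I}_2)\,V(\bm{\mu}_t)\Big),
\end{align*}
as claimed. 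The one point requiring care is the bookkeeping of the $i=0$ term: in the block sense used in Theorem~2 the leading coefficient of $\mathcal{A}^{(2)}(B)=\sum_{i\ge0}\bm{A}^i\bm{A}^{i\top}B^i$ contributes exactly $\bm{I}_2$, so subtracting $\mathrm{var}(\bm{v}_t)$ removes precisely this term and accounts for the $-\bm{I}_2$. I expect this identification of the leading term, rather than any heavy computation, to be the main subtlety.
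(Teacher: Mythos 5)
Your proposal is correct and follows essentially the same route as the paper: decompose $\bm{y}_t=\bm{\mu}_t+\bm{v}_t$, use the vanishing cross-covariance to get $\mathrm{var}(\bm{y}_t)=\mathrm{var}(\bm{\mu}_t)+\mathrm{var}(\bm{v}_t)$, and subtract~\eqref{eq_var_vkt} from~\eqref{eq_var_yt}. You actually supply two details the paper leaves implicit — the iterated-expectations argument for $\mathrm{cov}(\bm{\mu}_t,\bm{v}_t)=\bm{0}$ and the identification of the $i=0$ term of $\mathcal{A}^{(2)}(B)$ with $\bm{I}_2$ — which is a welcome tightening rather than a departure.
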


\begin{proof}[Proof of Lemma 1]
    We define $y_{kt} = \mu_{kt} + v_{kt}$, for $k=1,2$. That is, $\bm{y}_t=\bm{\mu}_t+\bm{v}_t$. Since $\bm{v}_t$ is uncorrelated with $\bm{\mu}_t$, it follows that
    \begin{align*}
          \mathrm{var}(\bm{y}_t)= \mathrm{var}(\bm{\mu}_t)+ \mathrm{var}(\bm{v}_t).
    \end{align*}
    Thus, using~\eqref{eq_var_yt} and~\eqref{eq_var_vkt}, we obtain the result~\eqref{eq_res_lemma} of Lemma~\ref{lemma}.
\end{proof}

\begin{corollary}
   The stationary marginal variance of $\bm{y}_t$ for the Poisson-Poisson BGAR model considering identity link functions for each mean is
   \begin{align*}
        \mathrm{var}(\bm{y}_t)=  
     \mathcal{A}^{(2)}(1)\bm{\beta}_0,
   \end{align*}
   where
  $
         \mathcal{A}^{(2)}(1)= \sum_{i=0}^\infty\bm{A}^i \bm{A}^{i\top}
  $.
\end{corollary}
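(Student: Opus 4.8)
The plan is to specialize the general variance formula of Theorem~2 to the Poisson--Poisson case, where the two distribution-specific ingredients simplify drastically. Recall that for the Poisson law written in canonical exponential-family form the dispersion parameter is fixed at $\varphi_k=1$, so $\bm{\Phi}=\mathrm{diag}\{\varphi_1,\varphi_2\}=\bm{I}_2$, and the variance function is the identity, $V(\mu_{kt})=\mu_{kt}$, whence $V(\bm{\mu}_t)=\mathrm{diag}\{\mu_{1t},\mu_{2t}\}$. Substituting these two facts into the conclusion $\mathrm{var}(\bm{y}_t)=\bm{\Phi}\,\mathrm{I\!E}(\mathcal{A}^{(2)}(B)\,V(\bm{\mu}_t))$ of Theorem~2 removes the prefactor $\bm{\Phi}$ and reduces the statement to evaluating $\mathrm{I\!E}(\mathcal{A}^{(2)}(B)\,V(\bm{\mu}_t))$.

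Next I would expand the matrix operator $\mathcal{A}^{(2)}(B)=\sum_{i=0}^{\infty}\bm{A}^i\bm{A}^{i\top}B^i$ and let the backshift act on the (random) diagonal matrix $V(\bm{\mu}_t)$, giving $\mathcal{A}^{(2)}(B)\,V(\bm{\mu}_t)=\sum_{i=0}^{\infty}\bm{A}^i\bm{A}^{i\top}\,V(\bm{\mu}_{t-i})$. Taking expectations and using that the matrices $\bm{A}^i\bm{A}^{i\top}$ are deterministic, linearity of expectation yields $\mathrm{var}(\bm{y}_t)=\sum_{i=0}^{\infty}\bm{A}^i\bm{A}^{i\top}\,\mathrm{I\!E}(V(\bm{\mu}_{t-i}))$.

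The decisive step is to evaluate $\mathrm{I\!E}(V(\bm{\mu}_{t-i}))$ using the stationarity already secured in Theorem~1. Here the linearity of the Poisson variance function is essential: it gives $\mathrm{I\!E}(V(\mu_{k,t-i}))=\mathrm{I\!E}(\mu_{k,t-i})=\mathrm{I\!E}(y_{k,t-i})$, and under the stationarity requirement $\bm{x}_{kt}^\top\bm{\beta}_k=\beta_{k0}$ imposed in Theorem~1 this marginal mean equals the constant $\beta_{k0}$ for every index $t-i$. Consequently $\mathrm{I\!E}(V(\bm{\mu}_{t-i}))=\mathrm{diag}\{\beta_{10},\beta_{20}\}=:\bm{\beta}_0$ is independent of $i$ and factors out of the series, leaving $\mathrm{var}(\bm{y}_t)=\bigl(\sum_{i=0}^{\infty}\bm{A}^i\bm{A}^{i\top}\bigr)\bm{\beta}_0=\mathcal{A}^{(2)}(1)\,\bm{\beta}_0$, which is exactly the asserted identity once one reads $\mathcal{A}^{(2)}(1)$ as the value of $\mathcal{A}^{(2)}(B)$ at $B=1$ and $\bm{\beta}_0$ as the diagonal matrix $\mathrm{diag}\{\beta_{10},\beta_{20}\}$.

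I expect the only real obstacle to be the analytic justification for interchanging the infinite summation with the expectation and for factoring the constant diagonal matrix $\bm{\beta}_0$ out of the sum. This is where the spectral hypothesis of Theorem~2 --- that the roots of $\det(\bm{I}_2-\bm{A}_1z-\cdots-\bm{A}_pz^p)$ lie outside the unit circle --- is used: it forces $\|\bm{A}^i\|\to 0$ geometrically, so $\sum_{i=0}^{\infty}\bm{A}^i\bm{A}^{i\top}$ converges absolutely and a Fubini/dominated-convergence argument licenses the term-by-term expectation and the evaluation at $B=1$. I would stress that the Poisson-specific ingredient doing the real work is the identity $V(\mu)=\mu$; for a nonlinear variance function the quantity $\mathrm{I\!E}(V(\bm{\mu}_{t-i}))$ would depend on higher moments of $\mu_{k,t-i}$ rather than on the constant marginal mean supplied by Theorem~1, so neither the factorization nor the clean evaluation $\mathcal{A}^{(2)}(1)\,\bm{\beta}_0$ would survive.
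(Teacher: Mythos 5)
Your proposal is correct and follows essentially the same route as the paper: specialize Theorem~2 by setting $\bm{\Phi}=\bm{I}_2$ and $V(\bm{\mu}_t)=\bm{\mu}_t$ for the Poisson case, then use the stationary marginal mean $\mathrm{I\!E}(\bm{\mu}_t)=\bm{\beta}_0$ from Theorem~1 to collapse $\mathcal{A}^{(2)}(B)$ to $\mathcal{A}^{(2)}(1)$. You simply spell out the term-by-term expectation and the convergence justification that the paper's two-line proof leaves implicit.
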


\begin{proof}[Proof of Corollary 1]
For the Poisson distribution, we have that $V(\bm{\mu}_t)=\bm{\mu}_t$ and $\bm{\Phi}=\bm{I}_2$.
From~\eqref{eq_var_yt}, we obtain 
\begin{align*}
 \mathrm{var}(\bm{y}_t)=  \bm{I}_2
 \mathcal{A}^{(2)}(B)
     \mathrm{I\!E}\left(
\bm{\mu}_t
     \right)=
     \mathcal{A}^{(2)}(1)\bm{\beta}_0.
\end{align*}
\end{proof}

Note that for other combinations of distributions, the derivation of the marginal variance is analogous to that in Corollary~1. It is only necessary to replace the variance function $V(\bm{\mu}_t)$ appropriately according to each distribution and perform some algebraic manipulations.

\section{Model selection, diagnostic analysis and forecasting}\label{sec_diag_analysis}

This section presents techniques for model selection based on information criteria, which can be used for model identification and some diagnostic analysis tools for the fitted BGAR models. Diagnostic checks are helpful to identify whether the model suitably captures the data dynamics. We also present how to obtain in-sample and out-of-sample forecasting from the proposed model.

\subsection{Model selection}

The BGAR($p_{11},p_{12},p_{22},p_{21}$) model selection requires identifying the autoregressive orders $p_{ij}$, for $i,j=1,2$, that provide the best fit to the data at hand. For this, we need to combine different orders $p_{ij}$ and evaluate what is the best-fitted model according to some criterion. We adopt two  widely used model selection criteria based on the log-likelihood function, such as Akaike’s information criterion (AIC)~\citep{akaike1973} and Bayesian information 
criterion (BIC)~\citep{akaike1973_bic}.
Let $\hat{\ell}$ be the conditional 
log-likelihood function evaluated at the CMLEs, $\ell(\hat{\bm \gamma})$. The AIC and BIC criteria are defined as
\begin{align*}
	\text{AIC}(v) = 2v-2\hat{\ell}
\end{align*}
and
\begin{align*}
	\text{BIC}(v) = v\log (n)-2\hat{\ell},
\end{align*}
respectively,
where
$v$ is the number of estimated parameters. For the Binomial, Negative Binomial, and Poisson distributions, replace $v$ by $v^\star$.
The model with the lower AIC and BIC is selected from a finite set of competing models.

\subsection{Residual analysis}

Residual analysis plays an important role in the goodness-of-fit identification of a fitted model~\citep{fokianos2004partial}. 
For the proposed BGAR model, we consider the quantile residuals and randomized quantile residuals~\citep{dunn1996}, which are general tools widely used for diagnosing lack of fit in several classes of models.

Let $F_k(y_{kt};\bm{\gamma}|\mathcal{F}_{t-1})$ be the conditional cumulative distribution function of conditional density~\eqref{EQ_pdf_expfam}. In the case in which $F$ is continuous $F_k(y_{kt};\bm{\gamma}|\mathcal{F}_{t-1})$ has a uniform distribution on the unit interval, and the quantile residuals are 
defined as
\begin{align*}
	r_{kt}^q=\Phi^{-1}\left[F_k(y_{kt};\hat{\bm{\gamma}}|\mathcal{F}_{t-1})\right],
\end{align*}
where $\Phi^{-1}(\cdot)$ is the quantile function of the standard normal distribution.

A randomization is needed to produce continuously distributed residuals when the response variable is discrete. In this case, if $F$ is not continuous, the randomized quantile residuals for $y_{kt}$ are given by
\begin{align*}
	r_{kt}^{rq}=\Phi^{-1}(u_{kt}),
\end{align*}
where $u_{kt}$ is a uniform random variable on the interval $(a_{kt},b_{kt}]$, being 
$a_{kt}=F_k(y_{kt}-1;\hat{\bm{\gamma}}|\mathcal{F}_{t-1})$
and
$b_{kt}=F_k(y_{kt};\hat{\bm{\gamma}}|\mathcal{F}_{t-1})$. 
If $\hat{\bm{\gamma}}$ is consistently estimated, the residuals $r_{kt}^{rq}$ and $r_{kt}^{q}$ converge to the standard normal \citep{dunn1996}. 
This means that under the correct model specification, the residuals are serially uncorrelated and are normally distributed with zero mean and unit variance.

We also consider composite residuals analogous to those used for model diagnostics in Dirichlet regression~\citep{gueorguieva2008dirichlet}. The composite residuals for the BGAR model are defined as follows:
\begin{align*}
    r_t = (r_{1t}^{(\cdot)})^2 + (r_{2t}^{(\cdot)})^2,
\end{align*}
which combines the squared residuals from both series into a single measure. Under the assumption of independence and standard normality of the individual residuals, \( r_t \) follows a chi-squared distribution with two degrees of freedom. 

Residual analysis can be performed using graphical inspections. First, QQ-plots are useful for checking the normality assumption of the quantile residuals. 
Second,  if the model captured all the serial correlation, the residual autocorrelation function (ACF) plot for $r_{kt}^{(\cdot)}$, $k=1,2$, must indicate that all autocorrelations are statistically non-significant.
Third, a plot of \( r_t \) against the time index may be used to verify the model fit, where most points should lie below the 95th quantile of the chi-squared distribution. The absence of systematic patterns or clustering in this plot indicates that the model adequately captures the joint dynamics of the series.
Finally, significant correlations at any lag in the residual cross-correlation function (CCF)~\citep{box2015time} plot may indicate that the dependence between the series has not been successfully captured or that a more complex model structure is needed. Thus, the model is appropriate only if there is no significant residual cross-correlation. 

\subsection{Forecasting}\label{sec_predict_forec}

Forecasting from the BGAR($p_{11},p_{12},p_{22},p_{21}$) model also can be determined using the similar theory of GARMA
models of univariate time series analysis.
The predicted values of $\bm{y}_k$ (in-sample forecasting) are denoted by $\hat{\bm{\mu}}_k$, and are obtained from~\eqref{EQ_model_mu} evaluated on the CMLE $\hat{\bm{\gamma}}$. The predicted values from a fitted BGAR($p_{11},p_{12},p_{22},p_{21}$) model are expressed as
\begin{align*}
	&	\hat{\mu}_{1t}=
	g_1^{-1}\left(
	\bm{x}_{1t}^\top\bm{\hat{\beta}}_{1}+
	\sum_{i=1}^{p_{11}}\hat{\phi}_{11,i}
	\left[g_1(y_{1t-i})-\bm{x}_{1t-i}^\top\bm{\hat{\beta}}_{1}\right]+
	\sum_{j=1}^{p_{12}}
	\hat{\phi}_{12,j}
	\left[g_2(y_{2t-j})-\bm{x}_{2t-j}^\top\bm{\hat{\beta}}_{2}\right]
	\right)
	\\ \nonumber
	&
\hat{\mu}_{2t}=	g_2^{-1}\left(
	\bm{x}_{2t}^\top\bm{\hat{\beta}}_{2}+
	\sum_{l=1}^{p_{22}}\hat{\phi}_{22,l}
	\left[g_2(y_{2t-l})-\bm{x}_{2t-l}^\top\bm{\hat{\beta}}_{2}\right]+
	\sum_{s=1}^{p_{21}}
\hat{\phi}_{21,s}
	\left[g_1(y_{1t-s})-\bm{x}_{1t-s}^\top\bm{\hat{\beta}}_{1}\right]
	\right),
\end{align*}
where $t=m+1,\ldots,n$.

Since a fitted model passes in all checks diagnostic analysis, it can be used to obtain out-of-sample forecasts for the time series $\bm{y}_k$ 
 at hand. 
Let $h_0$ be the desirable forecast horizon.
When we observed the series until time $n$, the forecasted values of $\hat{\mu}_{k(n+h)}$, for $h=1,2,\ldots,h_0$, are sequentially computed as
\begin{align*}
	&	\hat{\mu}_{1(n+h)}=
	g_1^{-1}\left(
	\bm{x}_{1(n+h)}^\top\bm{\hat{\beta}_{1}}+
	\sum_{i=1}^{p_{11}}\hat{\phi}_{11,i}
	\left[g_1^\star(y_{1(n+h-i)})-\bm{x}_{1(n+h-i)}^\top\bm{\hat{\beta}_{1}}\right]+
	\sum_{j=1}^{p_{12}}
	\hat{\phi}_{12,j}
	\left[g_2^\star(y_{2(n+h-j)})-\bm{x}_{2(n+h-j)}^\top\bm{\hat{\beta}_{2}}\right]
	\right)
	\\ \nonumber
	&
	\hat{\mu}_{2(n+h)}=	g_2^{-1}\left(
	\bm{x}_{2(n+h)}^\top\bm{\hat{\beta}_{2}}+
	\sum_{l=1}^{p_{22}}\hat{\phi}_{22,l}
	\left[g_2^\star(y_{2(n+h-l)})-\bm{x}_{2(n+h-l)}^\top\bm{\hat{\beta}_{2}}\right]+
	\sum_{s=1}^{p_{21}}
	\hat{\phi}_{21,s}
	\left[g_1^\star(y_{1(n+h-s)})-\bm{x}_{1(n+h-s)}^\top\bm{\hat{\beta}_{1}}\right]
	\right),
\end{align*}
where 
\begin{align*}
	g_k^\star(y_{kt})=\left\{
	\begin{array}{ll}
		g_k(\hat{\mu}_{kt}) &\, \text{if} \,\,\, t >n,\\
		g_k(y_{kt})  & \, \text{if} \,\,\, t \leq n,\\
	\end{array}
	\right.
\end{align*}
for $k=1,2$. The covariates $x_{kt}$ are easily computed for $t>n$ since we suppose they are deterministic functions of $t$, such as sines and cosines in harmonic analysis, dummy variables, or polynomial trends.

Both the in-sample and out-of-sample forecasts can be assessed by considering the same measures of forecast accuracy typically applied
to univariate time series data. 
To assess out-of-sample forecasts, the last $h_0$ observations were removed of the sample and the BGAR($p_{11},p_{12},p_{22},p_{21}$) model was fitted with the $n-h_0$  remaining observations. 
We consider the following measures:
root mean squared error (RMSE),
mean absolute error (MAE), 
and
mean absolute percentage error (MAPE)
~\citep{hyndman2006,hyndman2018forecasting},
that are computed as
\begin{align*}
	\text{RMSE}_k = \sqrt{\frac{1}{N}\sum_{t=n+1}^N e_{kt}^2, }\quad 
	\quad
	\text{MAE}_k = \frac{1}{N}\sum_{t=n+1}^N |e_{kt}|, 
	\quad
\quad\text{and} 
\quad
	\text{MAPE}_k = \frac{1}{N}\sum_{t=n+1}^N|p_{kt}|, 
\end{align*}
where 
$e_{kt}:=y_{kt} - \hat{\mu}_{kt}$ is the forecast error,
$p_{kt}:=100\times {e_{kt}}/{y_{kt}}$ is the  percentage error, 
for $k = 1,2$, and
 $N=n+h_0$.
It is worth noting that, for the discrete distributions, the MAE$_k$ and MAPE$_k$ measures can be used only since $y_{kt}\neq 0$ for all $t$.

These quantities are useful for empirical comparisons of competing models and for assessing the forecast performance of the proposed model. 
In particular, the MAPE is independent of the scale of the data. Thus, it
can be used to compare the forecast accuracy of time series with different units~\citep{hyndman2006}.
All these measures can be easily obtained in \verb|R| language using the \verb|accuracy| function from \verb|forecast|~package~\citep{hyndman2015forecasting,hyndman2008automatic}. The model that provides the forecasts with more accuracy presents the smallest RMSE, MAE, 
and
MAPE
values.

\section{Numerical evaluation}\label{sec_simu}

We carry out Monte Carlo simulation studies to evaluate the finite sample performance of the CMLE. 
We consider $R=10,000$ Monte Carlo replications and sample sizes $n\in\{100,250,500\}$.  
In the models with random component $y_{kt}$ ($k=1,2$) following a discrete conditional distribution, we establish a threshold of $c=0.1$ in the simulation study and real data application in Section~\ref{sec_app}. 
This means, in~\eqref{EQ_model_mu}, we replace $y_{kt-i}$ (and similarly $y_{kt-j}$,  $y_{kt-l}$, $y_{kt-s}$) by $c$ if $y_{kt-i}=0$.
We compute the mean of all estimates,
percentage relative bias (RB\%), and mean square error (MSE) to evaluate the point estimates' performance. Moreover, the coverage rates (CR) from confidence intervals are reported to assess the performance of interval estimation. The CR is determined by the proportion of Monte Carlo replicas in which the parameter is within the confidence interval.
The limits of the confidence intervals are computed
from~\eqref{EQ_CI}, considering a significance level of $\alpha=5\%$.
The optimization of conditional log-likelihood function was performed using the BFGS quasi-Newton method, incorporating analytical first derivatives.
Starting values for the autoregressive parameters
were obtained as described at the end of Section~\ref{sec_score_vector}. All the simulations were carried out using the \verb|R|
statistical computing environment~\citep{R_ref}.

We adopted settings similar to the application study (see Section~\ref{sec_app}) and several other scenarios with different combinations of conditional distributions for the random component. Specifically, we considered a  BGAR model with Negative Binomial (NB) conditional distributions for each series,  denoted as $\bm{y}_{k}=(y_{1t}, y_{2t})^\top$,  and logarithmic link function for $g_k(\cdot)$. In each Monte Carlo replication, we simulated two vectors of length $n$, representing observations of the variables $y_{kt}$, from the following BGAR models:
\begin{itemize}
	\item NB-NB-BGAR($1,1,1,1$) model
	with a seasonal covariate: it includes an intercept and a cosine term, $\bm{x_{kt}}=(\mathrm{cos}(2\pi t/12))$ to capture seasonality with a period of $12$ time units. We consider the precision parameters $\kappa_1=12$ and  $\kappa_2=20$.
	\item NB-NB-BGAR$(1,1,1,1)$ model without covariates: it does not include any exogenous variables, and the precision parameters are $\kappa_1=12$ and  $\kappa_2=10$.
\end{itemize}

Table~\ref{tab_nbnbgar1111_cov} presents the selected parameter values for these models and the simulation results from the two scenarios. Including covariates in the NB-NB-BGAR model generally leads to smaller RB\% values (closer to zero) than the model without covariates, suggesting that covariates improve the accuracy of parameter estimates. As the sample size increases, both models exhibit a decrease in RB\% and MSE values for the most parameter estimates. This is expected since the CMLE's bias and variance decrease asymptotically as the sample size increases. Finally, regarding interval estimation, we observe that the CR generally approaches $95\%$ for both models as the sample size increases, indicating that the CR converges to nominal level.

\begin{table}[!htbp] 
	\small
	\centering 
	\caption{
    Simulation results from NB-NB-BGAR($1,1,1,1$) model with and without covariates.
     } 
	\label{tab_nbnbgar1111_cov} 
	\begin{tabular}{@{\extracolsep{5pt}} lrrrrrrrr} 
		\\[-1.8ex]
		\hline \\[-1.8ex] 
		\multicolumn{9}{c}{Parameter values}\\
		\hline\\[-1.8ex] 
		Measures	& $\beta_{10}$ & $\beta_{11}$  & $\beta_{20}$ & $\beta_{21}$  & $\phi_{11}$ & $\phi_{12}$ & $\phi_{22}$ & $\phi_{21}$ \\ 	
		& $3.5$ & $1.4$  & $3.0$ & $0.7$  & $0.3$ & $-0.1$ & $0.2$ & $0.2$ \\ 
        		\hline\\[-1.8ex] 
		\multicolumn{9}{c}{With covariates}\\
		\hline\\[-1.8ex] 
		\multicolumn{9}{c}{$n=100$}\\
		\hline\\[-1.8ex] 
	Mean   & $3.4953$ & $1.4014$ & $2.9949$ & $0.7019$ & $0.2733$ & $-0.1038$ & $0.1768$ & $0.2056$ \\
RB\%   & $-0.1348$ & $0.0985$ & $-0.1688$ & $0.2686$ & $-8.9057$ & $3.8219$ & $-11.6095$ & $2.8144$ \\
MSE    & $0.0028$ & $0.0045$ & $0.0019$ & $0.0032$ & $0.0092$ & $0.0103$ & $0.0088$ & $0.0076$ \\
CR     & $0.9386$ & $0.9421$ & $0.9444$ & $0.9451$ & $0.9431$ & $0.9524$ & $0.9499$ & $0.9503$ \\
\hline\\[-1.8ex]
\multicolumn{9}{c}{$n=250$} \\
\hline\\[-1.8ex]
Mean   & $3.4993$ & $1.3994$ & $2.9984$ & $0.7002$ & $0.2890$ & $-0.1015$ & $0.1909$ & $0.2015$ \\
RB\%   & $-0.0189$ & $-0.0395$ & $-0.0526$ & $0.0256$ & $-3.6556$ & $1.4999$ & $-4.5409$ & $0.7486$ \\
MSE    & $0.0011$ & $0.0018$ & $0.0008$ & $0.0012$ & $0.0033$ & $0.0039$ & $0.0033$ & $0.0028$ \\
CR     & $0.9538$ & $0.9492$ & $0.9510$ & $0.9550$ & $0.9532$ & $0.9529$ & $0.9527$ & $0.9557$ \\
\hline\\[-1.8ex]
\multicolumn{9}{c}{$n=500$} \\
\hline\\[-1.8ex]
Mean   & $3.4997$ & $1.3990$ & $2.9995$ & $0.6998$ & $0.2931$ & $-0.1015$ & $0.1946$ & $0.2003$ \\
RB\%   & $-0.0095$ & $-0.0681$ & $-0.0173$ & $-0.0319$ & $-2.3112$ & $1.5088$ & $-2.6788$ & $0.1535$ \\
MSE    & $0.0005$ & $0.0009$ & $0.0004$ & $0.0006$ & $0.0017$ & $0.0019$ & $0.0017$ & $0.0014$ \\
CR     & $0.9550$ & $0.9553$ & $0.9513$ & $0.9581$ & $0.9519$ & $0.9522$ & $0.9525$ & $0.9570$ \\
		\hline\\[-1.8ex] 
		\multicolumn{9}{c}{Without covariates}\\
		\hline\\[-1.8ex] 
		\multicolumn{9}{c}{$n=100$}\\
\hline\\[-1.8ex]
Mean   & $3.4966$ & $-$ & $2.9949$ & $-$ & $0.2765$ & $-0.1050$ & $0.1814$ & $0.2043$ \\
RB\%   & $-0.0970$ & $-$ & $-0.1694$ & $-$ & $-7.8326$ & $5.0348$ & $-9.3117$ & $2.1693$ \\
MSE    & $0.0024$ & $-$ & $0.0026$ & $-$ & $0.0087$ & $0.0064$ & $0.0084$ & $0.0119$ \\
CR     & $0.9372$ & $-$ & $0.9420$ & $-$ & $0.9483$ & $0.9532$ & $0.9520$ & $0.9524$ \\
\hline\\[-1.8ex]
\multicolumn{9}{c}{$n=250$} \\
\hline\\[-1.8ex]
Mean   & $3.4994$ & $-$ & $2.9984$ & $-$ & $0.2922$ & $-0.1012$ & $0.1937$ & $0.2018$ \\
RB\%   & $-0.0158$ & $-$ & $-0.0534$ & $-$ & $-2.5948$ & $1.1692$ & $-3.1634$ & $0.8919$ \\
MSE    & $0.0010$ & $-$ & $0.0010$ & $-$ & $0.0034$ & $0.0024$ & $0.0033$ & $0.0045$ \\
CR     & $0.9445$ & $-$ & $0.9502$ & $-$ & $0.9489$ & $0.9493$ & $0.9525$ & $0.9538$ \\
\hline\\[-1.8ex]
\multicolumn{9}{c}{$n=500$} \\
\hline\\[-1.8ex]
Mean   & $3.4994$ & $-$ & $2.9992$ & $-$ & $0.2962$ & $-0.1012$ & $0.1965$ & $0.2003$ \\
RB\%   & $-0.0163$ & $-$ & $-0.0269$ & $-$ & $-1.2687$ & $1.2467$ & $-1.7526$ & $0.1444$ \\
MSE    & $0.0005$ & $-$ & $0.0005$ & $-$ & $0.0016$ & $0.0012$ & $0.0017$ & $0.0023$ \\
CR     & $0.9469$ & $-$ & $0.9512$ & $-$ & $0.9536$ & $0.9501$ & $0.9525$ & $0.9499$ \\
\hline\\[-1.8ex]
	\end{tabular} 
\end{table}

To define the other schemes and combinations of conditional distributions for each series, we use the following abbreviations for the distributions:  Poisson (P), normal (N), gamma (G), and inverse normal (IN). Table~\ref{tab_others_mc_simu} displays the other five considered scenarios. It presents each model's conditional distributions, link functions, and parameter values. 

\begin{table}[!htbp] 
	\scriptsize
	\centering 
	\caption{Other simulation scenarios considering different models and combinations of parameters.} 
	\label{tab_others_mc_simu} 
	\begin{tabular}{@{\extracolsep{5pt}} lllllll} 
		\\[-1.8ex]
		\hline \\[-1.8ex] 
		Model 
		 & Distribution & Link & Intercept & \multicolumn{2}{c}{Autoregressive terms} & $\varphi_k$ \\ 
		\hline \\[-1.8ex] 
	\multirow{2}{*}{G-G-BGAR($1,1,1,1$)}&   Gamma  & $\log$  &  $\beta_{10}=2.8$  & $\phi_{11}=0.3$ & $\phi_{12}=0.3$ &  $\varphi_{1}=0.2$  \\
		& Gamma  & $\log$ &  $\beta_{20}=3.1$  & $\phi_{22}=0.1$ & $\phi_{21}=-0.2$  &$\varphi_{2}=0.4$   \\
		\hline \\[-1.8ex] 
		\multirow{2}{*}{N-N-BGAR($2,1,2,1$)}&   Normal  & id. &  $\beta_{10}=0.7$  & $\bm{\phi_{11}}=(-0.6,0.2)^\top$ & $\phi_{12}=-0.2$  & $\varphi_{1}=14$ \\
		&  Normal  & id. &  $\beta_{20}=0.5$  & $\bm{\phi_{22}}=(0.5,-0.1)^\top$ & $\phi_{21}=0.1$ &$\varphi_{2}=16$   \\
		\hline\\[-1.8ex] 
		\multirow{2}{*}{N-G-BGAR($1,0,1,1$)}&   Normal  & id. &  $\beta_{10}=2.5$  & $\phi_{11}=0.2$ & $-$  & $\varphi_{1}=5$ \\
		&  Gamma  & $\log$ &  $\beta_{20}=2.1$  & $\phi_{22}=0.3$ & $\phi_{21}=0.2$& $\varphi_{2}=0.5$   \\
		\hline \\[-1.8ex] 
		\multirow{2}{*}{P-IN-BGAR($1,1,1,1$)}&  Poisson  & $\log$  &  $\beta_{10}=1.8$  & $\phi_{11}=0.3$ & $\phi_{12}=0.1$ &$-$    \\
		&   I. Normal  & $\log$ &  $\beta_{20}=0.1$  & $\phi_{22}=0.2$ & $\phi_{21}=0.2$ & $\varphi_{2}=0.5$   \\
		\hline \\[-1.8ex] 
		\multirow{2}{*}{P-N-BGAR($2,2,1,0$)}&   Poisson  & $\log$  &  $\beta_{10}=3.1$  & $\bm{\phi_{11}}=(0.3,-0.1)^\top$ &  $\bm{\phi_{12}}=(-0.2,0.1)^\top$ & $-$  \\
		&   Normal  & id. &  $\beta_{20}=1.5$  & $\phi_{22}=0.3$ & $-$ & $\varphi_{2}=9$    \\
		\hline \\[-1.8ex] 
	\end{tabular} 
\end{table}

\begin{figure}[!htpb]
	\centering
	\includegraphics[width=1.02\textwidth]{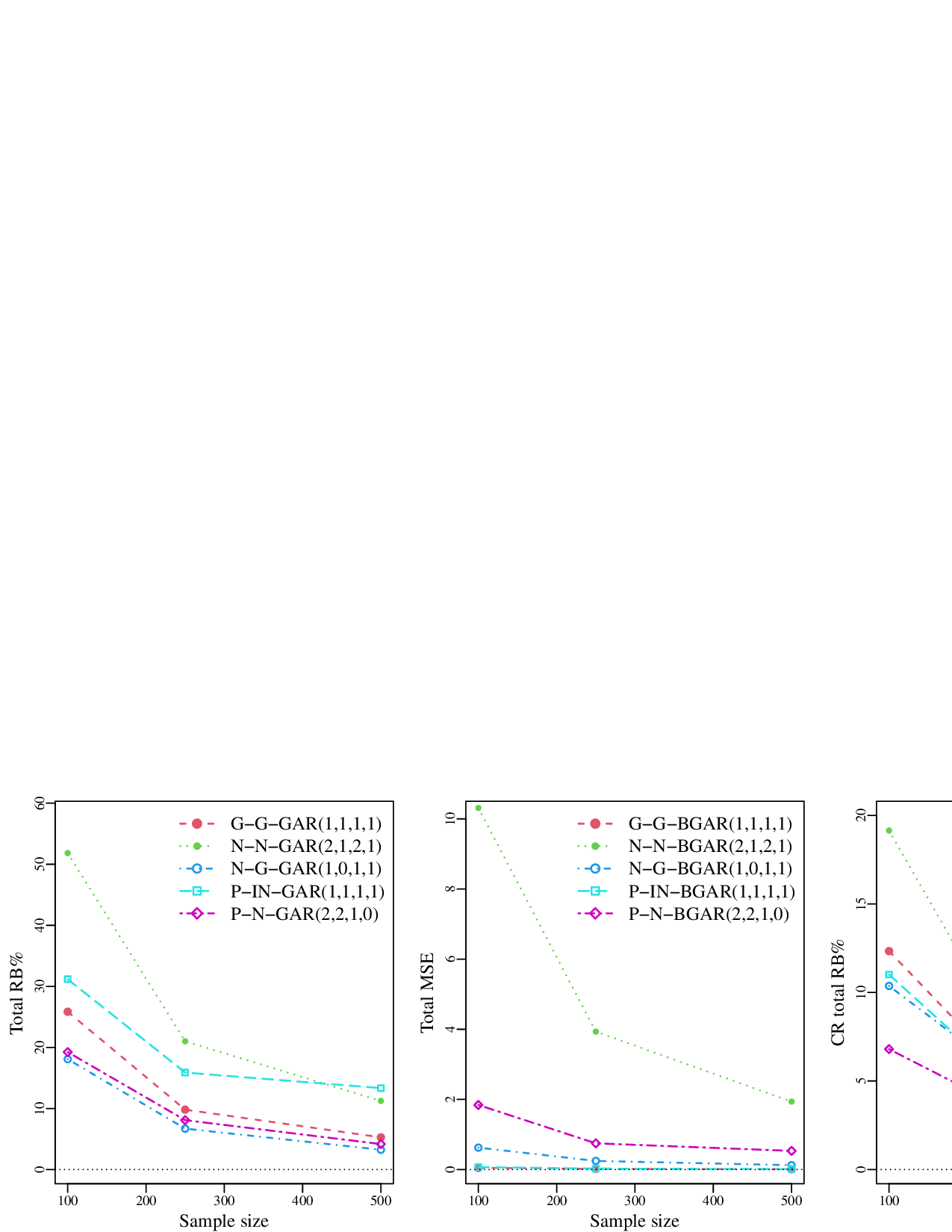}
	\caption{Monte Carlo simulation results from other considered scenarios.}
	\label{fig_res_other_mc_simu}
\end{figure}

For Table~\ref{tab_others_mc_simu} scenarios, we compute the sum of the RB\% and MSE of the estimates of all the parameters for each model and sample size, defined as total RB\% (in absolute value) and MSE. Similarly, we compute the total RB$\%$ of the CR expected of $0.95$ from confidence intervals of each model's parameters (CR total RB$\%$). It is defined as the sum of the absolute relative biases of the computed CR values, with respect to the expected CR of $0.95$, across all parameters.
This metric quantifies the overall deviation of the empirical CR from the expected 95\% level.
Figure~\ref{fig_res_other_mc_simu} presents the simulation results considering these measures. We observe that the total RB\% and MSE decrease for all considered models as the sample size increases. Besides, the CR total RB$\%$ tend to zero as sample size increases in all schemes.

\section{An application to cases and hospitalizations due to Leptospirosis}\label{sec_app}

Leptospirosis is a zoonotic bacterial disease caused by spirochete bacteria from the genus Leptospira~\citep{terpstra2003human}.
It represents a global health concern, especially in regions with warm and humid climates~\citep{bharti2003leptospirosis}. The disease is transmitted to humans through contact with water or soil contaminated by the urine of infected animals~\citep{bharti2003leptospirosis}. It can manifest with a wide range of symptoms, including fever, muscle aches, jaundice, and potential kidney or liver failure. Early diagnosis and prompt antibiotic treatment are essential for a favorable prognosis.

In São Paulo, Brazil, leptospirosis presents a substantial public health challenge, being considered an endemic disease. The tropical climate and urbanization of the state create an environment conducive to the spread of the disease, resulting in a higher risk of exposure to contaminated water sources. São Paulo has experienced outbreaks of leptospirosis, particularly during periods of heavy rainfall and flooding, when the risk of exposure to contaminated water is increased~\citep{blanco2015fifteen, de2022epidemiologia}. 
Understanding its prevalence, transmission dynamics, and implementing preventive measures are important steps to safeguard the health of the local population.

In this application, we analyze the time series of the monthly number of leptospirosis confirmed cases ($y_1$) and hospitalizations due to leptospirosis  ($y_2$) in São Paulo state, Brazil, from January 2001 to December 2022 (totalizing $264$ observations). 
The data are publicly available from the Department of Informatics of the Brazilian Unified Health System (DATASUS) at~\citep{DATASUS}.
Our primary interest is to consider the leptospirosis cases series to forecast hospitalizations.
To assess the out-of-sample forecasting performance of the proposed model, we set aside the last $12$ months from each time series and fit the model using the remaining $252$ observations.

Table~\ref{tab_desc_stat_leptospirosis} presents descriptive statistics of the leptospirosis cases and hospitalizations series. 
 The number of cases series has a median of $42.00$ and a mean of $58.15$, while the hospitalizations series has a median of $27.00$ and a mean of $35.15$. Furthermore, the variances and quartile values indicate the data spread, with the hospitalization time series showing a narrower interquartile range than the number of cases time series.

\begin{table}[!htbp] \centering 
	\caption{Descriptive statistics of monthly cases and hospitalizations ($n=264$).} 
	\label{tab_desc_stat_leptospirosis} 
	\begin{tabular}{@{\extracolsep{5pt}} l|rrrrrrr} 
	\hline 
	 \\[-1.8ex] 
	Response &	Min. & 1st Qu. & Median & Mean & 3rd Qu. & Max. & Variance \\ 
		\hline \\[-1.8ex]  
Cases &	 $7.00$ & $28.00$ & $42.00$ & $57.31$ & $75.50$ & $279.00$ & $1,682.69$ \\ 
Hospitalizations& $4.00$ & $19.00$ & $26.50$ & $34.66$ & $47.00$ & $164.00$ & $495.49$ \\ 
		\hline 
	\end{tabular} 
\end{table}

Figure~\ref{fig_y1_stl_plot} illustrates the seasonal decomposition of leptospirosis cases series using loess (STL), showcasing its seasonal, trend, and residual components; see \citep{cleveland1990stl}.
We can observe a monthly seasonality effect by examining the seasonal subseries plot of leptospirosis cases. 
According to the monthplot in Figure~\ref{fig_y1_monthplot}, the number of cases is higher from January to March, decreases from April, and then increases again from September.
This increase occurs in the rainy season in São Paulo.
%
Figures~\ref{fig_y2_stl_plot} and \ref{fig_y2_monthplot}
 display the STL and seasonal subseries
 for the hospitalizations series. According to Figure~\ref{fig_y2_monthplot}, this series also exhibits monthly seasonality, with hospitalizations by leptospirosis increasing from January to March, decreasing from April to October, and then rising again from November.

\begin{figure}[!htpb]
	\centering
	\subfigure[ Seasonal trend	decomposition of the number of observed cases of leptospirosis.]{
		\includegraphics[width=.45\linewidth]{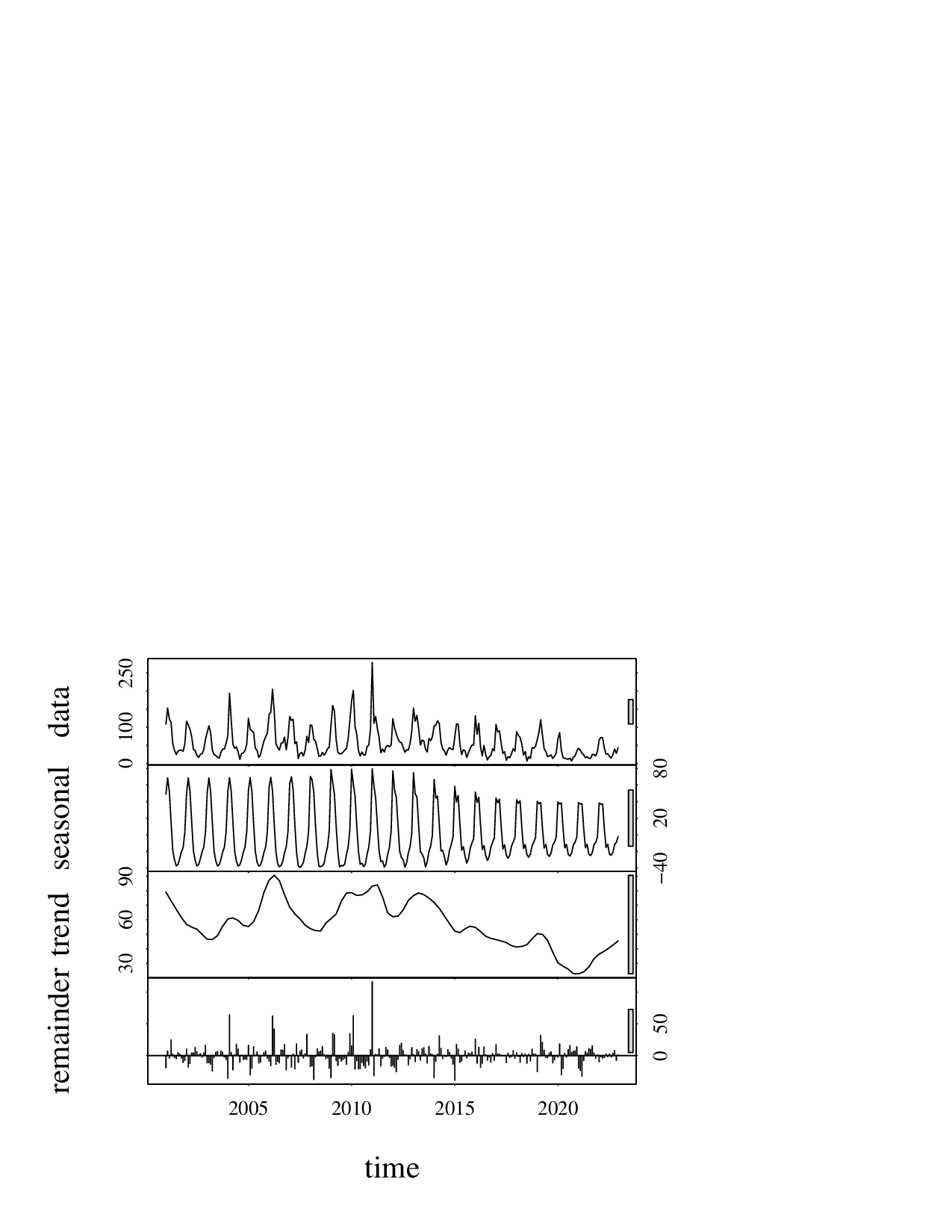}\label{fig_y1_stl_plot}
	}
	\subfigure[Monthplot of the number of observed cases of leptospirosis.]{\includegraphics[width=.41\linewidth]{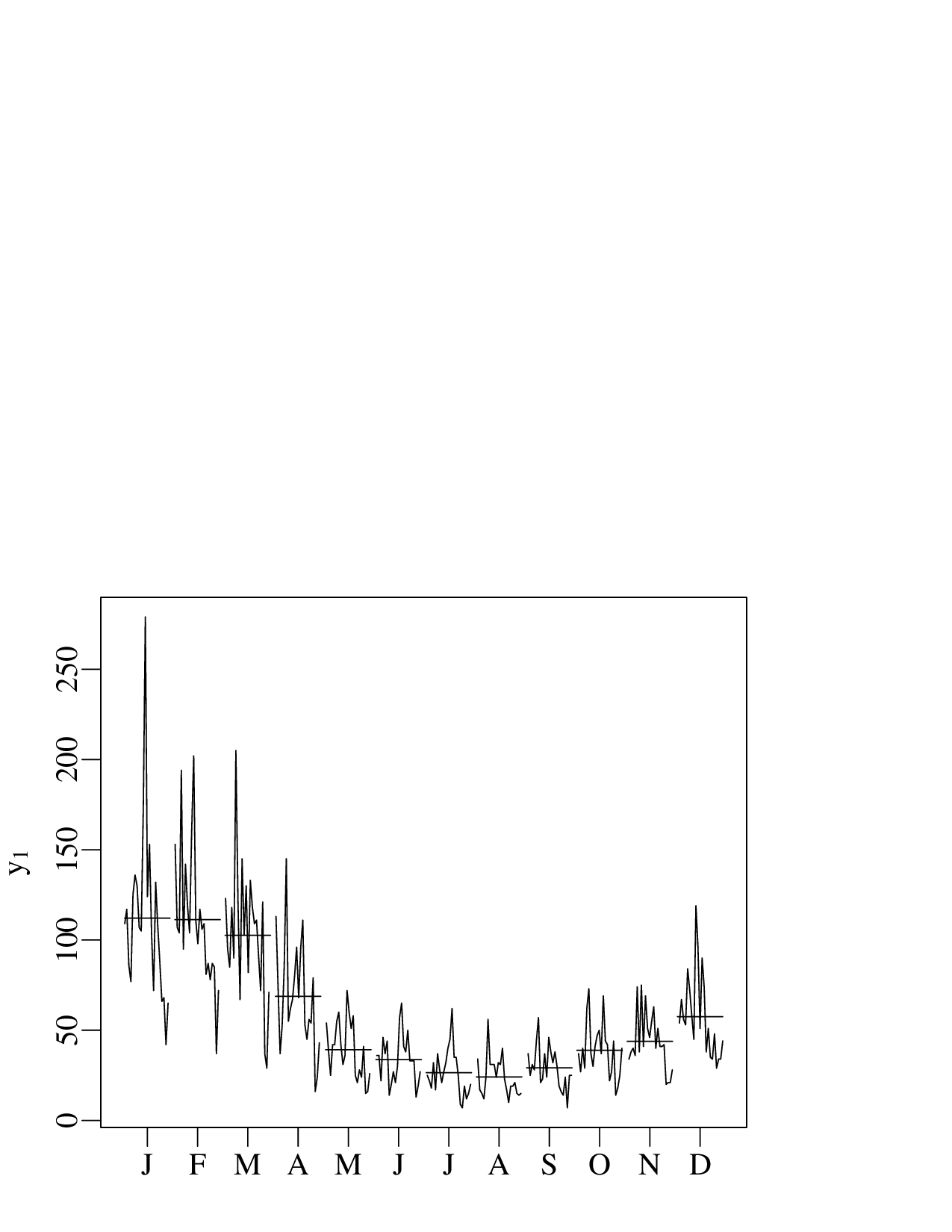}
		\label{fig_y1_monthplot}}
		\subfigure[ Seasonal trend	decomposition of the number of hospitalizations by leptospirosis.]{
		\includegraphics[width=.45\linewidth]{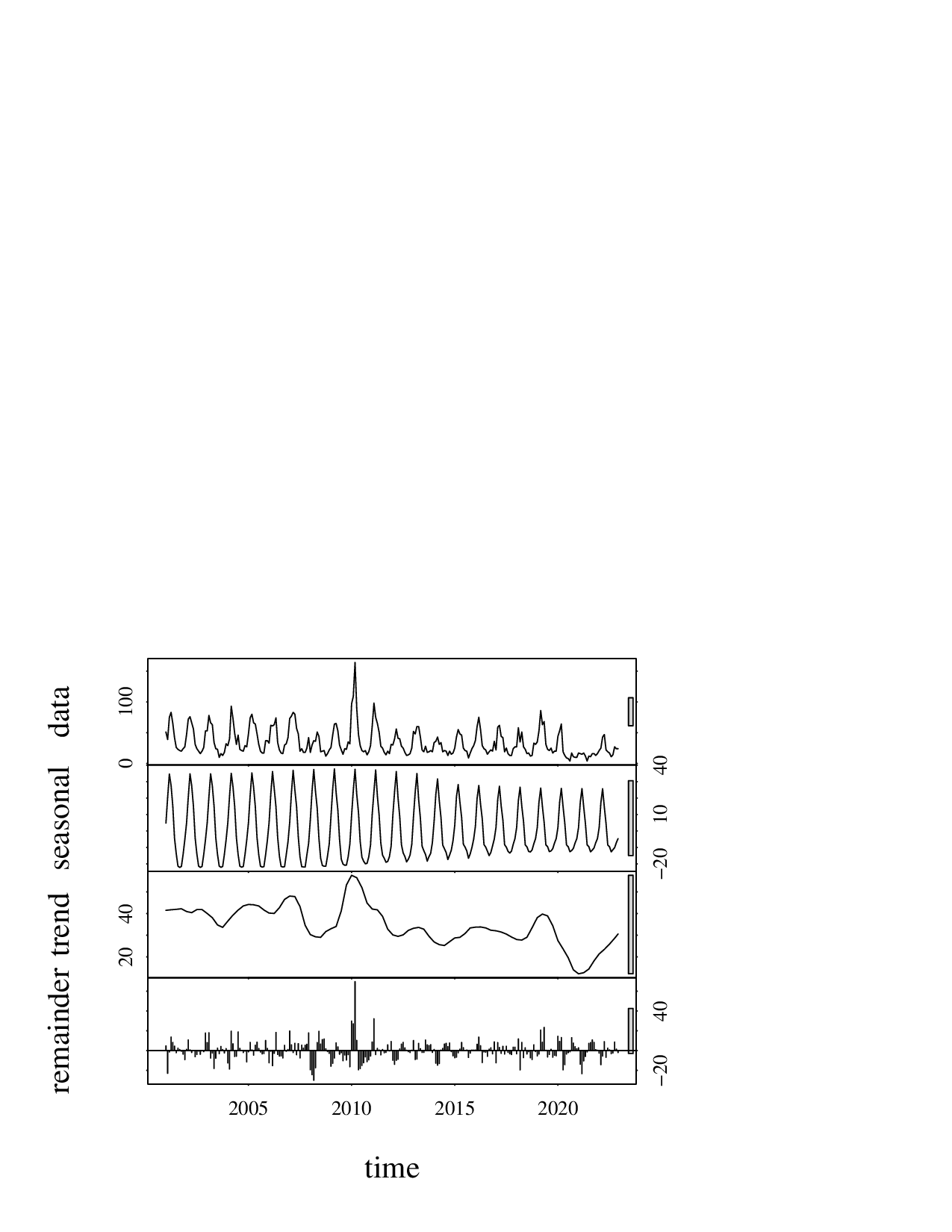}\label{fig_y2_stl_plot}
	}
	\subfigure[Monthplot of the number of hospitalizations by leptospirosis.]{\includegraphics[width=.42\linewidth]{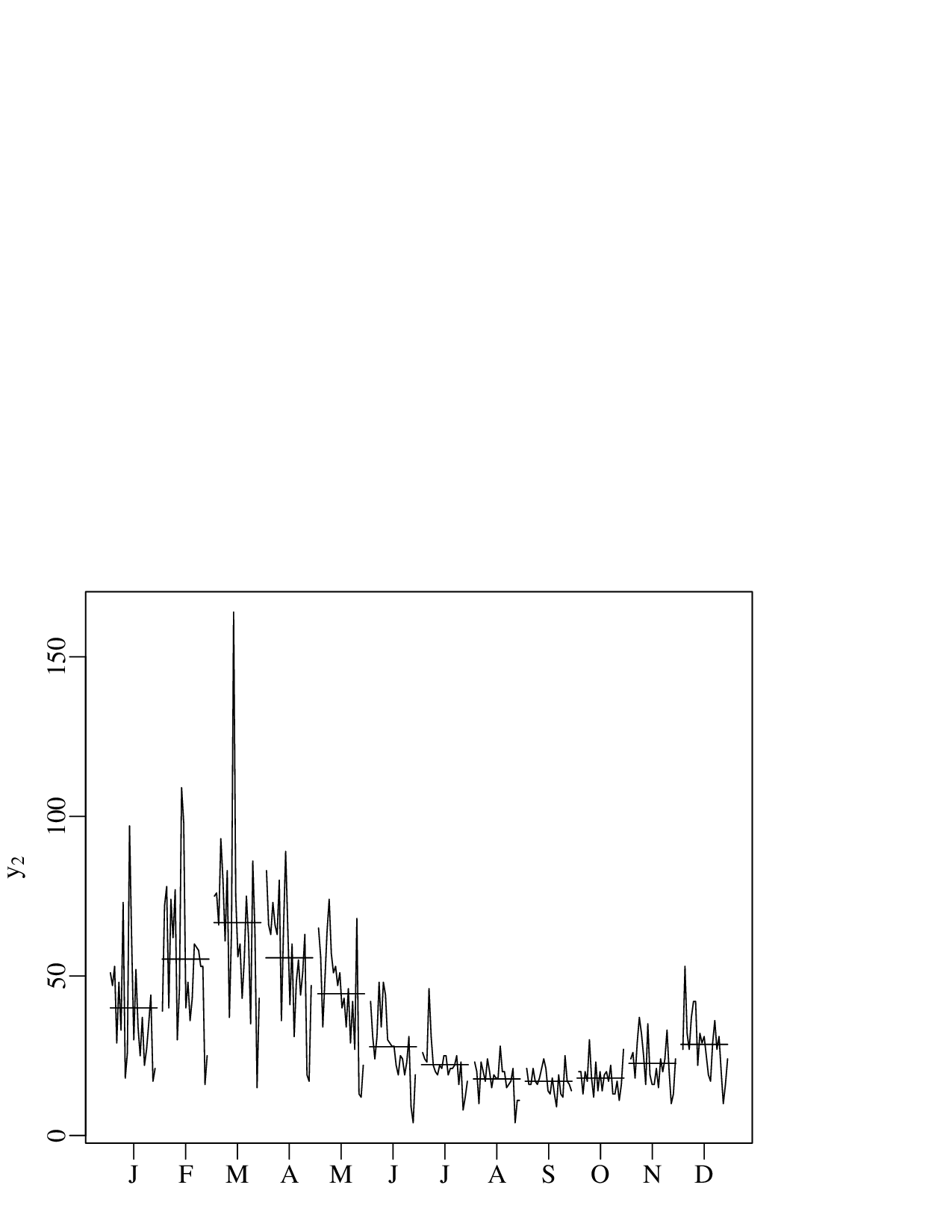}
		\label{fig_y2_monthplot}}
		\label{fig_y1_leptosp}
	\caption{Seasonal trend decomposition and monthplot of the number of observed cases and hospitalizations by leptospirosis in São Paulo, Brazil.} 
\end{figure}


Figure~\ref{fig_y1_y2_plots} combines both the series of the number of cases and hospitalizations due to leptospirosis in a single graph, enabling a visual analysis of their relationship.
Notice that an increase in the number of cases increases hospitalizations series.
Moreover, for visually analyzing the correlation between these time series, we present the CCF plot in Figure~\ref{fig_y1_y2_ccf}.
We can observe that the highest correlation is at lag $-1$, corresponding to a positive correlation. Hence, an increase in leptospirosis cases appears to lead to an increase in hospitalizations a month later.

\begin{figure}[!htpb]%
	\centering
	\subfigure[Observed cases number ($y_1$) and hospitalizations by leptospirosis ($y_2$).]{
		\includegraphics[width=.48\linewidth]{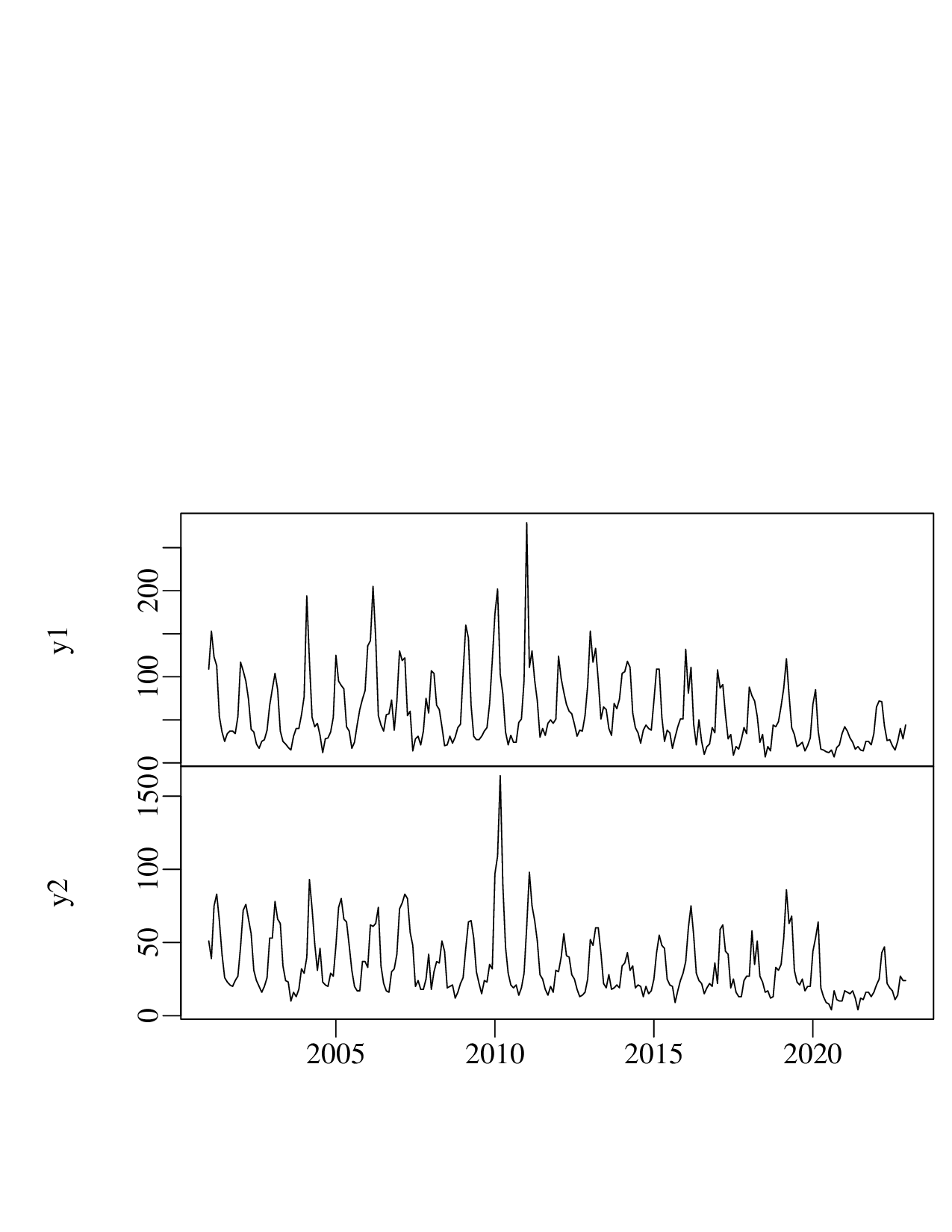}\label{fig_y1_y2_plots}
	}
	\subfigure[Cross-correlation function plot between $y_1$ and $y_2$, respectively.]{\includegraphics[width=.44\linewidth]{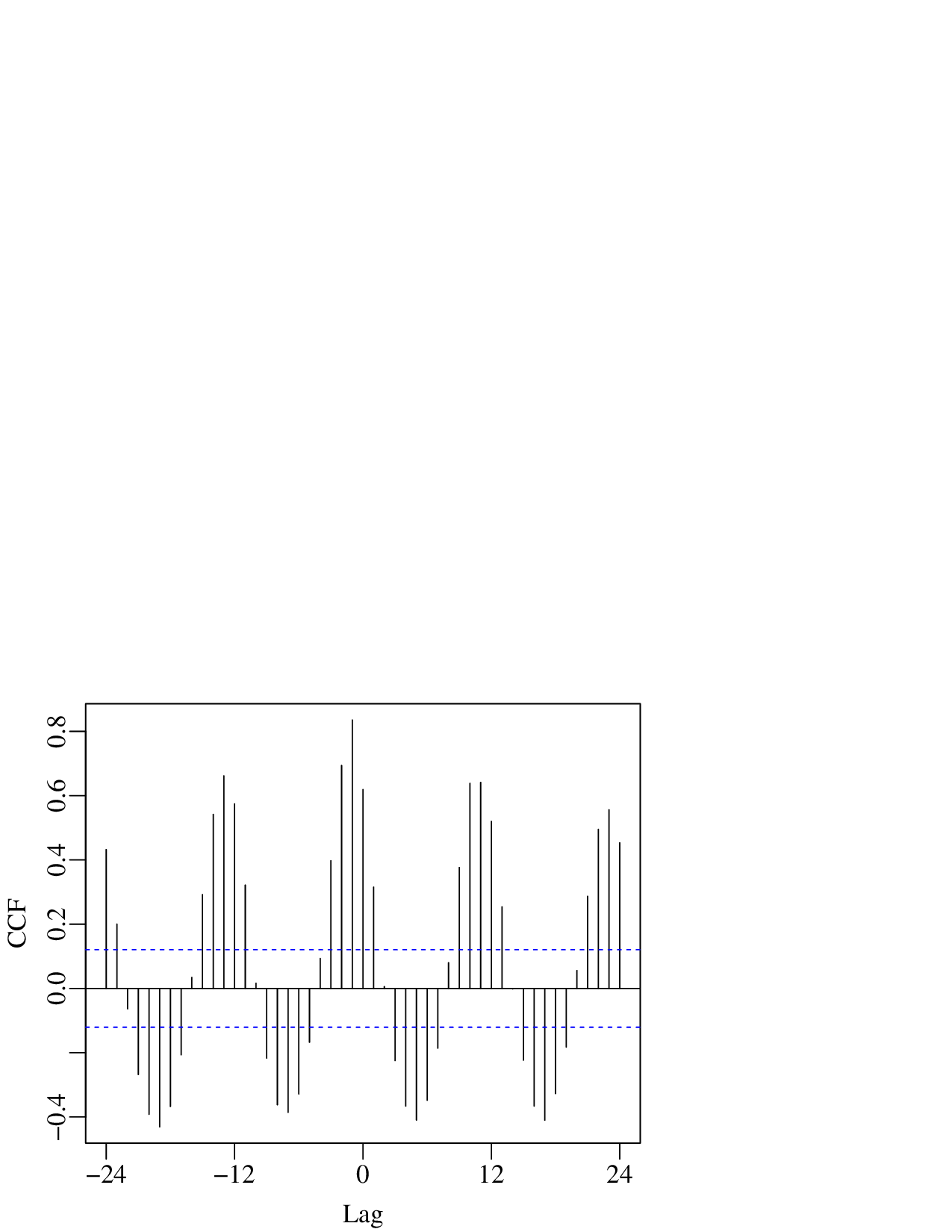}
		\label{fig_y1_y2_ccf}}
	\caption{Plots of the observed cases number and hospitalizations due to leptospirosis in São Paulo, Brazil and cross-correlation function plot.} 
\end{figure}

%
%

%
%

We selected the BGAR$(9,0,2,$
$2)$ model with some autoregressive coefficients being null and Negative Binomial (NB) marginal distributions for both time series of the number of leptospirosis cases ($\bm{y}_1$) and hospitalizations due to leptospirosis ($\bm{y}_2$), referred to as the NB-NB-BGAR$(9,0,2,2)$ model. 
We considered the AIC criterion, parameter significance, and residual analysis to select these orders. High orders (up to three) were selected based on the significance of lags in the residual ACF.
We consider the logarithm link function for $g_k(\cdot)$. The monthly seasonality effect is represented by dummy variables corresponding to each month of the year. We consider August and September as the reference categories for the dummy variables associated with $\bm{y}_1$ and $\bm{y}_2$, respectively. Additionally, we also incorporate trend effects to capture the dynamics of both time series $(\bm{y}_1,\bm{y}_2)^\top$. The autoregressive parameters $\phi_{11, l}$ for $l=3,4,6,7,8$ are not considered since they were not statistically significant at the $5\%$ significance level as well as trend effects, which presented $p$-values equal to $0.1012$ and $0.0629$, respectively. The dynamic component for $(\bm{\mu}_1,\bm{\mu}_2)^\top$ is expressed as
\begin{align}\label{eq_mod_app_lep}
	&	\log(\mu_{1t})=\eta_{1t}=
	\bm{x}_{1t}^\top\bm{\beta_{1}}+
	\sum_{l\in\{1,2,5,9\}}\phi_{11,l}
	[\log(y_{1t-l})-\bm{x}_{1t-l}^\top\bm{\beta_{1}}]
	\\ \nonumber
	&
	\log(\mu_{2t})=\eta_{2t}=
	\bm{x}_{2t}^\top\bm{\beta_{2}}+
	\sum_{l=1}^{2}\phi_{22,l}
	[\log(y_{2t-l})-\bm{x}_{2t-l}^\top\bm{\beta_{2}}]+
	\sum_{l=1}^{2}
	\phi_{21,l}
	[\log(y_{1t-l})-\bm{x}_{1t-l}^\top\bm{\beta_{1}}],
\end{align}
where 
$\bm{x}_{1t}=
(
\mathrm{jan}_t,
\mathrm{feb}_t,
\mathrm{mar}_t,
\mathrm{apr}_t,
\mathrm{may}_t,
\mathrm{jun}_t,
\mathrm{oct}_t,
\mathrm{nov}_t,
\mathrm{dec}_t
)^\top$
and
$\bm{x}_{2t}=
(
\mathrm{jan}_t,
\mathrm{feb}_t,
\mathrm{mar}_t,
\mathrm{apr}_t,
\mathrm{may}_t$,
$\mathrm{jun}_t,
\mathrm{jul}_t,$
$\mathrm{nov}_t,
\mathrm{dec}_t
)^\top$ 
are the vector of covariates associated to the leptospirosis cases ($y_{1t}$) and hospitalizations ($y_{2t}$) series, respectively.
The covariates 
$\mathrm{jan}_t$,
$\mathrm{feb}_t$,
$\mathrm{mar}_t$,
$\mathrm{abr}_t$,
$\mathrm{may}_t$,
$\mathrm{jun}_t$,
$\mathrm{jul}_t$,
$\mathrm{oct}_t$,
$\mathrm{nov}_t$,
and
$\mathrm{dec}_t$,
are indicator variables to January, February, March, April, May, June, July, October, and November, respectively. They are equal to one if the observation corresponds to these months and zero otherwise. We chose these months because the number of cases and hospitalizations are higher in these months, as verified by the analysis from Figures~\ref{fig_y1_monthplot} and~\ref{fig_y2_monthplot}. Furthermore, as expected, the remaining months were not statistically significant at the $5\%$ significance level.
 Note that the autoregressive parameters $\bm{\phi_{21}}=(\phi_{21,1},\phi_{21,2})^\top$ in~\eqref{eq_mod_app_lep} will allow measuring the shock effect of the number of leptospirosis cases on hospitalizations due to leptospirosis. 
 As expected, the reverse effect (parameters $\phi_{12,l}$, $l=1,\ldots,p_{12}$) was not significant.

In Table~\ref{tab_fitted_model_app1}, we present parameter estimates, standard errors, $z$ statistic values, and associated $p$-values for the fitted NB-NB-BGAR$(9,0,2,2)$ model. All parameter estimates are significant at a $5\%$ significance level. The estimates for the parameter $\kappa$ of the NB distribution were
$\tilde{\kappa}_1 = 14$
and
$\tilde{\kappa}_2 = 22$
for the cases and hospitalizations series, respectively. All monthly effects are positive, corresponding to the months with the highest number of leptospirosis cases and hospitalizations compared to the reference months. Specifically, January and March have the highest number of cases and hospitalizations due to leptospirosis in São Paulo, Brazil.

Notice that the estimates of the autoregressive parameters $\bm{\hat{\phi}_{21}}=(0.4059,-0.1307)^\top$ capture the lagged effect of the number of leptospirosis cases on hospitalizations. These estimates are shock effects that indicate how the lagged number of cases influences the hospitalization series. 
An increase of one case at time $t$ leads to an increase in hospitalizations in the subsequent month by a factor of $\exp(0.4059) = 1.50$, although this effect is attenuated in the second month, with a multiplicative factor of $\exp(-0.1307) = 0.87$.

\begin{table}[!htbp] 
	\centering 
	\caption{Parameter estimates, standard erros, $z$ statistic values, and associated $p$-values from NB-NB-BGAR($9,0,2,2$) model.} 
	\label{tab_fitted_model_app1} 
	\begin{tabular}{@{\extracolsep{5pt}} lcrrrr} 
		\\[-1.8ex]
		\hline  \\[-1.8ex] 
	Effect & Parameter	& Estimate & Std. Error & $z$ value & Pr(\textgreater \textbar $z$\textbar ) \\ 
	\hline \\[-1.8ex] 
    \multicolumn{6}{c}{Number of cases}\\
		\hline\\[-1.8ex] 
Intercept & $\beta_{10}$ & $3.4727$ & $0.1787$ & $19.4281$ & $<0.0001$ \\ 
January   & $\beta_{11}$ & $1.4282$ & $0.0840$ & $16.9943$ & $<0.0001$ \\ 
February  & $\beta_{12}$ & $1.4086$ & $0.0820$ & $17.1871$ & $<0.0001$ \\ 
March     & $\beta_{13}$ & $1.3271$ & $0.0842$ & $15.7573$ & $<0.0001$ \\ 
April     & $\beta_{14}$ & $0.8886$ & $0.0826$ & $10.7546$ & $<0.0001$ \\ 
May       & $\beta_{15}$ & $0.3663$ & $0.0818$ & $4.4789$ & $<0.0001$ \\ 
June      & $\beta_{16}$ & $0.2546$ & $0.0754$ & $3.3747$ & $0.0007$ \\ 
October   & $\beta_{17}$ & $0.3509$ & $0.0732$ & $4.7975$ & $<0.0001$ \\ 
November  & $\beta_{18}$ & $0.5058$ & $0.0789$ & $6.4084$ & $<0.0001$ \\ 
December  & $\beta_{19}$ & $0.7561$ & $0.0819$ & $9.2301$ & $<0.0001$ \\ 
$-$       & $\phi_{11,1}$ & $0.3825$ & $0.0629$ & $6.0783$ & $<0.0001$ \\ 
$-$       & $\phi_{11,2}$ & $0.1789$ & $0.0636$ & $2.8122$ & $0.0049$ \\ 
$-$       & $\phi_{11,5}$ & $0.1500$ & $0.0617$ & $2.4312$ & $0.0151$ \\ 
$-$       & $\phi_{11,9}$ & $0.1475$ & $0.0564$ & $2.6145$ & $0.0089$ \\ 
\hline \\[-1.8ex] 
\multicolumn{6}{c}{Hospitalizations}\\
\hline\\[-1.8ex] 
Intercept & $\beta_{20}$ & $3.0793$ & $0.1479$ & $20.8255$ & $<0.0001$ \\ 
January   & $\beta_{21}$ & $0.7814$ & $0.0880$ & $8.8764$ & $<0.0001$ \\ 
February  & $\beta_{22}$ & $1.1435$ & $0.0898$ & $12.7277$ & $<0.0001$ \\ 
March     & $\beta_{23}$ & $1.2884$ & $0.0903$ & $14.2662$ & $<0.0001$ \\ 
April     & $\beta_{24}$ & $1.0898$ & $0.0909$ & $11.9838$ & $<0.0001$ \\ 
May       & $\beta_{25}$ & $0.8795$ & $0.0896$ & $9.8187$ & $<0.0001$ \\ 
June      & $\beta_{26}$ & $0.3995$ & $0.0858$ & $4.6549$ & $<0.0001$ \\ 
July      & $\beta_{27}$ & $0.2220$ & $0.0797$ & $2.7859$ & $0.0053$ \\ 
November  & $\beta_{28}$ & $0.2505$ & $0.0758$ & $3.3048$ & $0.0010$ \\ 
December  & $\beta_{29}$ & $0.4784$ & $0.0811$ & $5.8985$ & $<0.0001$ \\ 
$-$       & $\phi_{22,1}$ & $0.3924$ & $0.0700$ & $5.6074$ & $<0.0001$ \\ 
$-$       & $\phi_{22,2}$ & $0.2280$ & $0.0635$ & $3.5927$ & $0.0003$ \\ 
$-$       & $\phi_{21,1}$ & $0.4059$ & $0.0559$ & $7.2641$ & $<0.0001$ \\ 
$-$       & $\phi_{21,2}$ & $-0.1307$ & $0.0618$ & $-2.1135$ & $0.0346$ \\ 
\hline
& & & & AIC& $= 3959.83$  \\
& & & & BIC& $=4058.65$\\
		\hline \\[-1.8ex] 
	\end{tabular} 
\end{table}

Figure~\ref{fig_y1_y2_residual_plots} displays residual plots from the fitted NB-NB-BGAR($9,0,2,2)$ model. 
Figures~\ref{fig_res_y1_QQplots} and~\ref{fig_res_y2_QQplots} display the QQ-plots from residuals of the fitted model for the leptospirosis cases and hospitalizations series, respectively. These plots suggest that the assumption of residual normality is satisfied. The plots of the residuals versus (vs.) indexes in Figures~\ref{fig_res_ind_y1} and~\ref{fig_res_ind_y2} show that there are not any systematic trend or pattern in the residuals, indicating that the model captures all the underlying structure in the data. 
The residual ACFs are shown in Figures~\ref{fig_Acf_res_y1} and~\ref{fig_Acf_res_y2}. These correlograms provide evidence that the residuals are not autocorrelated. To confirm this residual analysis, we perform some hypothesis tests. We consider the Shapiro-Wilk and Ljung–Box test to check the assumption of normality and
serial correlation of the residuals. We consider $20$ lags for the Ljung–Box test and degrees of freedom $16$ and $18$ for the leptospirosis cases and hospitalizations series, respectively.
We set a $5\%$ significance level. Table~\ref{tab_sw_lb_tests} presents the $p$-value obtained from these tests for each case. Since all $p$-values are greater than $0.05$, we do not reject the residuals' normality and uncorrelated null hypothesis. According to the results, there is evidence that the model fits these data well.

\begin{table}[!htp] 
	\centering 
	\caption{$p$-values from Shapiro-Wilk and Ljung–Box tests for the residuals of the fitted model for the leptospirosis cases and hospitalizations. } 
	\label{tab_sw_lb_tests} 
	\begin{tabular}{@{\extracolsep{5pt}} lrr} 
		\hline 
		\\[-1.8ex] 
		$p$-value & Cases & Hospitalizations \\
		\hline 	\\[-1.8ex] 
		Shapiro-Wilk &  $0.4870$ & $0.7706$ \\
		Ljung–Box & $0.0750$  & $0.2605$\\
		\hline 
	\end{tabular} 
\end{table}

\begin{figure}[!htpb]%
	\centering
	\subfigure[QQ-plots of the residuals from the fit of the hospitalizations due to leptospirosis time series.]{
		\includegraphics[width=.42\linewidth]{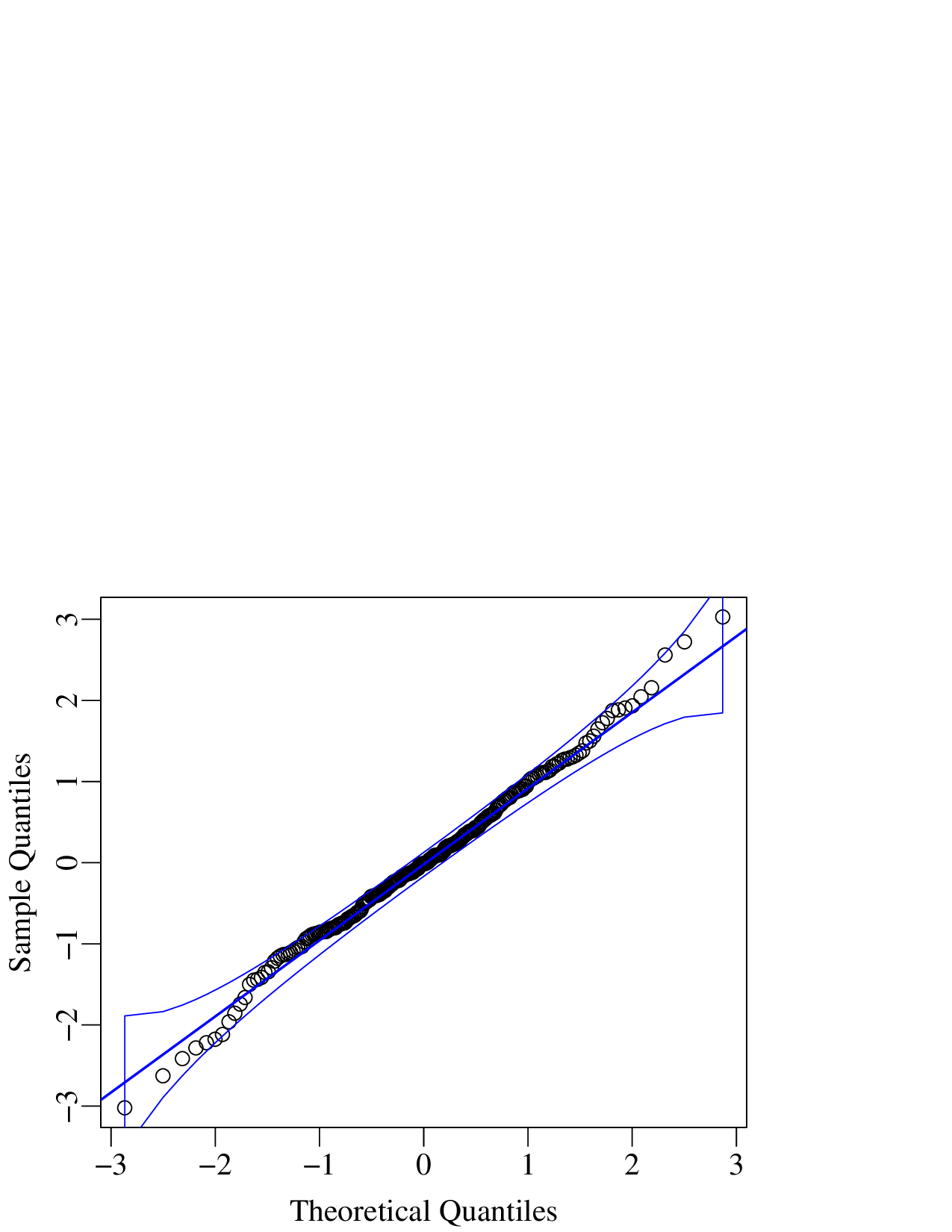}
		\label{fig_res_y1_QQplots}
	}
	\subfigure[QQ-plots of the residuals from the fit of the leptospirosis number of cases time series.]{
	\includegraphics[width=.42\linewidth]{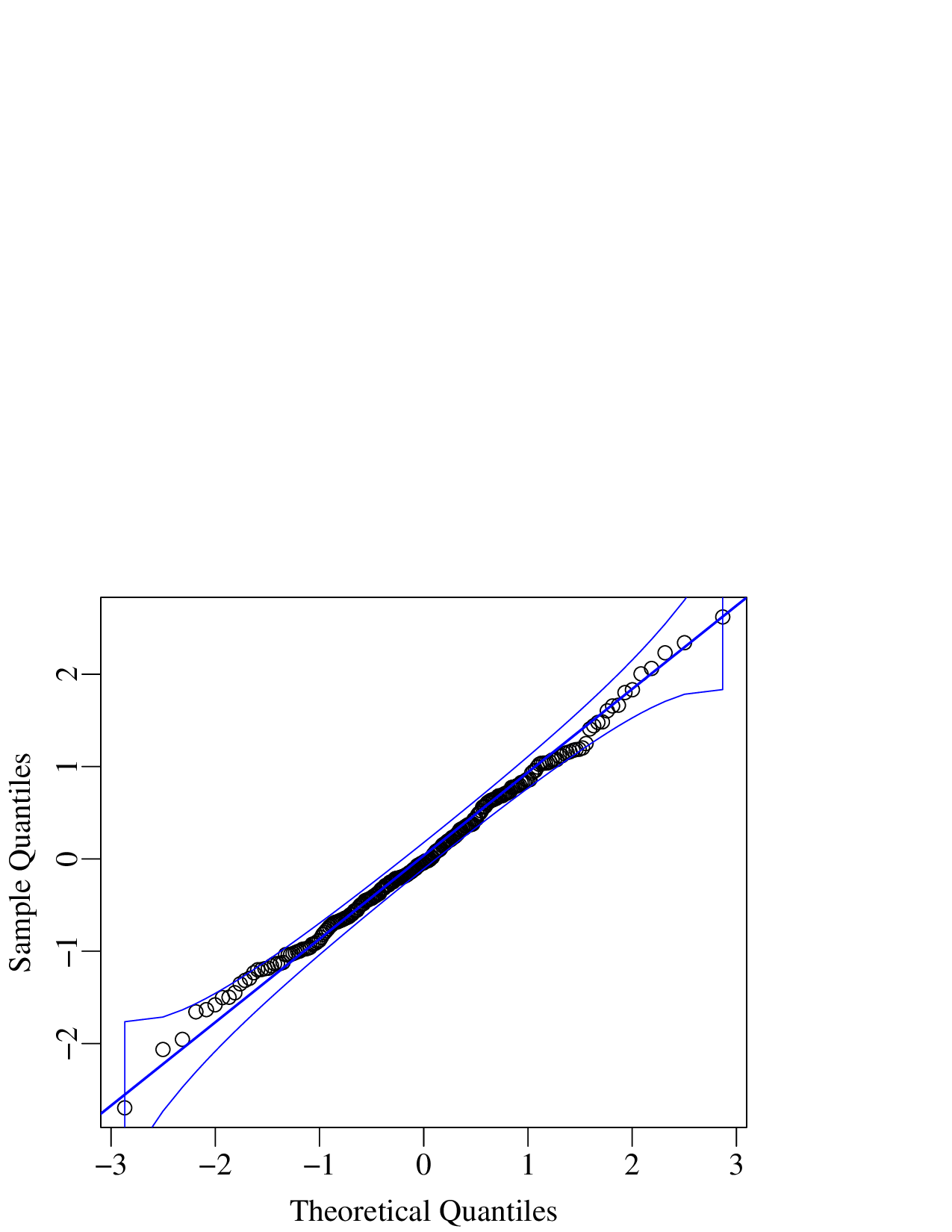}
	\label{fig_res_y2_QQplots}
}
	\subfigure[Residuals vs. indexes from the fit of the hospitalizations due to leptospirosis time series.]{
	\includegraphics[width=.42\linewidth]{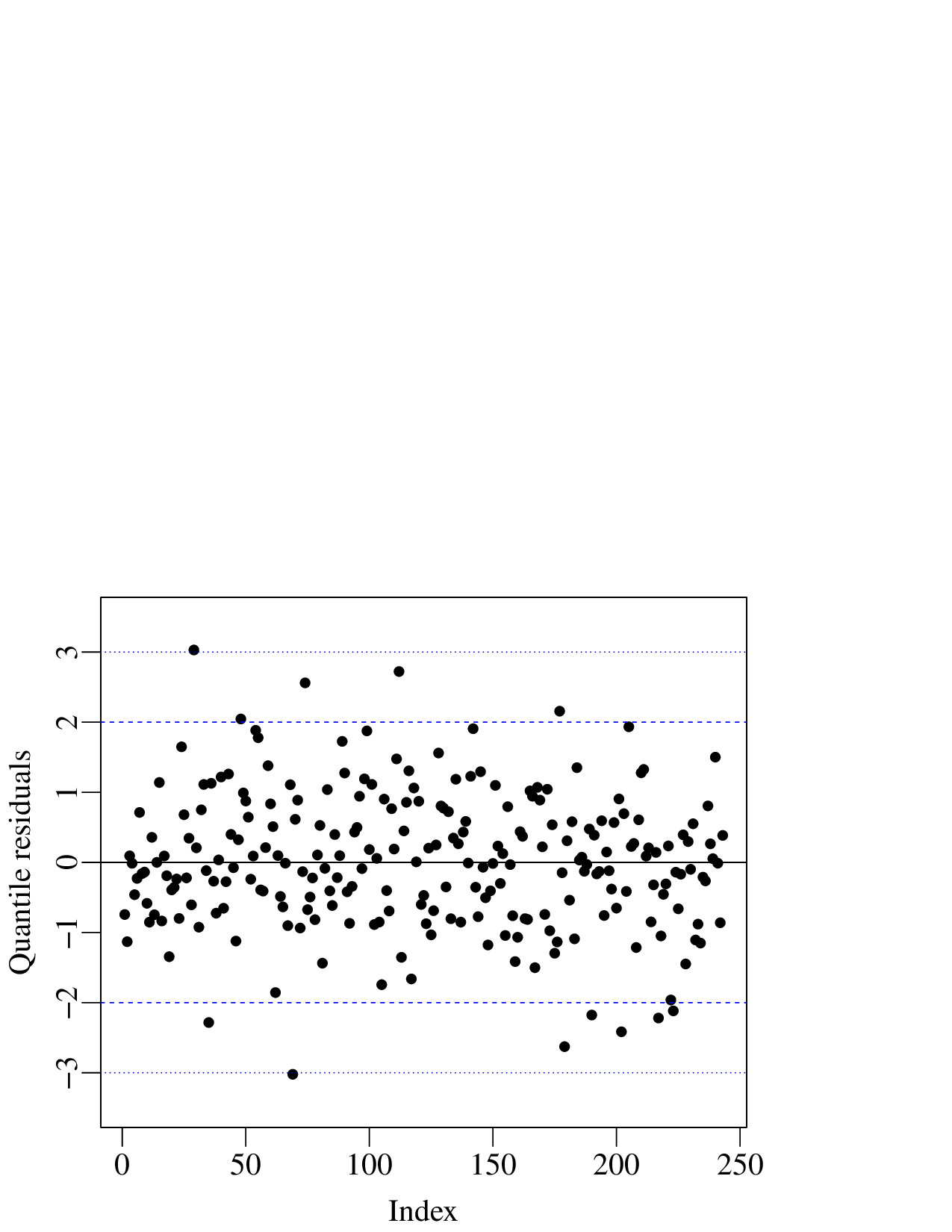}
	\label{fig_res_ind_y1}
}
\subfigure[Residuals vs. indexes from the fit of the leptospirosis number of cases time series.]{
	\includegraphics[width=.42\linewidth]{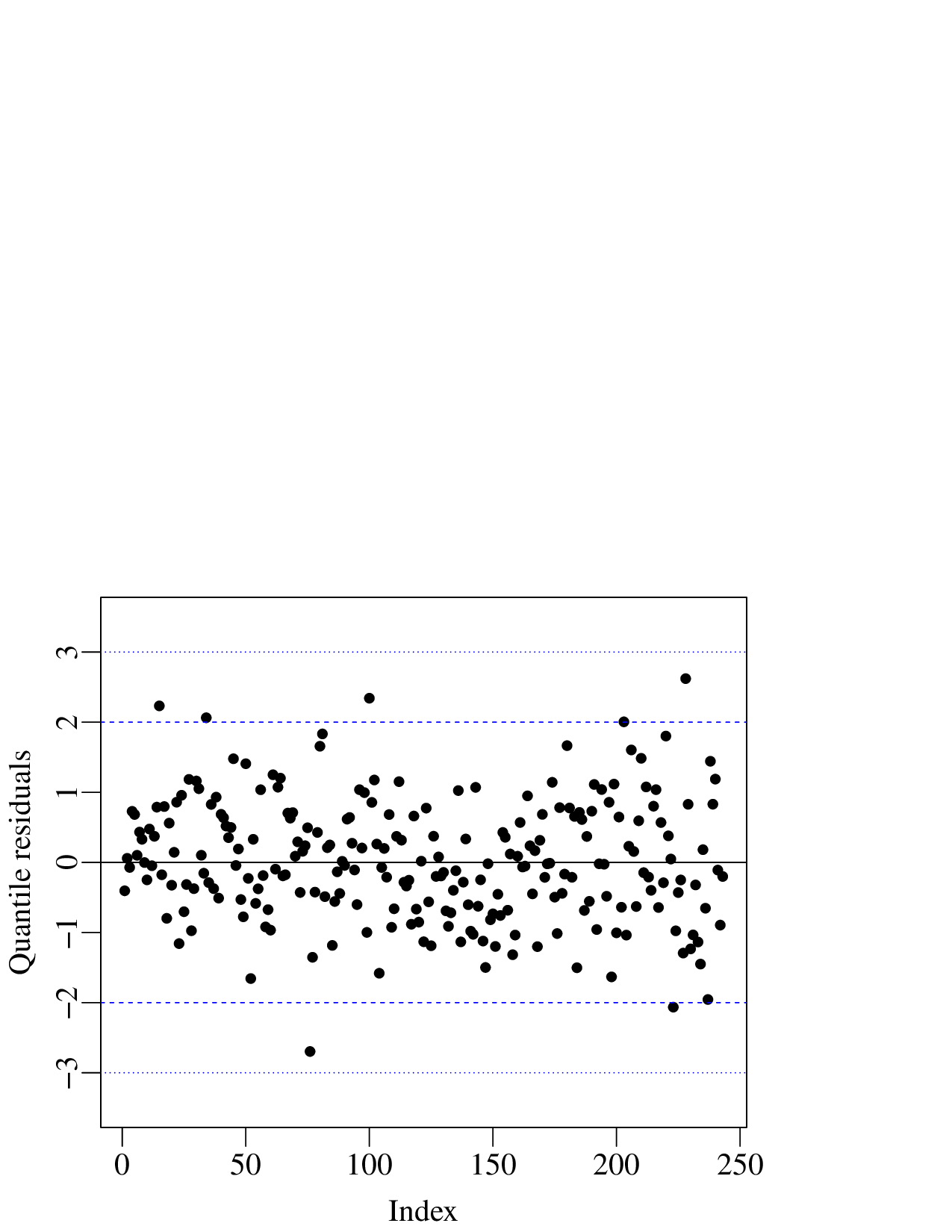}
	\label{fig_res_ind_y2}
}
	\subfigure[ACF of the residuals from the fit of the hospitalizations due to leptospirosis time series.]{
	\includegraphics[width=.42\linewidth]{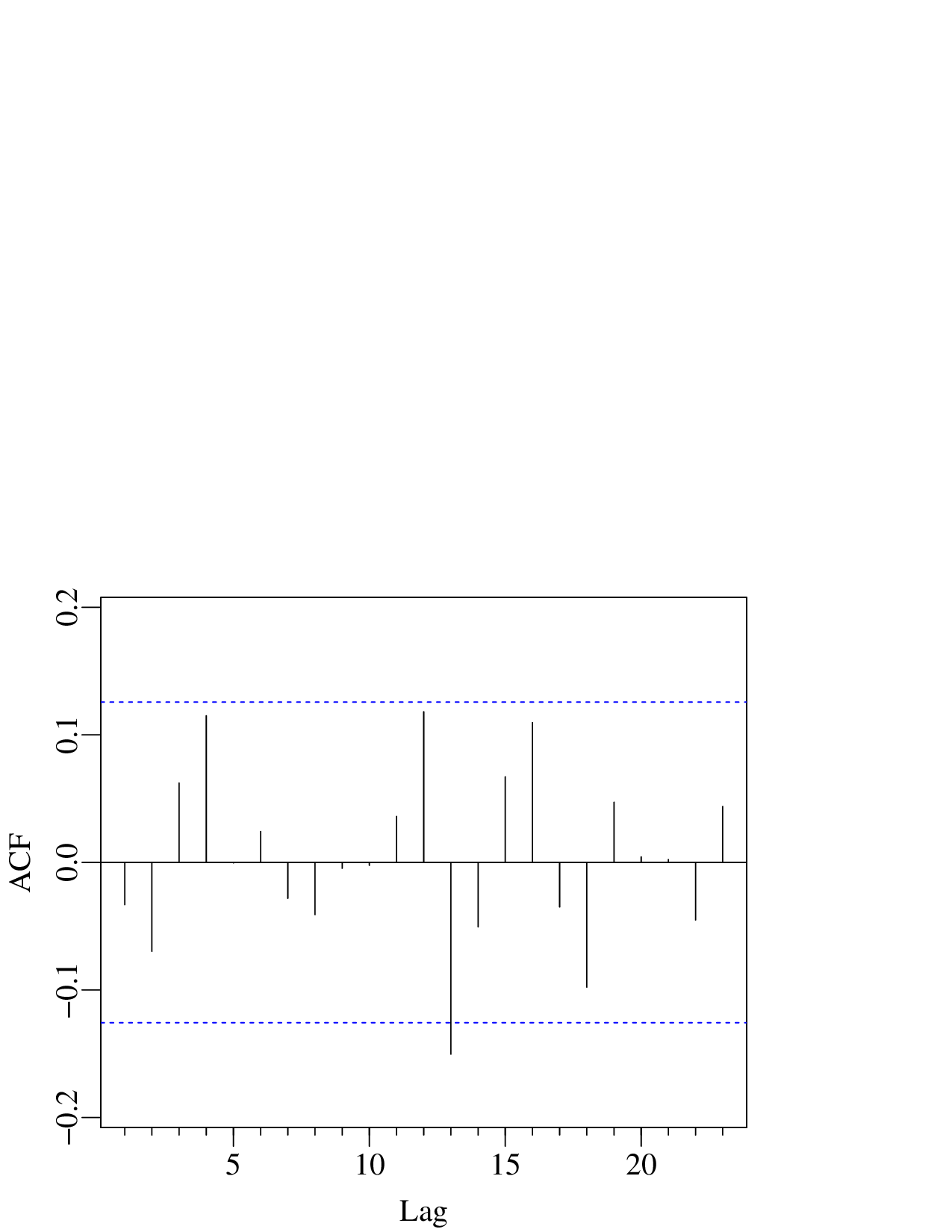}
	\label{fig_Acf_res_y1}
}
	\subfigure[ACF of the residuals from the fit of the leptospirosis number of cases time series.]{
	\includegraphics[width=.42\linewidth]{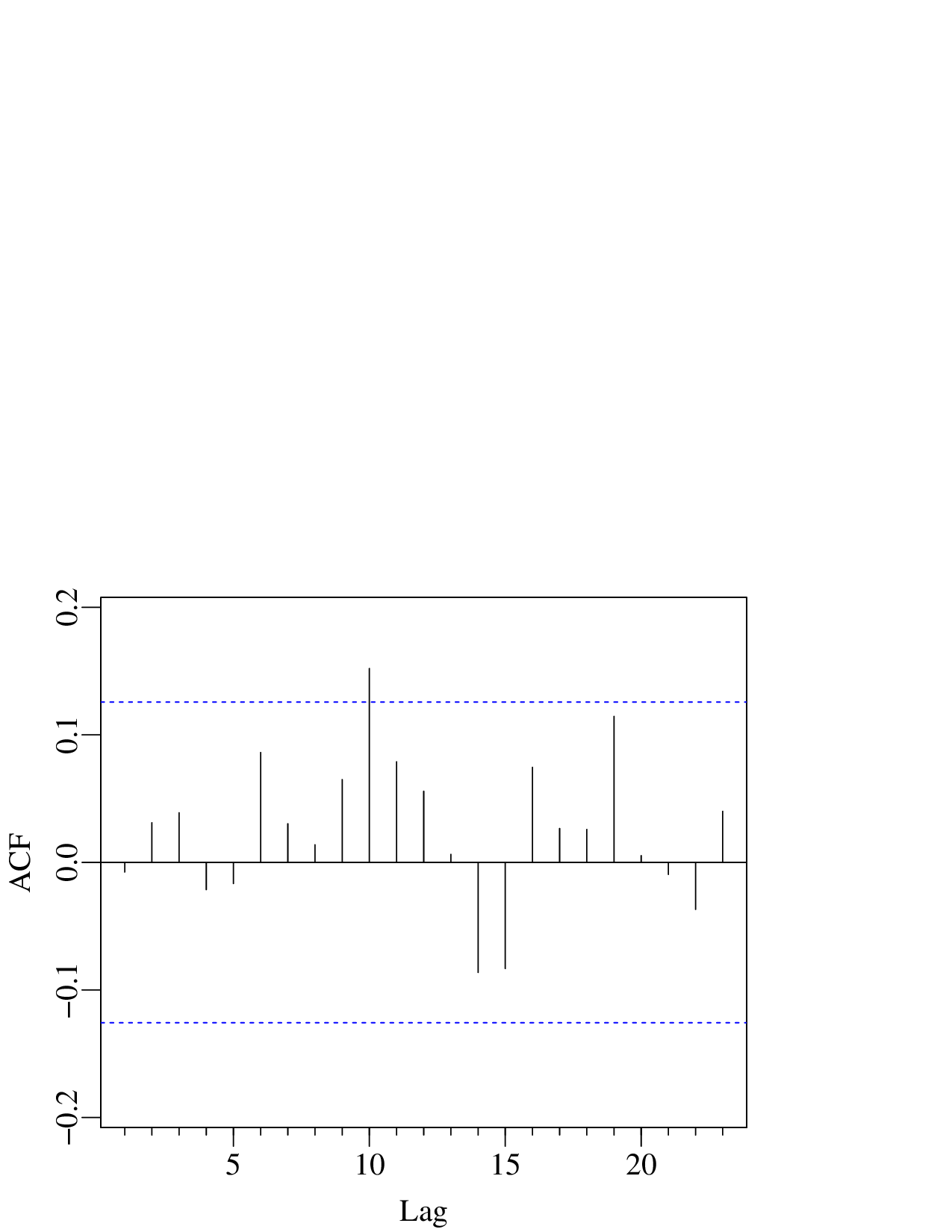}
	\label{fig_Acf_res_y2}
}
	\caption{Residual plots from fits of number of cases and hospitalizations times series.} 
	\label{fig_y1_y2_residual_plots}
\end{figure}

 Figure~\ref{fig_comp_res_index_y1y2} shows the composite residuals versus the index plot. As expected, most points are below the horizontal line (the 95th quantile of the chi-squared distribution with two degrees of freedom). This indicates that the composite residuals corroborate with the model assumptions, as they follow the expected chi-squared distribution without systematic deviations or patterns over time.
 Figure~\ref{fig_ccf_res1res2} displays the CCF plot of residuals from the NB-NB-BGAR($9,0,2,2$) model. The absence of significant correlations in the CCF plot of residuals suggests that the NB-NB-BGAR$(9,0,2,2)$ model has effectively captured the time dependence between the series.

\begin{figure}[!htpb]%
	\centering

	\subfigure[Composite residuals vs. indexes from NB-NB-BGAR($9,0,2,2)$ model.]{
	\includegraphics[width=.42\linewidth]{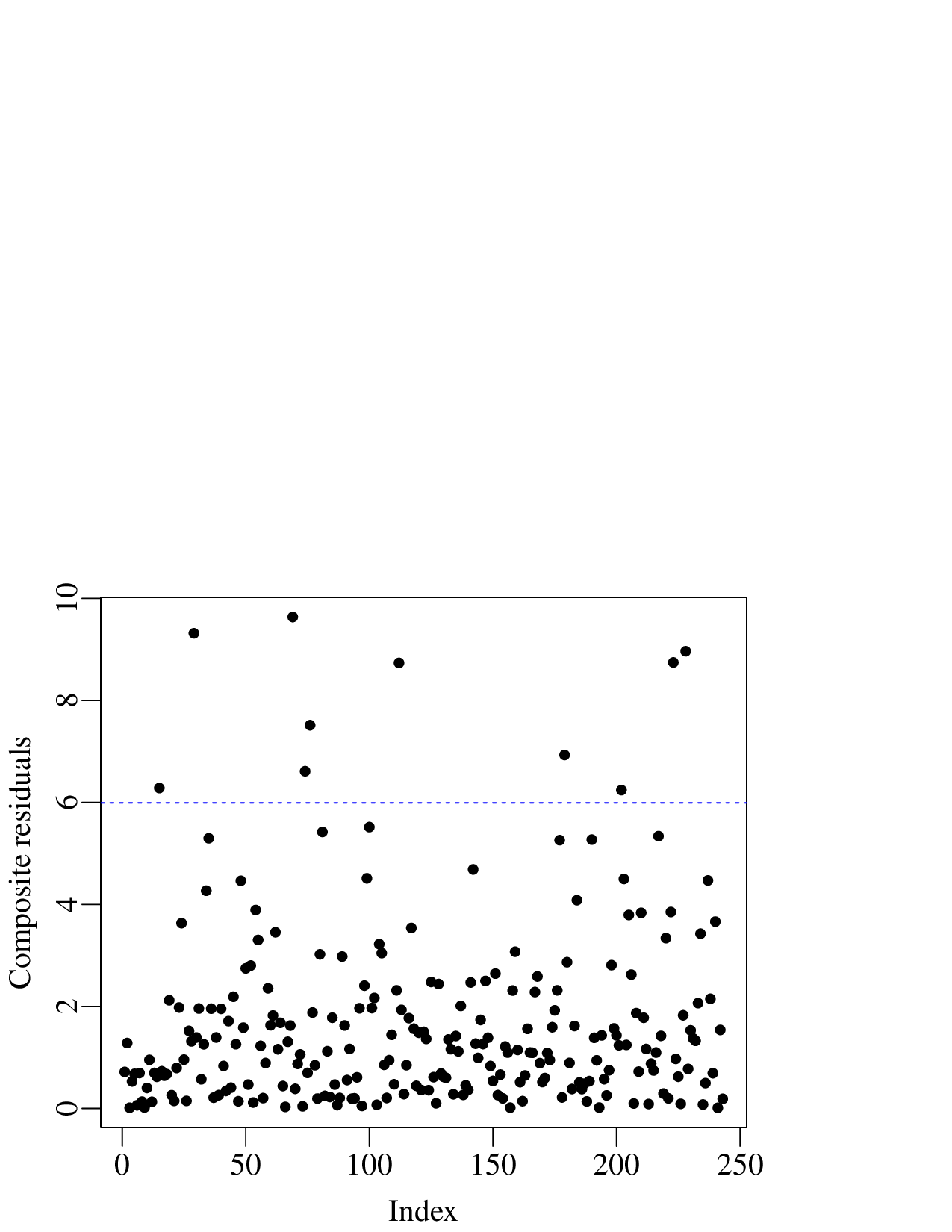}
	\label{fig_comp_res_index_y1y2}
}
	\subfigure[CCF plot of residuals from NB-NB-BGAR($9,0,2,2)$ model.]{
	\includegraphics[width=.42\linewidth]{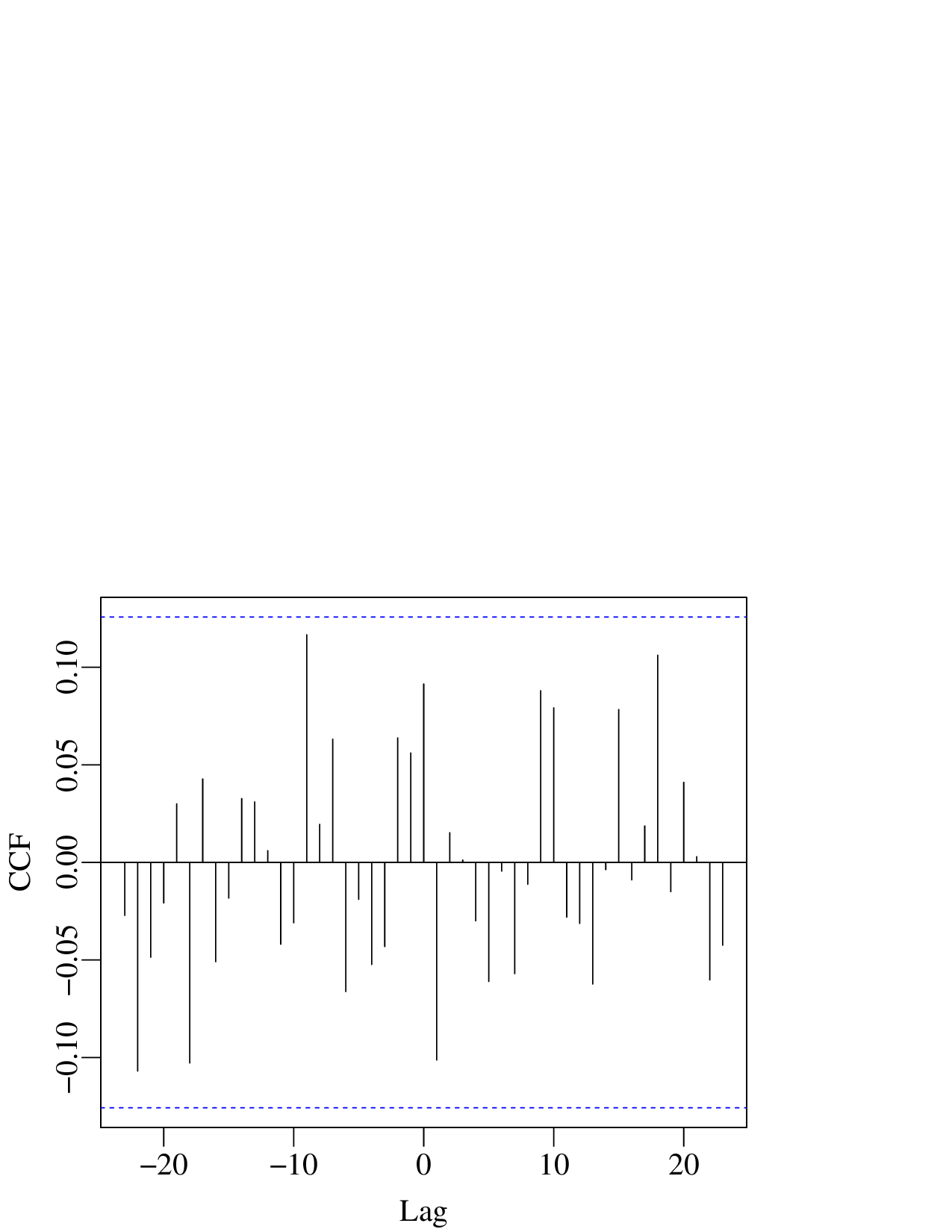}
	\label{fig_ccf_res1res2}
}
	\caption{Composite residuals vs. indexes and CCF plot of residuals from NB-NB-BGAR($9,0,2,2)$ model.} 
	\label{fig_comp_res_ccf_res1res2}
\end{figure}

Figure~\ref{fig_y1_y2_predicted} compares predicted and observed values for each time series: number of cases and hospitalizations due to leptospirosis. The predicted values are very close to the observed values for both series. According to this figure, the proposed model provides a good in-sample fit for the dataset.
Moreover, it displays the forecasted values for the last $12$ months, which we separated from the original sample.
We compute the forecasts as described in the schemes in Section~\ref{sec_predict_forec}.
Note that the out-of-sample forecasts are also quite close to the observed values, indicating that the NB-NB-BGAR($9,0,2,2$) model provided accurate forecasts.


\begin{figure}[!htpb]
	\centering
	\includegraphics[width=1.02\textwidth]{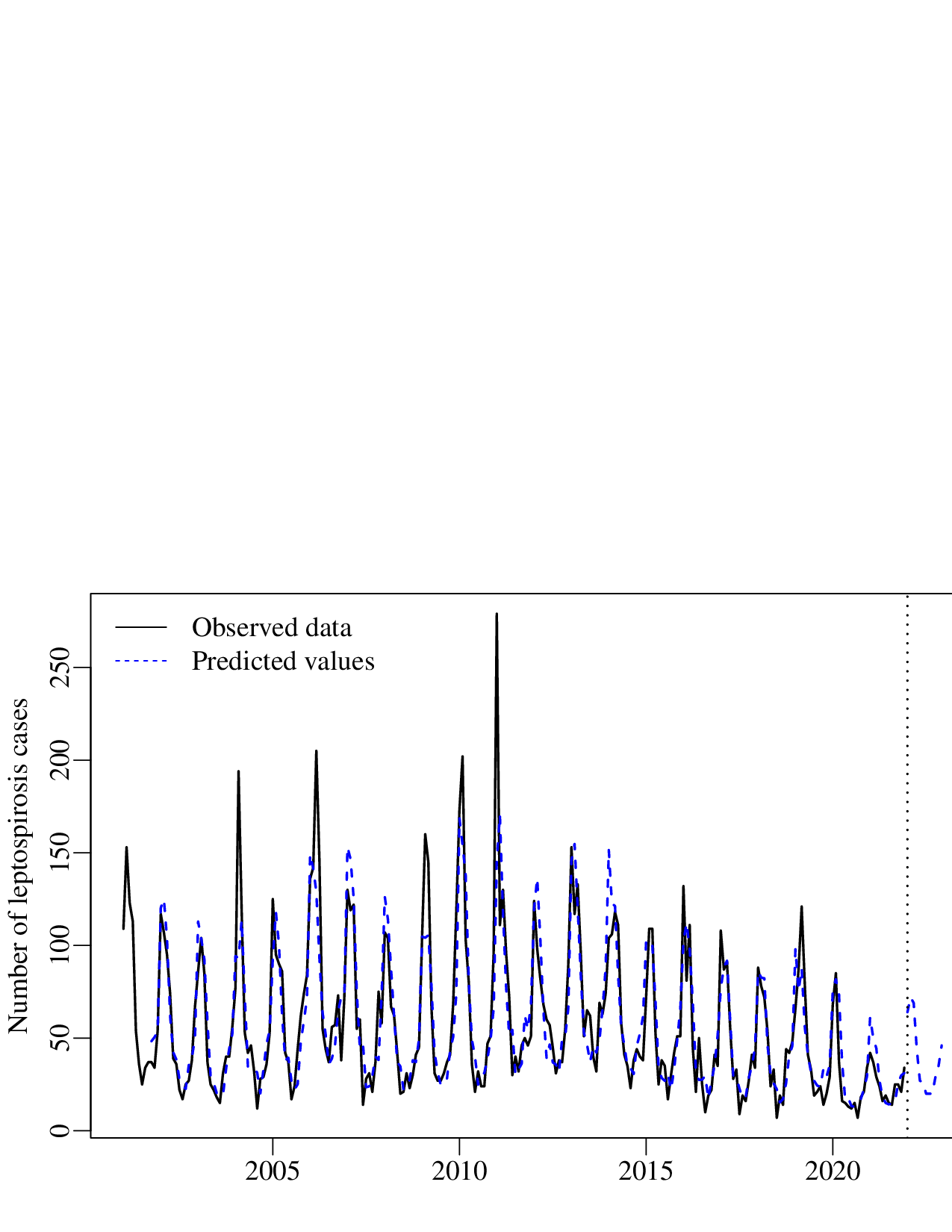}
	\caption{Observed number of leptospirosis cases and hospitalizations, in-sample and out-of-sample predicted values from the NB-NB-BGAR($9,0,2,2$) model.}
	\label{fig_y1_y2_predicted}
\end{figure}

For comparison purposes, we fit the Negative Binomial GARMA model for each one of the time series, as well as the classical ARIMA model. We also consider the VAR model for this dataset. Within the GARMA class, the selected models were NB-GARMA$(9,0)$ (with $\phi_i=0$ for $i=3,4,6,7,8$) and NB-GARMA$(1,1)$ models for the number of leptospirosis cases and hospitalizations, respectively, according to AIC criterion and residual analysis. We use the \verb|auto.arima| function of the \verb|forecast| package~\cite{hyndman2015forecasting,hyndman2008automatic}, based on the AIC, to select the best model in the ARIMA class. The ARIMA model did not fit the data well on its original scale. Then, we applied the logarithm transformation to the response variables to achieve a better fit. The selected models were SARIMA$(1,1,1)(0,0,1)_{12}$ and ARIMA$(0,1,4)$ for the number of leptospirosis cases and hospitalizations series, respectively. The covariates defined by $\bm{x}_{1t}$ and $\bm{x}_{2t}$ in~\eqref{eq_mod_app_lep}
are incorporated in the fits of the GARMA and ARIMA models. In the VAR model, we included covariates corresponding to all months of the year, including August, as the reference category. 
Only the NB-GARMA$(9,0)$ and NB-GARMA$(2,0)$ models successfully pass all residual diagnostic checks. Nevertheless, we conduct out-of-sample forecasting using the ARIMA and VAR models just for comparative analysis.

Table~\ref{tab_forec_error_app_lep} displays each fitted model's RMSE, MAE, and MAPE values for the considered forecast horizons $h=1,\ldots,12$. 
The NB-GARMA$(9,0)$ provides the best forecasts for the number of leptospirosis cases series since it presents the smallest RMSE, MAE, and MAPE values at $h=1,\ldots,12$. On the other hand, for the hospitalizations series, the NB-NB-BGAR($9,0,2,2$) provides the best forecasts in the most considered forecast horizons.
These results demonstrate that the proposed model outperforms alternative models in out-of-sample forecasting for leptospirosis-related hospitalizations in São Paulo state, Brazil. This application highlights the practical relevance of the BGAR model, as accurate leptospirosis forecasting is essential for optimizing hospital capacity and enhancing public health responses.

\begin{table}[!htbp] 
	\renewcommand{\arraystretch}{1.3}
	\centering 
	\scriptsize
	\caption{ Comparison of out-of-sample forecasting errors among fitted models within each class.} 
	\label{tab_forec_error_app_lep} 
		\rotatebox{90}{
	\begin{tabular}{@{\extracolsep{5pt}} l|lrrrrrrrrrrrr} 
		\hline 
		Model &Measures & $h=1$ &  $h=2$  & $h=3$  & $h=4$  & $h=5$  & $h=6$  & $h=7$  & $h=8$ & $h=9$ & $h=10$ & $h=11$& $h=12$ \\
			\hline
		\multicolumn{13}{c}{Number of cases}\\
		\hline
   \multirow{3}{*}{NB-NB-BGAR($9,0,2,2$)}   & RMSE  & $0.5688$ & $0.4204$ & $1.2968$ & $1.2226$ & $1.1491$ & $1.0997$ & $1.0322$ & $1.9986$ & $2.4202$ & $4.1262$ & $4.5220$ & $4.3874$ \\ 
                                            & MAE  & $0.5688$ & $0.3708$ & $0.9692$ & $0.9685$ & $0.9327$ & $0.9120$ & $0.8460$ & $1.3589$ & $1.7142$ & $2.6269$ & $3.0603$ & $3.0104$ \\ 
                                            & MAPE  & $0.8752$ & $0.5575$ & $1.3886$ & $1.6033$ & $1.8900$ & $2.0740$ & $2.0990$ & $5.9612$ & $7.3239$ & $9.3019$ & $10.8570$ & $10.4185$ \\ 
\hline
\multirow{3}{*}{NB-GARMA($9,0$)}  & RMSE & $7.1661$ & $7.7481$ & $9.2535$ & $8.5068$ & $7.7554$ & $7.3472$ & $6.8998$ & $6.4676$ & $6.7217$ & $8.3484$ & $7.9603$ & $7.9752$ \\ 
                                  & MAE  & $7.1661$ & $7.7277$ & $9.0507$ & $8.2150$ & $7.2434$ & $6.8382$ & $6.2984$ & $5.6583$ & $5.9724$ & $7.0790$ & $6.4595$ & $6.5993$ \\ 
                                  & MAPE & $11.0248$ & $11.2689$ & $13.0040$ & $13.0715$ & $13.0396$ & $13.8369$ & $14.0453$ & $13.2711$ & $15.5676$ & $18.2707$ & $16.6955$ & $16.8452$ \\ 
		\hline
			\multirow{3}{*}{SARIMA$(1,1,1)\times(0,0,1)_{12}$} & RMSE & $5.4578$ & $10.0149$ & $13.2443$ & $12.0118$ & $10.9644$ & $10.5386$ & $9.9588$ & $9.3168$ & $9.3145$ & $10.5657$ & $10.1236$ & $10.2198$ \\ 
	& 	MAE & $5.4578$ & $9.2636$ & $12.1909$ & $10.9267$ & $9.7204$ & $9.4469$ & $8.8515$ & $7.7993$ & $7.9656$ & $9.0008$ & $8.4841$ & $8.7124$ \\ 
		& MAPE & $8.3966$ & $13.2742$ & $17.3215$ & $17.1390$ & $17.4767$ & $19.5512$ & $20.5289$ & $18.3247$ & $20.4201$ & $22.9573$ & $21.9475$ & $22.2443$ \\ 
		\hline
			\multirow{3}{*}{VAR($2$)} &RMSE & $34.6509$ & $31.7160$ & $29.6794$ & $27.8860$ & $25.3689$ & $23.2418$ & $21.6147$ & $20.4522$ & $19.3247$ & $18.3389$ & $18.1767$ & $17.8737$ \\ 
			& MAE & $34.6509$ & $31.5656$ & $29.4156$ & $27.4695$ & $24.0483$ & $20.8427$ & $18.6379$ & $17.3977$ & $15.8899$ & $14.4486$ & $14.6321$ & $14.5892$ \\ 
			& MAPE & $53.3091$ & $46.4325$ & $42.7463$ & $44.6361$ & $43.6806$ & $39.3726$ & $37.6117$ & $40.1740$ & $37.4113$ & $34.0394$ & $36.2912$ & $35.9408$ \\ 
		\hline
		\multicolumn{13}{c}{Hospitalizations}\\
		\hline
        \multirow{3}{*}{NB-NB-BGAR($9,0,2,2$)} & RMSE & $3.9050$ & $8.2592$ & $6.9168$ & $7.2736$ & $7.9534$ & $7.2883$ & $6.7606$ & $6.4625$ & $6.1000$ & $6.8888$ & $6.6971$ & $6.4225$ \\ 
                                               & MAE  & $3.9050$ & $7.4566$ & $5.8590$ & $6.4572$ & $7.2119$ & $6.2698$ & $5.5322$ & $5.3113$ & $4.8187$ & $5.5186$ & $5.4112$ & $5.0664$ \\ 
                                               & MAPE & $18.5950$ & $31.3139$ & $22.9409$ & $21.5951$ & $26.5765$ & $23.5147$ & $21.0854$ & $22.7283$ & $20.8994$ & $23.1865$ & $22.7215$ & $21.2703$ \\ 
\hline
			\multirow{3}{*}{NB-GARMA($1,1$)}& RMSE  & $6.5080$ & $11.9483$ & $10.8307$ & $9.4498$ & $11.0354$ & $10.2903$ & $9.6063$ & $9.1921$ & $8.7184$ & $8.8345$ & $8.4403$ & $8.1900$ \\ 
                                            & MAE   & $6.5080$ & $11.0509$ & $10.0832$ & $8.1373$ & $9.6828$ & $8.9261$ & $8.1167$ & $7.7867$ & $7.2383$ & $7.4963$ & $6.9756$ & $6.7789$ \\ 
                                            & MAPE  & $30.9902$ & $46.6830$ & $37.4381$ & $29.3017$ & $37.8641$ & $36.0644$ & $33.6523$ & $35.6686$ & $33.9682$ & $34.2078$ & $31.7680$ & $30.7233$ \\ 
        \hline
		\multirow{3}{*}{ARIMA($0,1,4$)} & RMSE & $1.8186$ & $1.6926$ & $6.3553$ & $11.3956$ & $10.1932$ & $9.5938$ & $9.1435$ & $8.5829$ & $8.2636$ & $9.6928$ & $9.9963$ & $9.9650$ \\ 
		&MAE & $1.8186$ & $1.6875$ & $4.7064$ & $8.5190$ & $6.8681$ & $6.6770$ & $6.5436$ & $5.9789$ & $5.8729$ & $7.0882$ & $7.5925$ & $7.7610$ \\ 
		&MAPE & $8.6600$ & $7.4429$ & $13.2908$ & $20.5834$ & $16.7073$ & $18.9417$ & $21.0620$ & $20.7308$ & $22.4157$ & $26.8502$ & $29.1958$ & $30.1010$ \\
	    \hline
	 	\multirow{3}{*}{VAR($2$)} & RMSE & $1.7579$ & $2.1893$ & $10.6603$ & $13.9603$ & $12.6347$ & $11.8721$ & $11.4150$ & $11.6798$ & $11.4850$ & $10.9544$ & $10.4470$ & $10.0046$ \\ 
		&MAE & $1.7579$ & $2.1533$ & $7.5031$ & $10.8632$ & $9.5535$ & $9.1100$ & $8.9730$ & $9.5248$ & $9.5540$ & $8.9569$ & $8.2100$ & $7.5886$ \\ 
		&MAPE & $8.3711$ & $9.2828$ & $20.2991$ & $26.3646$ & $25.0140$ & $26.8911$ & $29.8988$ & $41.3749$ & $44.5453$ & $41.4178$ & $37.9333$ & $35.0338$ \\ 
		\hline 
	\end{tabular} 
}
\end{table}

\section{Conclusion}\label{sec_conclusion}


This study introduced the bivariate generalized autoregressive (BGAR) model as a flexible and simplified alternative to the conventional bivariate vector autoregressive (VAR) model. 
It generalizes the bivariate VAR model, allowing the analysis of bivariate time series of count, binary, Gaussian, and non-Gaussian data and incorporating the effect of one lagged series in the other.
We used the conditional maximum likelihood method for parameter estimation. We derived general closed-form expressions for the conditional score vector and conditional Fisher information matrix. Numerical results from Monte Carlo studies were provided to assess interval and point estimation and showed the estimator's consistency on finite samples. We presented hypothesis testing, interval estimation, diagnostic techniques, model selection tools, and residual analysis. An empirical application to the two count time series is performed to show the proposed model's usefulness and applicability. The results of applying the BGAR model to a count series of leptospirosis cases and related hospitalizations revealed valuable insights into the disease dynamics and its public health impact. Our analysis showcased BGAR's ability to capture relationships between variables over time and provide accurate forecasts to guide public health policies and preventive interventions.
Furthermore, the introduced model's flexibility in accommodating various data types makes it versatile for time series analysis in diverse fields. Future research can explore extensions of BGAR to handle additional complexities and refine its applications in different domains. Overall, the BGAR model represents a valuable addition to the arsenal of time series models, offering simplicity, interpretability, and accuracy for analyzing and forecasting bivariate time series data.




\section*{Acknowledgements}
The authors are grateful for the partial financial support provided by CAPES, CNPq (grant 308578/2023-6), and FAPERGS (grant 23/02538-0),
Brazil.

\section*{Funding Statement}

This work was partially supported by the Brazilian agencies Coordenação de Aperfeiçoamento de Pessoal de Nível Superior (CAPES), Conselho Nacional de Desenvolvimento Científico e Tecnológico (CNPq) (grant 308578/2023-6), and Fundação de Amparo à Pesquisa do Estado de São Paulo (FAPESP) (grant 23/02538-0).
%
\section*{Competing interests} 
The authors have no competing interests to declare.

\bibliographystyle{johd}
\bibliography{bib_var_mod}

\newpage

\appendix

\section*{Appendices}

\section{Exponential family distributions in the canonical form}\label{sec_append_exp_fam_elements}

The probability density function of exponential family distributions in the canonical form can be expressed as 
\begin{align}\label{eq_fam_exp_canonical}
	f(y; \vartheta, \varphi) = \exp\left\{\frac{y \vartheta  - b_k(\vartheta)}{\varphi} + c(y,\varphi)\right\},
\end{align}
where $\vartheta$ and $\varphi$ are the canonical and dispersion parameters, respectively.
Many commonly used discrete and continuous probability distributions can be represented in the canonical exponential family, such as the Binomial, Negative Binomial, Poisson, Gamma, Normal, and Inverse Normal distributions. Table~\ref{tab_exp_fam_components} displays the main components of these distributions when represented in the exponential family.

\begin{center}
	\begin{table}[!htpb]
		\caption{Distributions in the canonical exponential family} 
		\label{tab_exp_fam_components}	
		\centering
		\begin{tabular}{ lllll }
			\hline
			Distribuição        & $\vartheta$               & $\varphi$   & $b(\vartheta)$                              & $V(\mu)$\\ 	\hline
			Binomial             &   $\log(\mu/(1-\mu))$            & $1$         &    $\log(1+\mathrm{e}^\vartheta)$                    & $\mu(1-\mu)$ \\
			Negative Binomial   & $\log (\mu/(\mu+\kappa))$ & $1$         & $\kappa\log(\kappa/(1-\mathrm{e}^\vartheta))$  & $\mu+\mu^2/\kappa$ \\
				Poisson             & $\log \mu$                & $1$         & $\exp\{\vartheta\}$                         & $\mu$\\
			Gamma               & $-1/\mu$                  & $\nu^{-1}$  & $-\log(-\vartheta)$                         & $\mu^2$\\
			Normal              & $\mu$                     & $\sigma^2$  & $\vartheta^2/2$                             & $1$\\
			Inverse Normal      & $-1/(2\mu^2)$             & $\phi^{-1}$ & $-\sqrt{-2\vartheta}$                       & $\mu^3$ \\
			\hline
		\end{tabular}
	\end{table}
\end{center}

In Tables~\ref{tab_clinha_cll} and~\ref{tab_cll_exp}, we present the non-null derivatives of the first and second-order of the $c(\cdot,\cdot)$ function of each distribution and the expected value of the second-order derivative of $c(\cdot,\cdot)$. All these quantities presented in Tables~\ref{tab_exp_fam_components},~\ref{tab_clinha_cll}, and~\ref{tab_cll_exp} are useful to obtain the elements of the conditional score vector and conditional Fisher information matrix defined in Section~\ref{sec_score_vector} and~\ref{sec_fisher}, respectively.

\begin{center}
	\begin{table}[!htpb]
		\caption{Derivatives of function $c(\cdot,\cdot)$} 
		\label{tab_clinha_cll}	
		\centering
		\small
		\begin{tabular}{ l|l|l }
			\hline
			Distribuição &$c^\prime(y,\varphi)$ & $c^{\prime\prime}(y,\varphi)$ \\
			\hline
			Gamma & $\varphi^{-2}[\log\varphi+\psi(\varphi^{-1})-\log y-1]$ &
			$2\varphi^{-3}[\log y - \log \varphi - \psi(\varphi^{-1})+1]+\varphi^{-3}-\varphi^{-4}\psi^{(1)}(\varphi^{-1})$\\
			Normal &  $y^2/(2\varphi^2)-1/(2\varphi)$  &  $-y^2/\varphi^3-1/(\varphi ^2)$\\
			Inverse Normal &   $1/(2\,y\,\varphi^2)-1/(2\varphi)$  &  $1/(2\varphi^2)-1/(y\,\varphi^3)$ \\
			\hline
		\end{tabular}
	\end{table}
\end{center}

\begin{center}
	\begin{table}[!htpb]
		\caption{Expected value of $c^{\prime \prime}(\cdot,\cdot)$		} 
		\label{tab_cll_exp}	
		\centering
		\begin{tabular}{ l|l }
			\hline
			Distribuição &$\mathrm{I\!E}(c^{\prime\prime}(y,\varphi))$ \\
			\hline
			Gamma &$2\varphi^{-3}[1  +\log\mu  ]+\varphi^{-3}-\varphi^{-4}\psi^{(1)}(\varphi^{-1})$\\
			Normal &  $1/(2\varphi^2)-1/\varphi^2-\mu^2/\varphi^3$  \\
			Inverse Normal &   $1/(2\varphi^2)-1/(\mu\varphi ^3)-1/\varphi^2$ \\
			\hline
		\end{tabular}
	\end{table}
\end{center}

In the following sections, we present the Binomial, Negative Binomial, Poisson, Gamma, Normal, and Inverse Normal distributions in their canonical exponential family and compute all the components summarized in Tables~\ref{tab_exp_fam_components},~\ref{tab_clinha_cll}, and~\ref{tab_cll_exp} for each distribution. 

%
\subsection{Binomial}

The binomial distribution is a discrete probability distribution that describes the number of successes in a fixed number of independent trials, each with the same success probability. 
Here, we consider the parameterization in terms of mean proportion of successes. For this, we use the following transformation $y^\star=y/n$, where $Y$ follows the binomial distribution of parameter $\mu$ and $n$ is the number of trials. The probability function of $y^\star$ is given by
\begin{align}\label{eq_pdf_bin}
	f(y^\star;\mu) = \binom{n}{ny^\star}  \mu ^{ny^\star} \left( 1 -\mu\right)^{n - ny^\star},
\end{align}
where
$ n $ is the number of trials,
 $ 0<\mu<1$ is the mean proportion of successes,
 $ 0<y^\star<1$ is the proportion of successes, and $ \binom{n}{ny^\star} = \frac{n!}{y^\star!(n - ny^\star)!} $ is the binomial coefficient.

For representing~\eqref{eq_pdf_bin} in the canonical exponential family,
we must assume that $n$ is known and the dispersion parameter is fixed at one; see~\cite{nelder_Wedderburn_1972_glm}. Hence, we can rewrite \eqref{eq_pdf_bin} as
\begin{align}\label{eq_pdf_bin}
	f(y^\star;\mu) = 
	\mathrm{exp}
	\left\{
	n\left[y^\star\log\left(\frac{\mu}{1-\mu}\right)+ \log	\left( 1 -\mu\right)\right]+
	\log\binom{n}{ny^\star} 
	\right\},
\end{align}
where 
$n$ is a known constant,
$\varphi = 1$,
$\vartheta=\log(\mu/(1-\mu))$,
$b(\vartheta)=-\log(1-\mu)=\log(1+\mathrm{e}^\vartheta)$,
$c(y^\star,\varphi)=\log\binom{n}{ny^\star} $.
Therefore, 
$b_k^{\prime}(\vartheta)=\mathrm{e}^\vartheta/(1+\mathrm{e}^\vartheta)$
and
$b^{\prime\prime}(\vartheta)=\mathrm{e}^\vartheta/(1+\mathrm{e}^\vartheta)^2$
and
$c^{\prime}(y;\varphi)= c^{\prime\prime}(y;\varphi)= 0$. 
This implies that
$\mathrm{I\!E}(y)=\mu$,
$V(\mu) = \mu(1-\mu)$ and
$\mathrm{I\!E}(c_k^{\prime\prime}(y_{kt};\varphi_k))=0$.

\subsection{Negative Binomial}

The Negative Binomial is a discrete probability distribution often used to model overdispersed count data. Let $Y$ be a random variable following a Negative Binomial distribution. A commonly used parameterization for its probability function can be expressed as 
\begin{align}\label{eq_pdf_nb}
	f(y;\mu,\kappa)=\frac{\Gamma(\kappa+y)}{\Gamma(\kappa)y!}
	\left(
	\frac{\mu}{\mu+\kappa}
	\right)^y
		\left(
	\frac{\kappa}{\mu+\kappa}
	\right)^\kappa, \qquad y=0,1,2,\ldots,
\end{align}
where $\mu$ is the mean of $Y$ and $\kappa$ a precision parameter.

For representing~\eqref{eq_pdf_nb} in the canonical exponential family, we must assume that $\kappa$ is known. From this assumption, we can write \eqref{eq_pdf_nb} in the form of \eqref{eq_fam_exp_canonical} as
\begin{align*}
	f(y;\mu,\kappa)=\mathrm{exp}
	\left\{
	y\log 
	\left(
	\frac{\mu}{\mu+\kappa}
	\right)
	-\kappa\log(\mu+\kappa)+
	\kappa\log(\kappa)+\log\left(
	\frac{\Gamma(\kappa+y)}{\Gamma(\kappa)y!}
	\right)
	\right\}.
\end{align*}
Then, 
$\vartheta = \log 
\left(
\mu/(\mu+\kappa)
\right)$,
$\varphi = 1$,
$
b(\vartheta) = \kappa\log(\mu+\kappa)=
\kappa\left[\log(\kappa)-\log(1-\mathrm{e}^\vartheta)\right]
$,
and
$
c(y;\kappa)=c(y;\varphi)=	\kappa\log(\kappa)+\log\left(
\Gamma(\kappa+y)/\Gamma(\kappa)y!
\right)
$. 
The first and secord-order derivatives of $b(\cdot)$ are equal to 
$
b^\prime(\vartheta)= \kappa\mathrm{e}^\vartheta/(1-\mathrm{e}^\vartheta)
$
and
$
b^{\prime\prime}(\vartheta)= \kappa\mathrm{e}^\vartheta/(1-\mathrm{e}^\vartheta)^2
$,
respectively. 
From of $b^{\prime\prime}(\vartheta)$, we find that the variance function is $V(\mu)= \mu+\mu^2/\kappa$. Note that when $\kappa\rightarrow \infty$, $V(\mu)=\mu$ is the variance function of the Poisson distribution. Then, the Negative Binomial has a limiting Poisson distribution for $\kappa\rightarrow \infty$. On the other hand, since $\kappa$ is known, we have that 
$c^\prime(y;\kappa)=c^{\prime\prime}(y;\kappa)=0$.
Therefore, $\mathrm{I\!E}(c^{\prime\prime}(y;\kappa))=0$.

From these quantities, we can derive the expression for the score vector component in~\eqref{EQ_dell_dvarphik}, with the term $a_{kt}$ being expressed as 
\begin{align*}
	a_{kt}=-y_{kt}\log 
	\left(
	\frac{\mu_{kt}}{\mu_{kt}+\kappa}
	\right)
	+
	\kappa\log(\mu_{kt}+\kappa).
\end{align*}
Moreover, we also can obtain the component $\bm{K_{\varphi\varphi}}$ of the Fisher information matrix, computing the value of $d_{kt}$ in~\eqref{EQ_expect_dell_dvarphi2} as
\begin{align*}
	d_{kt}=2
	\left[
	\mu_{kt}\log 
	\left(
	\frac{\mu_{kt}}{\mu_{kt}+\kappa}
	\right)
	-
	\kappa\log(\mu_{kt}+\kappa)
	\right],
\end{align*}
since $\mathrm{I\!E}(c_k^{\prime\prime}(y_{kt};\varphi_k))=0$.

\subsection{Poisson}

The Poisson is a discrete probability distribution. Typically, it is used to model the occurrence of a certain number of events within a fixed time or space interval. It applies when these events happen independently at a constant, known average rate, regardless of the time since the last event. If $Y$ has a Poisson distribution, its probability function is given by
\begin{align}\label{eq_pdf_poisson}
	f(y;\mu)=\frac{\mu^y\mathrm{e}^{-\mu}}{y!}, 
\end{align}
where $y=0,1,2,\ldots$ is the number of occurrences and $\mu$ is the mean and variance of $Y$. That is, $\mathrm{I\!E}(Y)=\mathrm{var}(Y)=\mu$.

Writing~\eqref{eq_pdf_poisson} in canonical exponential family~\eqref{eq_fam_exp_canonical}, we obtain
\begin{align*}
	f(y;\mu) = \mathrm{exp}\{y\log (\mu)-\mu-\log(y!)\},
\end{align*}
where
$\vartheta = \log(\mu)$,
$\varphi=1$,
$b(\vartheta) = \mu=\mathrm{e}^\vartheta$,
and
$c(y;\varphi) = -\log(y!)$.
Hence, 
$b^{\prime}(\vartheta)=b^{\prime\prime}(\vartheta)=\mathrm{e}^\vartheta$
and
$c^{\prime}(y;\varphi)= c^{\prime\prime}(y;\varphi)= 0$. 
This implies that
$\mathrm{I\!E}(Y)=V(\mu) = \mu$ and
$\mathrm{I\!E}(c_k^{\prime\prime}(y_{kt};\varphi_k))=0$.

In this case, the term $a_{kt}$ in~\eqref{EQ_dell_dvarphik} is given by
\begin{align*}
	a_{kt} = \mu_{kt}-y_{kt}\log(\mu_{kt}).
\end{align*}
Similarly, we can obtain $d_{kt}$ of~\eqref{EQ_expect_dell_dvarphi2}
to
compute the component $\bm{K}_{\varphi\varphi}$ of the Fisher information matrix.
For this distribution, we have that
\begin{align*}
	d_{kt}=	2(\mu_{kt}\log(\mu_{kt})-\mu_{kt}).
\end{align*}

\subsection{Gamma}

The gamma is a continuous probability distribution often used in statistics to model the time it takes for events to occur. It is versatile and has two shape parameters, which allow it to take on various forms. 
The probability density function of $Y$ distributed as gamma in the generalized linear model's context is conveniently written in the form~\citep{mccullagh1989generalized}
\begin{equation}\label{EQ_pdf_gamma}
	f(y;\mu,\nu)=\frac{1}{y\,\Gamma(\nu)}
	\left(
	\frac{\nu\,y}{\mu}
	\right)^\nu
	\mathrm{exp}
	\left(
	-\frac{\nu\,y}{\mu}		
	\right), \qquad y>0,\,\,\nu>0,\,\,\mu>0.
\end{equation}
For brevity, we denote $Y\sim G(\mu,\nu)$.

The density~\eqref{EQ_pdf_gamma} can be written in the exponential family form given in~\eqref{eq_fam_exp_canonical} as
\begin{align*}
	f(y;\mu,\nu)=
	\mathrm{exp}
	\left\{
	\nu 
	\left[
	-\frac{y}{\mu}-\log \mu
	\right]
	+(\nu-1)\log y+\nu\log\nu-\log\Gamma(\nu)
	\right\}.
\end{align*}
Thus, in this case $\vartheta=-1/\mu$,
$\varphi = 1/\nu$,
 $b(\vartheta)=\log\mu=-\log(-\vartheta)$, and $c(y,\varphi)=(\nu-1)\log y+\nu\log\nu-\log\Gamma(\nu)=
 (\varphi^{-1}-1)\log y+\varphi^{-1}\log\varphi^{-1}-\log\Gamma(\varphi^{-1})
 $.

Computing the order second and first derivatives of $b(\vartheta)$, we have 
	$b^\prime(\vartheta)=-1/\vartheta=\mu$
	and
		$b^{\prime\prime}(\vartheta)=1/\vartheta^2=\mu^2$.
Hence, $\mathrm{I\!E}(Y)=\mu$, $V(\mu)=\mu^2$, and $\mathrm{var}(Y)=\varphi\mu^2$.

Since 
$c(y,\varphi)=
(\varphi^{-1}-1)\log y+\varphi^{-1}\log\varphi^{-1}-\log\Gamma(\varphi^{-1})
$, it follows that
$c^{\prime}(y,\varphi)=\varphi^{-2}[\log\varphi+\psi(\varphi^{-1})-\log y-1]$ 
and
$c^{\prime\prime}(y,\varphi)=
2\varphi^{-3}[\log y - \log \varphi - \psi(\varphi^{-1})+1]+\varphi^{-3}-\varphi^{-4}\psi^{(1)}(\varphi^{-1})$.
From these quantities, we obtain the expression of $a_{kt}$ of the score vector for $\varphi_k$ in~\eqref{EQ_dell_dvarphik} as

\begin{align*}
		a_{kt} =
	\left\{
	\varphi_k^{-2}
	\left[-\frac{1}{\mu_{kt}}y_{kt} - \log\mu
	\right] + 
	\varphi_{k}^{-2}[\log\varphi_{k}+\psi(\varphi_k^{-1})-\log y_{kt}-1]
	\right\}.
\end{align*}

Similarly, we need compute the value of $d_{kt}$ in~\eqref{EQ_expect_dell_dvarphi2} to obtain the component $\bm{K}_{\varphi\varphi}$ of the conditional Fisher information matrix. 
Let $Z=\log(Y)$. 
Since $Y\sim$Gamma($\mu,\varphi$), it can be shown that $\mathrm{I\!E}(Z)=\psi(\varphi^{-1})-\log (\varphi^{-1}) +\log\mu$.
Then,
\begin{align*}
	d_{kt}=	
(1-
2\varphi^{-3})[
\log\mu+1
]+\varphi^{-3}-\varphi^{-4}\psi^{(1)}(\varphi^{-1}).
\end{align*}

\subsection{Normal}

The normal distribution, a widely used probability distribution in statistics, is characterized by its symmetric, bell-shaped curve. What makes it particularly accessible is its definition, which is as simple as two parameters: the mean ($\mu$) and the standard deviation ($\sigma$).
If $Y$ follows a normal distribution, then its probability density function is defined as 
\begin{align*}
	f(y;\mu,\sigma)=\frac{1}{\sqrt{2\pi\sigma^2}}
	\mathrm{exp}\left\{-\frac{1}{2}\left(\frac{y-\mu}{\sigma}\right)^2\right\}, \qquad -\infty<y<\infty,
\end{align*}
that can be rewritten according to~\eqref{eq_fam_exp_canonical} as
\begin{align*}
	f(y;\mu,\sigma)=
	\mathrm{exp}\left\{
	\frac{1}{\sigma^2}\left(y\mu-\frac{\mu^2}{2}\right)-
	\frac{y^2}{2\sigma^2}-\frac{1}{2}\log(2\pi\sigma^2)
	\right\}, 
\end{align*}
being
$ \vartheta=\mu$,
$\varphi = \sigma^2$,
$b(\vartheta) =  \mu^2/2=\vartheta^2/2$,
$c(y;\varphi)=-y^2/(2\varphi)-1/2\log(2\pi\varphi)$.
Deriving the $b(\cdot)$ and $c(\cdot;\cdot)$ functions, we obtain
$b^\prime(\vartheta) =  \vartheta$,
$b^{\prime\prime}(\vartheta) =  1$,
$c^{\prime}(y;\varphi)=y^2/(2\varphi^2)-1/(2\varphi)$,
$c^{\prime\prime}(y;\varphi)=-y^2/\varphi^3+1/(2\varphi^2)$.
Thus, the variance function is $V(\mu)=1$.

Computing $a_{kt}$, it follows that
\begin{align*}
	a_{kt}= -\varphi_k^{-2}\left(
	y_{kt}\mu_{kt}-\frac{\mu_{kt}^2}{2}+\frac{y_{kt}^2}{2\varphi_k^2}-\frac{1}{2\varphi_k}
	\right).
\end{align*} 
For obtaining $d_{kt}$, it is necessary to compute the expected value of $c^{\prime\prime}(y;\varphi)$. Notice that, $\mathrm{I\!E}(Y^2)=\mu^2+\varphi$, since $Y\sim \mathcal{N}(\mu, \varphi)$. Hence, 
\begin{align*}
	\mathrm{I\!E}\left(c^{\prime\prime}(Y;\varphi)\right)=
	\frac{1}{2\varphi^2}-\frac{1}{\varphi^2}-\frac{\mu^2}{\varphi^3}.
\end{align*}
From these quantities, we compute the value of $d_{kt}$ as
\begin{align*}
	d_{kt}=2\varphi_k^{-3}\left(
\mu_{kt}^2-\frac{\mu_{kt}^2}{2}+	\frac{1}{2\varphi_k^2}-\frac{1}{\varphi_k^2}-\frac{\mu_{kt}^2}{\varphi_k^3}
	\right).
\end{align*}

\subsection{Inverse normal}

The inverse normal distribution is a two-parameter class of continuous probability distributions, useful for modeling data with support on positive real. Let $Y$ be a random variable following an inverse normal distribution. Then, its probability density function is given by
\begin{align}\label{eq_pdf_inverse_normal}
f(y;\mu,\phi)= \frac{\phi^{1/2}}{(2\pi y^3)^{1/2}}\mathrm{exp}
\left\{
-\frac{\phi(y-\mu)^2}{2\mu^2y}
\right\}, \qquad y>0,
\end{align}
where $\mu$ is the mean of $Y$ and $\phi>0$ is a shape parameter.

The density~\eqref{eq_pdf_inverse_normal}
can be expressed in canonical exponential family as
\begin{align*}
	f(y;\mu,\phi)=\mathrm{exp}
	\left\{
	\phi\left[
	y\left(
	-\frac{1}{2\mu^2}
	\right)-
	\left(
	-\frac{1}{\mu}
	\right)
	\right]
	-\frac{1}{2}
	\left[
	\log\left(
	\frac{2\pi y^3}{\phi}
	\right)+\frac{\phi}{y}
	\right]
	\right\},
\end{align*}
where
$\vartheta = -1/(2\mu^2)$,
$\varphi=\phi^{-1}$,
$b(\vartheta)=-1/\mu=-\sqrt{-2\vartheta}$,
and
$c(y;\phi)=	-1/2
\left[
\log\left(
2\pi y^3/\phi
\right)+\phi/y
\right]
$,
that is,
$
c(y;\varphi)=
-1/2
\left[
\log\left(
2\pi \varphi y^3
\right)+1/(\varphi y)
\right]
$. 
The first and second-order derivatives of the $b(\cdot)$ and $c(\cdot;\cdot)$ are
$b^\prime(\vartheta)=(-2\vartheta)^{-1/2}$,
$b^{\prime\prime}(\vartheta)=1/(-2\vartheta)^{3/2}$,
$c^\prime(y;\varphi)=1/(2y\varphi^2)-1/(2\varphi)$,
$c^{\prime\prime}(y;\varphi)=1/(2\varphi^2)-1/(y\varphi^3)$.
From the second-order derivative of $b(\cdot)$, we obtain the variance function $V(\mu)=\mu^3$. 

The term $a_{kt}$ is computed as
\begin{align*}
	a_{kt} = -\varphi_{k}^{-2}
	\left(
	-\frac{y_{kt}}{2\mu_{kt}^2}+\frac{1}{\mu_{kt}}+
	\frac{1}{2y_{kt}\varphi_k^2}-\frac{1}{2\varphi_k}
	\right).
\end{align*}
To obtain $d_{kt}$, we need to compute $\mathrm{I\!E}\left(c^{\prime\prime}(Y;\varphi)\right)$, which is given by
\begin{align*}
	\mathrm{I\!E}\left(c^{\prime\prime}(Y;\varphi)\right)=
	\frac{1}{2\varphi^2}-\frac{1}{\varphi^3\mu}-\frac{1}{\varphi^2},
\end{align*}
since it can be shown that $\mathrm{I\!E}(1/Y)=1/\mu+\varphi$. Therefore,
\begin{align*}
	d_{kt}=2\varphi_{k}^{-3}\left(
	-\frac{1}{2\mu_{kt}}+\frac{1}{\mu_{kt}}+\frac{1}{2\varphi_k^2}-\frac{1}{\varphi_k^3\mu_{kt}}-\frac{1}{\varphi_k^2}
	\right).
\end{align*}

\end{document}